\definecolor{linkblue}{named}{MidnightBlue}
\DeclareMathOperator{\interior}{int}
\newcommand{\R}{\mathbb{R}}
\newcommand{\xx}{\ensuremath{\protect{x}}}
\newcommand{\yy}{\ensuremath{\protect{y}}}
\newtheorem{thm}{Theorem}
\newtheorem{lem}{Lemma}
\crefname{lem}{Lemma}{Lemmata}
\newtheorem{cor}{Corollary}
\crefname{dm}{}{}
\crefname{dp}{}{}
\crefname{p}{}{}
\crefname{pp}{}{}
\crefname{bg}{}{}
\newcommand{\slcf}{straight-line crossing-free drawing}
\newcommand{\embedding}{crossing-free drawing}
\newcommand{\defin}[1]{\emph{\textcolor{Maroon}{#1}}}
\theoremstyle{definition}
\title{Free Sets in Planar Graphs: History and Applications\thanks{This research is partially funded by the grants from NSERC and University of Ottawa Research Chair.}}
\author{
  Vida Dujmović\thanks{Department of Computer Science and Electrical Engineering, University of Ottawa.}\and
  Pat Morin\thanks{School of Computer Science, Carleton University.}}
\date{}
\DeclareMathOperator{\fix}{fix}
\DeclareMathOperator{\move}{shift}
\newcommand{\N}{\mathbb{N}}
\begin{document}
\begin{titlepage}
\maketitle

\begin{abstract}
  A subset $S$ of vertices in a planar graph $G$ is a \defin{free set} if, for every set $P$ of $|S|$ points in the plane, there exists a \slcf\ of $G$ in which vertices of $S$ are mapped to distinct points in $P$.  In this survey, we review
  \begin{compactitem}
      \item several equivalent definitions of free sets, 
      \item and applications of free sets in graph drawing.  
  \end{compactitem}
  The survey concludes with a list of open problems in this still very active research area.
\end{abstract}
\end{titlepage}

\pagestyle{empty}
\tableofcontents
\newpage

\pagestyle{plain}
\section{Introduction}

In 2005, an online game called \defin{Planarity} became wildly popular\footnote{\url{https://en.wikipedia.org/wiki/Planarity}}.  In this single-player game, created by John Tantalo based on a concept by Mary Radcliffe, the player is presented with a straight-line drawing $\Gamma$ of a planar graph $G$.  Although the graph $G$ is planar, the drawing $\Gamma$ is not crossing-free; some pairs of edges cross each other, and an edge may even contain a vertex in its interior.  Since $\Gamma$ draws the edges of $G$ as straight-line segments, the drawing is completely determined by the vertex locations.  The job of the player is to move vertices in the drawing $\Gamma$ in order to obtain a crossing-free drawing of $G$. Since $G$ is planar, Fáry's Theorem guarantees that this is always possible.

The act of moving some vertices in $\Gamma$ to obtain a crossing-free drawing is referred to as \defin{untangling}. For a combinatorially inclined player, this naturally leads to the question:  What is the fewest number, $\move(\Gamma)$, of vertices that need to be moved in order to untangle $\Gamma$?\footnote{The Planarity game actually measures the amount of time the player takes to accomplish this task, not the number of vertices moved.} Equivalently, what is the maximum number, $\fix(\Gamma):=|V(G)|-\move(\Gamma)$ of vertices that can be kept fixed while untangling $G$?  For a planar graph $G$, let $\fix(G):=\min\{\fix(\Gamma):\text{$\Gamma$ is a straight-line drawing of $G$}\}$.
 For a family $\mathcal{F}$ of planar graphs, this defines a function $\fix_{\mathcal{F}}(n):=\min\{\fix(G):G\in\mathcal{F},\, |V(G)|=n\}$.\footnote{Technically $\fix_{\mathcal{F}}(n)$ is undefined if there are no $n$-vertex graphs in the family $\mathcal{F}$. We will ignore this, since all the graph families we consider are infinite and have at least one $n$-vertex member for each $n\in\N$.}

Determining the asymptotic growth of $\fix_{\mathcal{F}}(n)$ turns out to be a challenging problem, even for very simple classes $\mathcal{F}$.  In 1998,  Mamoru~Watanabe had already asked this problem for the class $\mathcal{C}$ of cycles and we now know that $\fix_{\mathcal{C}}(n)\in \Omega(n^{2/3})\cap O((n\log n)^{2/3})$ but neither the lower bound, due to \citet{pach.tardos:untangling}, nor the upper bound, due to \citet{cibulka:untangling}, is easy to prove.  For the class $\mathcal{G}$ of planar graphs, the problem has been studied since \citet{pach.tardos:untangling} asked for a polynomial lower bound over $20$ years ago. The best known bounds to date are $\fix_{\mathcal{G}}(n)\in \Omega(n^{1/4})\cap O(n^{0.4948})$ \cite{bose.dujmovic.ea:untangling,kang.pikhurko.ea:untangling,goaoc.kratochvil.ea:untangling,cano.toth.ea:upper}.

In this survey, we study various kinds of vertex subsets of a planar graph $G$: proper-good sets, collinear sets, free-collinear sets, and free sets.  Each of these definitions is, at first glance, more stringent than the one that precedes it. Indeed, it follows immediately from definitions that every free set is a free-collinear set, every free-collinear set is a collinear set, and every collinear set is a proper-good set. The definition of a proper-good set is the most relaxed, which makes it easy to find large proper-good sets, even by hand using a pencil and a paper that contains a \embedding\ of $G$.  At the other extreme, free sets satisfy a very strong property that says we can place the vertices of a free set on any given pointset and then find locations for the other vertices of $G$ that result in a non-crossing straight-line drawing of $G$. This makes free sets incredibly useful for many applications, including untangling.

A key result in this area, discovered over roughly 15 years, is that all of these definitions are equivalent: A vertex subset in a planar graph is proper-good if and only if it is collinear, if and only if it is free-collinear, if and only if it is free.  Meanwhile, proper-good sets, collinear sets, free-collinear sets, and free sets were being used to resolve problems in graph drawing and related areas. With the benefit of hindsight, we attempt to organize and present two decades of results on free sets and their applications in a way that makes them as easy to understand and as useful as possible.

The rest of this survey is structured as follows: In \cref{four_definitions} we present four definitions of free sets and explain why these four definitions are equivalent. In \cref{large_free_sets} we present upper and lower bounds on the size of free sets in planar graphs and various subclasses of planar graphs.  In \cref{applications} we describe applications of free sets to a variety of graph drawing problems. In \cref{one_bend_section}, we introduce a one-bend variant of free sets. \Cref{conclusion} concludes with a list of open problems and directions for further research.

\section{Four Definitions of Free Sets}
\label{four_definitions}

For definitions of standard graph theoretic terms and notations used in this survey (such a treewidth, $k$-trees, independent set, etc.) the reader is referred to the textbook by \citet{Diestel5}.  Throughout this survey, a \defin{graph} is undirected, has no self-loops, and has no parallel edges, unless specified otherwise.

A \defin{drawing} of a graph is a representation of the graph in which each vertex $v$ is represented by a distinct point in the plane and each edge $vw$ is represented by a simple open curve in the plane, the closure of which has $v$ and $w$ as endpoints.  When discussing a particular drawing, we do not distinguish between a vertex and the point that represents it, or an edge and the curve that represents it.  A drawing is a \defin{straight-line} drawing if each edge is an open line segment.   In a drawing, two edges may have a non-empty intersection, and an edge may contain a vertex that is not one of its endpoints.  A drawing is a \defin{\embedding} if this does not occur: any two distinct edges are disjoint and no vertex is contained in any edge.\footnote{What we call a \embedding\ is sometimes called a \defin{plane} drawing or an embedding (in the plane). What we call a \slcf\ is sometimes called a \defin{geometric embedding} or \defin{Fáry embedding}.} 

Recall that for any $n$-vertex planar graph $G_0$, on at least $3$ vertices, it is possible to add edges to $G_0$ to obtain a planar graph $G$ with $3(n-2)$ edges.  The resulting graph $G$ is a \defin{triangulation}: in any \embedding\ of this graph, each of the faces has exactly three edges on its boundary. A \defin{near-triangulation} is a biconnected embedded graph whose inner faces are all \defin{triangles}, i.e., bounded by three edges.  The \defin{dual} $G^*$ of an embedded graph $G$ is a graph whose vertices are the faces of $G$ that contains an edge between two vertices $f$ and $g$ if these two faces of $G$ have a common edge on their boundaries.  For a planar graph $G$ with outer face $f_0$, the \defin{weak dual} $G^+$ of $G$ is defined as $G^+:=G^*-f_0$, the graph obtained from the dual $G^*$ by removing the vertex corresponding to the outer face $f_0$ of $G$. A graph is \defin{outerplanar} if it has a \embedding\ with all its vertices on the outer face. A graph equipped with such a \embedding\ is an \defin{outerplane} graph. A \defin{chord} of a cycle $C$ is an edge between two vertices of $C$ that are not adjacent in $C$.

Under our definitions, a \defin{planar} graph is a graph that has a \embedding. \defin{Fáry's Theorem} \cite{wagner:bemerkungen,fary:on,stein:convex} asserts that a graph is planar if and only if it has a \slcf.  Today, we use Fáry's Theorem without thinking about it, but it is worthwhile spending a moment reflecting on what would happen if Fáry's Theorem were not true.  There would be two kinds of planar graphs, topological and straight-line, with the latter being a special case of the former.  Any result proven for straight-line planar graphs would need a separate proof for topological planar graphs (assuming it was also true for topological planar graphs).  Edge contractions, which obviously preserve topological planarity and are commonly used in inductive proofs for planar graphs, would be off-limits for proofs about straight-line planar graphs.  

Indeed, there are a number of questions in which the difference between (not necessarily crossing-free) drawings and straight-line drawings of graphs is profound.  A famous example of this is Conway's \defin{Thrackle Conjecture} \cite{woodall:thrackles}, which asserts that if $G$ has a drawing in which each pair of edges share exactly one point (possibly a common endpoint) then the number of edges of $G$ is no more than the number of vertices of $G$.  This conjecture has been open since the 1960s but the straight-line version of the question (the \defin{Linear Thrackle Problem}) has an easy solution, first observed by Erdős \cite{erdos:on}.

Let $\Gamma$ be a \embedding\ of a planar graph $G$.  A simple closed curve\footnote{A \defin{simple closed curve} $C$ is a continuous function $C:[0,1]\to\mathbb{R}^2$ such that $C(0)=C(1)$ and $C(a)\neq C(b)$ for all $0\le a< b<1$.} $C:[0,1]\to\R^2$ is \defin{good} with respect to $\Gamma$ if $C(0)$ is not contained in any edge or vertex of $\Gamma$. The curve $C$ is \defin{proper} with respect to $\Gamma$ if, for each edge $e$ of $\Gamma$, the intersection of $C$ and the closure of $e$ is either empty, consists of a single point (possibly an endpoint of $e$), or consists of the entire edge $e$.  In essence, a curve that is proper-good with respect to $\Gamma$ behaves the way a line would behave if $\Gamma$ were a straight-line drawing.

Let $S:=(v_1,\ldots,v_s)$ be an ordered subset (that is, a sequence of distinct elements) of vertices in a planar graph $G$.
\begin{enumerate}
    \item $S$ is a \defin{proper-good set} \cite{dalozzo.dujmovic.ea:drawing} if there exists a \embedding\ $\Gamma$ of $G$, a curve $C:[0,1]\to\R^2$ that is proper and good with respect to $\Gamma$, and $0 < x_1<\cdots< x_s<1$ such that $C(x_i)$ is the location of $v_i$ in $\Gamma$, for each $i\in\{1,\ldots,s\}$.  In other words, we encounter the vertices of $S$, in order, while traversing $C$.

    \item $S$ is a \defin{collinear set} \cite{DBLP:conf/wg/RavskyV11} if there exists a \slcf\ $\Gamma$ of $G$ and $x_1<\cdots<x_s$ such that $(x_i,0)$ is the location of $v_i$ in $\Gamma$, for each $i\in\{1,\ldots,s\}$.

    \item $S$ is a \defin{free-collinear set} \cite{DBLP:conf/wg/RavskyV11} if, for any $x_1<\cdots<x_s$, there exists a \slcf\ $\Gamma$ of $G$ such that $(x_i,0)$ is the location of $v_i$ in $\Gamma$, for each $i\in\{1,\ldots,s\}$.

    \item $S$ is a \defin{free set} if, for any $x_1<\cdots<x_s$ and \emph{any} $y_1,\ldots,y_s$, there exists a \slcf\ $\Gamma$ of $G$ such that $(x_i,y_i)$ is the location of $v_i$ in $\Gamma$, for each $i\in\{1,\ldots,s\}$.
\end{enumerate}

We note that our definitions deviate from much of the literature by treating $S$ as an ordered set. Later, we may say that some (unordered) vertex subset $S$ of $G$ is a proper-good, collinear, free-collinear, or free set.  In these cases, we mean that there is some permutation of $S$ that defines an ordered set that satisfies the relevant definition. When we want to emphasize the difference between these two, we will say that (an ordered set) $S$ is an \defin{ordered free set} or that (a set) $S$ is an \defin{unordered free set}.  This distinction becomes important when discussing two or more graphs with the same vertex set.  A set $S$ of vertices may be an unordered free set in two graphs $G_1$ and $G_2$, but there may be no permutation of $S$ that is an ordered free set in $G_1$ and in $G_2$.

A few properties of free sets are immediate from the definition, and we will use them throughout.  If $S$ is an ordered free set in a graph $G$, then:
\begin{compactitem}
  \item the reversal of $S$ is an ordered free set in $G$;
  \item $S$ is an ordered free set in any subgraph of $G$ that spans $S$; and
  \item any subsequence of $S$ is an ordered free set in $G$.
\end{compactitem}

Unordered collinear sets and unordered free-collinear sets were defined first by \citet{DBLP:conf/wg/RavskyV11} and also appear implicitly in the work of \citet{bose.dujmovic.ea:untangling} on untangling.\footnote{Although they don't use this terminology, \citet{bose.dujmovic.ea:untangling} establish that $\fix_{\mathcal{G}}(n)\in \Omega(n^{1/4})$ by showing that every planar graph has a free collinear set of size $\Omega(n^{1/2})$.} In fact, \citet{DBLP:conf/wg/RavskyV11} posed the equivalence between (unordered) collinear sets and (unordered) free-collinear sets as an open question, and conjectured a negative answer. \Cref{equivalence} below disproves that conjecture. Proper-good sets (as unordered sets) were introduced first by \citet{dalozzo.dujmovic.ea:drawing}.

Before continuing, we make a remark about pointsets whose \xx-coordinates are not all distinct, since the definition of (ordered) free set appears to disallow these. However, we claim that for any unordered free set $S$ in a planar graph $G$, and \emph{any} set $P$ of $|S|$ points in the plane, there exists a \slcf\ of $G$ in which each vertex in $S$ is mapped to some point in $P$. If all the points in $P$ have distinct \xx-coordinates then the claim follows immediately from the definition of ordered free set.  If this is not the case, then a slight rotation of $P$ gives a pointset $P'$ in which each point has a distinct \xx-coordinate.  Now $G$ has a \slcf\ $\Gamma'$ in which each vertex in $S$ maps to a point in $P'$.  Applying the inverse rotation to $\Gamma'$ gives a drawing $\Gamma$ in which each vertex in $S$ maps to a point in $P$.

Next we turn to the relationship between free sets and the untangling problem in the introduction.  In the untangling problem, we are given a straight line drawing $\Gamma$ of $G$ (with crossings) and the goal is to find a large set $F\subseteq V(G)$ and a \slcf\ $\Gamma'$ of $G$ such that $\Gamma'(v)=\Gamma(v)$ for each $v\in F$.  Note that it is not sufficient to choose $F$ to be a free set in $G$ for the following reason: The definition of a free set would only guarantee that we can find a \slcf\ $\Gamma'$ such that $\{\Gamma'(v):v\in F\}=\{\Gamma(v):v\in F\}$.  That is, $\Gamma'$ draws the vertices of $F$ on the same point set $P:=\{\Gamma(v):v\in F\}$ in both drawings, but may change the permutation that maps the vertices in $F$ to points in $P$.  The following lemma shows that free sets in $G$ are nevertheless useful in obtaining lower bounds on $\fix(G)$. The proof of this lemma also illustrates the usefulness of definitions based on ordered sets. Another example, discussed in \cref{applications}, is the problem of simultaneous embedding with mapping.

\begin{lem}\label{free_to_fix}
    Let $G$ be a planar graph that has a free set of size at least $k$.  Then $\fix(G) \ge \sqrt{k}$.
\end{lem}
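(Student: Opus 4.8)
The plan is to bound $\fix(\Gamma)$ from below for an \emph{arbitrary} straight-line drawing $\Gamma$ of $G$; since $\fix(G)$ is the minimum of $\fix(\Gamma)$ over all such $\Gamma$, this suffices. First I would replace the given free set by a subsequence of length exactly $k$ (subsequences of ordered free sets are again ordered free sets), and write it as $S=(v_1,\ldots,v_k)$. The guiding idea is that the defining property of a free set lets us \emph{reuse} the positions that free-set vertices already occupy in $\Gamma$: if some subset of $S$ can be held fixed at its $\Gamma$-coordinates while the free-set property still applies, then we obtain a crossing-free straight-line drawing in which exactly that subset is fixed, which is precisely an untangling of $\Gamma$ keeping those vertices in place. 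The subtlety is that the free-set property requires the placement to respect strictly increasing $x$-coordinates \emph{in the free order}, whereas the positions of $v_1,\ldots,v_k$ in $\Gamma$ need not be distinct in $x$, nor need their left-to-right order agree with the free order.

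To handle this I would first rotate $\Gamma$ by a generic small angle. A rotation is an affine map that preserves collinearity and crossings, so it sends straight-line crossing-free drawings to straight-line crossing-free drawings and leaves $\fix$ unchanged (any untangling of the rotated drawing rotates back to an untangling of $\Gamma$ fixing the same vertices). Since there are only finitely many vertices, almost every angle makes the $x$-coordinates of $v_1,\ldots,v_k$ pairwise distinct, so I may assume this. Let $a_i$ denote the $x$-coordinate of $v_i$ in the (rotated) drawing. Now I would invoke the Erd\H{o}s--Szekeres theorem on the sequence $a_1,\ldots,a_k$: any sequence of $k$ distinct reals contains a monotone subsequence of length at least $\lceil\sqrt{k}\,\rceil\ge\sqrt{k}$. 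This yields indices $i_1<\cdots<i_m$ in free order, with $m\ge\sqrt{k}$, along which the $a$-values are either strictly increasing or strictly decreasing.

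In the increasing case, $(v_{i_1},\ldots,v_{i_m})$ is an ordered free set whose $\Gamma$-positions already have strictly increasing $x$-coordinates in matching order; in the decreasing case I would instead use the reversal of $S$, which is also an ordered free set, so the same conclusion holds. Applying the free-set property with the prescribed $x$-coordinates $a_{i_1}<\cdots<a_{i_m}$ and the prescribed $y$-coordinates equal to the $\Gamma$-coordinates of these vertices produces a straight-line crossing-free drawing $\Gamma'$ in which each $v_{i_j}$ sits at its exact position in (rotated) $\Gamma$. Moving every other vertex to its $\Gamma'$-location while keeping $v_{i_1},\ldots,v_{i_m}$ fixed untangles $\Gamma$, so $\fix(\Gamma)\ge m\ge\sqrt{k}$, and the bound on $\fix(G)$ follows. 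I expect the one genuinely non-routine step to be exactly the Erd\H{o}s--Szekeres argument: it is what reconciles the purely combinatorial ordering of the free set with the geometric left-to-right ordering forced by the positions in $\Gamma$, and it is the reason the \emph{ordered} formulation of free sets is needed here rather than the unordered one.
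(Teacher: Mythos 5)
Your proposal is correct and follows essentially the same route as the paper: the paper rotates $\Gamma$ to make $x$-coordinates distinct, extracts a length-$\sqrt{k}$ subsequence of $S$ that is monotone in $x$ (via Dilworth's theorem on the product of the free order and the $x$-order, which is just a repackaging of your Erd\H{o}s--Szekeres step), handles the decreasing case by the same reversal trick, and then invokes the free-set definition to pin those vertices at their $\Gamma$-positions.
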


\begin{proof}
    Let $S$ be an ordered free set in $G$ of size $k$. Fix any straight-line drawing $\Gamma$ of $G$ and, without loss of generality (by a slight rotation, if necessary) assume that no two vertices of $\Gamma$ have the same \xx-coordinate.  Let $(S,\prec)$ be the partial order in which $v\prec w$ if and only if $v$ appears before $w$ in $S$ and the \xx-coordinate of $v$ in $\Gamma$ is less than the \xx-coordinate of $w$ in $\Gamma$.

    By Dilworth's Theorem, $(S,\prec)$ contains a chain of length at least $\sqrt{k}$ or $(S,\prec)$ contains an antichain of size at least $\sqrt{k}$.  First, suppose $(S,\prec)$ contains a chain $v_1\prec\cdots\prec v_\ell$ of length $\ell\ge \sqrt{k}$.  Then $(v_1,\ldots,v_\ell)$ is a free set since it is a subsequence of $S$.  For each $i\in\{1,\ldots,\ell\}$, let $(x_i,y_i)$ be the coordinates given to $v_i$ in $\Gamma$.  By the definition of free set, $G$ has a \slcf\ in which $v_i$ is mapped to $(x_i,y_i)$, for each $i\in\{1,\ldots,\ell\}$.  Therefore, $\fix(\Gamma)\ge \ell\ge \sqrt{k}$.

    To deal with the case where $(S,\prec)$ contains an antichain of size at least $\sqrt{k}$, we use the fact that the ordered set obtained by reversing the order of $S$ is also a free set. Let $(S,\prec')$ be the partial order on the elements of $S$ in which $v\prec' w$ if and only if $v$ appears \emph{after} $w$ in $S$ and the \xx-coordinate of $v$ in $\Gamma$ is less than the \xx-coordinate of $w$ in $\Gamma$.  The elements of any antichain in $(S,\prec)$ form a chain in $(S,\prec')$. Now we proceed as in the previous paragraph to show that $\fix(\Gamma)\ge\sqrt{k}$.
    Therefore $\fix(\Gamma)\ge\sqrt{k}$ for any straight-line drawing $\Gamma$ of $G$.  Therefore $\fix(G)\ge \sqrt{k}$.
\end{proof}

For most classes $\mathcal{X}$ of planar graphs, including trees, outerplanar graphs, and all planar graphs, the best known lower bounds (and for some classes asymptotically optimal bounds) for $\fix_\mathcal{X}(n)$ can be obtained by an application of \cref{free_to_fix}, along with a result on free sets in graphs in the class $\mathcal{X}$ \cite{goaoc.kratochvil.ea:untangling, dalozzo.dujmovic.ea:drawing, bose.dujmovic.ea:untangling, DBLP:conf/wg/RavskyV11}.  The sole exception is the class $\mathcal{C}$ of cycles, for which \citet{cibulka:untangling} uses repeated applications of the Erd\H{o}s-Szekeres Theorem to obtain a lower bound of $\Omega(n^{2/3})$.

As discussed in the introduction, the following theorem provides the main motivation for this survey.

\begin{thm}\label{equivalence}
    Let $G$ be a planar graph and let $S$ be an ordered subset of $V(G)$. Then the following statements are equivalent:
    \begin{compactenum}
        \item $S$ is an (ordered) proper-good set.
        \item $S$ is an (ordered) collinear set.
        \item $S$ is an (ordered) free-collinear set.
        \item $S$ is an (ordered) free set.
    \end{compactenum}
\end{thm}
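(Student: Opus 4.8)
The plan is to prove the chain of implications $4 \Rightarrow 3 \Rightarrow 2 \Rightarrow 1 \Rightarrow 4$, where the first three arrows are essentially immediate from the definitions and the last arrow, $1 \Rightarrow 4$, is the substantive content.

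First I would dispense with the easy implications. The implication $4 \Rightarrow 3$ is trivial: a free set lets us fix \emph{any} $x$-coordinates and \emph{any} $y$-coordinates, so in particular we may take all $y_i = 0$, which is exactly the free-collinear condition. The implication $3 \Rightarrow 2$ is equally trivial: a free-collinear set produces, for \emph{some} choice of $x_1 < \cdots < x_s$ (indeed for all such choices), a \slcf\ placing $v_i$ at $(x_i, 0)$, and this witnesses that $S$ is collinear. For $2 \Rightarrow 1$, I would observe that a \slcf\ $\Gamma$ is in particular a \embedding, and the line $y = 0$ through the collinear points $v_1, \ldots, v_s$ is a curve that is proper-good with respect to $\Gamma$: because $\Gamma$ draws edges as straight segments, a line meets the closure of each edge in either the empty set, a single point, or (if the edge lies along the line) the whole edge, which is precisely the proper condition, and the good condition is met by choosing the parametrization so that $C(0)$ avoids the arrangement. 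Restricting this line to the relevant parameter interval gives the required curve through $v_1, \ldots, v_s$ in order.

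The hard part is $1 \Rightarrow 4$: upgrading the weak, purely topological proper-good condition to the strong geometric free condition. Here I would not expect a direct argument and would instead reach for the structural machinery that the equivalence theorem is built on. The natural route is to show that a proper-good curve $C$ in a \embedding\ $\Gamma$ can be used to carve $G$ into pieces whose drawings can be controlled independently. Concretely, I would argue that by a suitable refinement one may assume the proper-good curve is a simple arc meeting $G$ only in the vertices of $S$ (and crossing edges transversally at isolated points), and then the curve partitions the plane locally so that, after introducing dummy vertices at the crossing points and triangulating, the vertices of $S$ become a collinear-realizable set in a way that is \emph{stable} under perturbation of their positions. The engine that turns collinear realizability into the free (arbitrary-position) guarantee is a result in the spirit of the ``stretchability'' or ``any-pointset'' theorems for planar graphs: once the combinatorics of how $S$ sits relative to a separating curve is fixed, one can prescribe arbitrary coordinates for the vertices of $S$ and complete the drawing face by face, using Fáry-type freedom inside each region bounded by the curve and the graph.

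I expect the single most delicate point to be handling the case in which the proper-good curve $C$ runs \emph{along} an edge $e$ of $\Gamma$ (the third alternative in the definition of proper), since then $C$ does not cross $e$ transversally and the local picture near $e$ is degenerate; this is exactly the subtlety that makes the proper-good definition strictly more permissive-looking than the others and is presumably why the equivalence resisted proof for so long. I would isolate this into a lemma showing such edge-hugging segments of $C$ can be perturbed off the edge while preserving the proper-good property and the order in which $S$ is met, reducing to the transversal case. The remaining bookkeeping --- choosing the dummy vertices, triangulating the pieces, and invoking a free-placement result on each piece while maintaining consistency across the shared boundary curve --- I would treat as routine once the degenerate alignments are removed, taking care throughout that the \emph{ordering} of $S$ along $C$ is respected so that the prescribed increasing $x$-coordinates can be matched to the vertices in the correct order.
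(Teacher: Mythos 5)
Your treatment of the three easy implications ($4 \Rightarrow 3 \Rightarrow 2 \Rightarrow 1$) is correct and matches the paper. The gap is in $1 \Rightarrow 4$, and it is not a bookkeeping gap: the ``engine'' you invoke does not exist. You propose that once the combinatorics of $S$ relative to the curve is fixed, one can prescribe arbitrary coordinates for the vertices of $S$ and complete the drawing face by face, using ``Fáry-type freedom'' inside each region. Fáry's theorem gives no such freedom: it produces \emph{some} straight-line drawing, with no control over where any vertex lands, and a straight-line crossing-free completion of a region whose boundary vertices have been pinned at prescribed points need not exist at all (already a pinned non-convex boundary can make it impossible). The statement you are black-boxing --- that a prescribed placement of $S$ extends to a straight-line crossing-free drawing of the rest of $G$ --- is precisely the theorem being proven; citing it as a ``stretchability or any-pointset theorem'' makes the argument circular.

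You also misplace the delicacy: the edge-hugging case of the proper condition is a minor perturbation issue, not the reason the equivalence resisted proof. The paper's actual proof splits the hard direction into two substantial, independent pieces spread over several works. First, proper-good $\Rightarrow$ collinear (Da Lozzo et al.): dummy vertices are placed where the curve crosses edges, Tutte's convex embedding theorem is applied twice (once to the subgraph inside the curve, once outside) to draw both parts on opposite sides of the x-axis with the curve vertices on the axis, and then a straightening theorem of Pach and Tóth removes the bends at the dummy vertices. Crucially, that straightening step changes x-coordinates, which is exactly why this argument yields only a collinear set and not a free one --- the subtlety your ``stable under perturbation'' claim elides. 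Second, collinear $\Rightarrow$ free-collinear (Dujmović, Frati et al.): this requires reducing $G$ to an ``A-graph'' by a sequence of edge contractions and flips, encoding the prescribed positions as a linear system whose unknowns are edge slopes, and proving that system has a unique solution by a homotopy argument that continuously deforms it from a system known to be nonsingular. (Free-collinear $\Rightarrow$ free is then the easy perturb-and-scale step.) None of this machinery is optional: whether every collinear set is free was an open question conjectured to have a negative answer, so a proof of $1 \Rightarrow 4$ that treats this step as routine bookkeeping cannot be repaired locally.
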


Note that, unlike the other three definitions, the definition of proper-good set is purely combinatorial/topological; it has no mention of straight-lines or vertex coordinates.   In this respect, \cref{equivalence} plays a role similar to Fáry's Theorem.  Using the proper-good set definition allows purely combinatorial methods to be used in the search for free sets.  For any proper-good set, there is a short easily-verifiable certificate (a \embedding\ of $G$ and a description of the curve $C$) that $S$ is indeed a proper-good set.  In contrast, in order to show directly that $S$ is a free-collinear set (or a free set) involves proving a fact that must hold for all choices of $x_1,\ldots,x_k$ (or $(x_1,y_1),\ldots,(x_k,y_k)$, respectively).  Working with this definition directly would involve (at the very least) a significant amount of linear algebra.

The proof of \cref{equivalence} is spread across a few different works \cite{dalozzo.dujmovic.ea:drawing,dujmovic.frati.ea:every,bose.dujmovic.ea:untangling}, which we describe below.  First, we remark that it is immediate from the definitions that every free set is a free-collinear set and every free-collinear set is a collinear set.  It is also easy to see that every collinear set is a proper-good set by taking a \slcf\ $\Gamma$ of $G$ with the vertices of the (ordered) collinear set $S$ on the \xx-axis in increasing order of \xx-coordinate.  Let $s$ be a line segment that is contained in the \xx-axis and that contains every intersection between $\Gamma$ and the \xx-axis in its interior.  Then $s$ is proper with respect to $\Gamma$. Finally, make $s$ into a closed curve by joining the endpoints of $s$ with a curve $C$ that completely avoids $\Gamma$.

We now address the other directions.

\paragraph{Every proper-good set is a collinear set:}
The proof that every proper-good set is a collinear set is due to \citet{dalozzo.dujmovic.ea:drawing} and is illustrated in \cref{one_to_two}. For each edge of $G$ that crosses $C$, the proof introduces a dummy vertex (white vertices in \cref{one_to_two}(a)) that splits the edge into two edges with a common vertex on $C$.  The resulting graph $G'$ then has three types of vertices: a set $X$ of inner vertices (contained in the interior of $C$), a set $Y$ of boundary vertices (contained in $C$) and a set $Z$ of outer vertices (the vertices in the exterior of $C$). These steps are depicted in \cref{one_to_two}(a).

\begin{figure}
    \centering
    \begin{tabular}{c@{}c@{}c}
    \includegraphics[page=1]{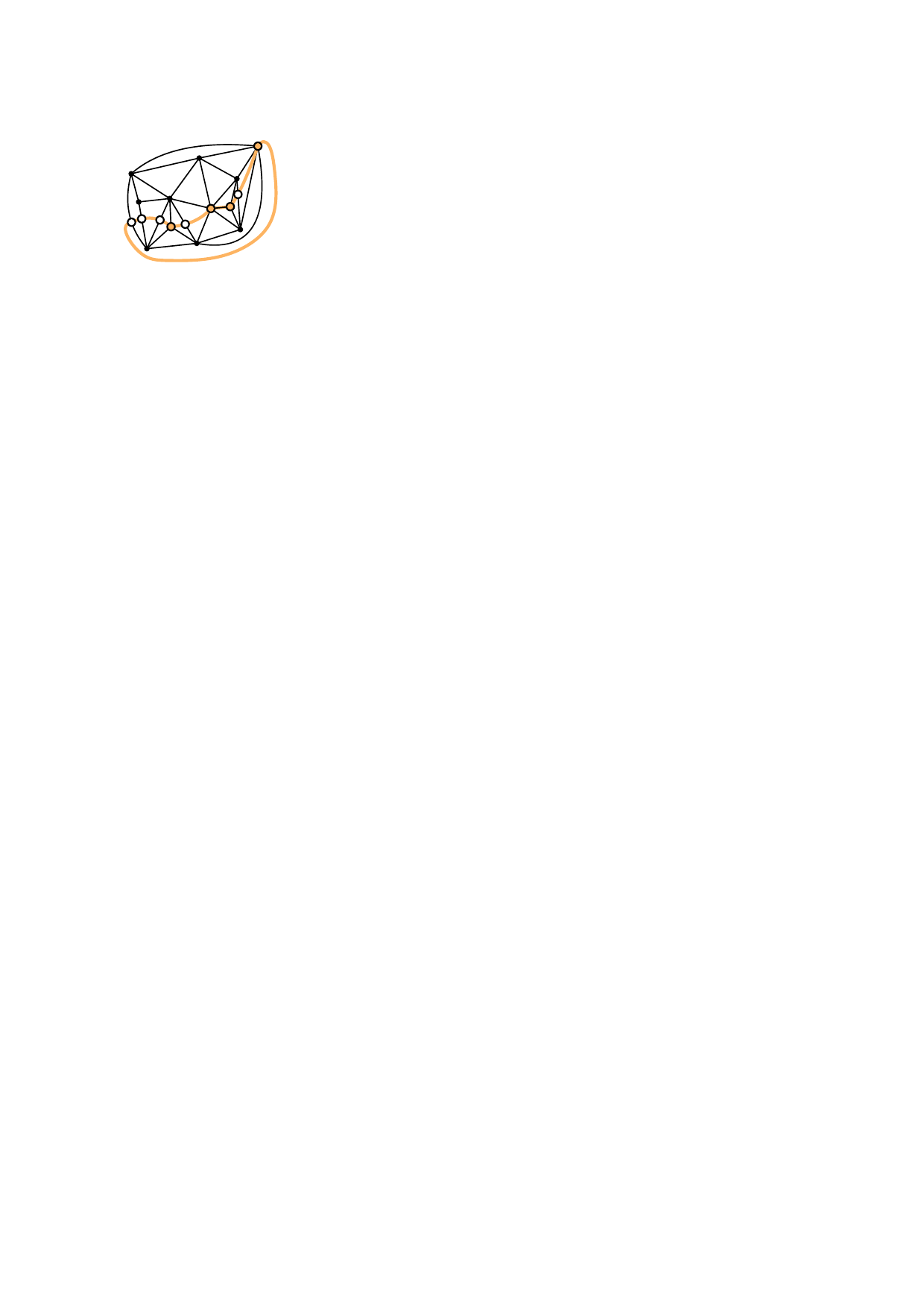} &
    \includegraphics[page=2]{figs/one_to_two} &
    \includegraphics[page=3]{figs/one_to_two} \\
    (a) & (b) & (c)
    \end{tabular}
    \caption{Showing that every proper-good set is a collinear set. Dummy vertices are hollow circles, and elements of $S$ are filled circles.}
    \label{one_to_two}
\end{figure}

The proof then uses Tutte's Convex Embedding Theorem \cite{tutte:how} twice:  Once to find a \slcf\ of $G'[X\cup Y]$ with all vertices of $X$ below the \xx-axis and all vertices of $Y$ on the \xx-axis, and a second time to find a \slcf\ of $G'[Y\cup Z]$ with the vertices of $Z$ above the \xx-axis.  This gives a \slcf\ $\Gamma'$ of $G'$, as depicted in  \cref{one_to_two}(b).  More precisely, the proof applies a variant of Tutte's Convex Embedding Theorem to a supergraph of $G''$ of $G'[X\cup Y]$  (respectively, $G'[Y\cup Z]$) whose outer face consists of a cycle that contains all the vertices of $Y$ plus a newly-introduced vertex $v$.\footnote{Tutte's Convex Embedding Theorem, as it appears in \cite{tutte:how} requires that the vertices of the outer face be mapped to a \emph{strictly} convex polygon.  However, the proof holds in the current setting because (i)~the vertices of the outer face of $G''$ induce a cycle and (ii) for each vertex $v$ of $G'[X\cup Y]$ not on the outer face, $N_G(v)$ does not consist entirely of collinear vertices on the outer face.} (The extra vertices and edges of these supergraphs are grey in \cref{one_to_two}(b).)

The \slcf\ $\Gamma'$ is almost a \slcf\ of $G$ except that the edges that properly intersect $C$ are represented by two line segments, one below and one above the \xx-axis. The proof then applies a result of \citet{pach.toth:monotone} (Theorem 1.2 in \cite{pach.toth:monotone}) that shows the edges in this \embedding\ can be straightened without changing the \yy-coordinate of any vertex, as depicted in  \cref{one_to_two}(c).  In particular, in the resulting \slcf\ of $G$, the vertices of $G$ intersected by $C$ remain on the \xx-axis, as required.

A few notes on the proof are in order.  First, we recall that the proof of Tutte's Convex Embedding Theorem is mostly algebraic: it involves setting up a system of linear equations whose solution determines the coordinates of the vertices in the \slcf.  The proof that every collinear set is a free-collinear set, sketched below, has a similar algebraic part.  Second, we note that the preceding proof \emph{almost} proves that every proper-good set is a \emph{free}-collinear set.  Tutte's theorem allows us to choose the \xx-coordinates of the vertices in $Y$, including the vertices in $S$.  However, the straightening step, which converts $\Gamma'$ into $\Gamma$ by straightening the edges that cross the \xx-axis, changes the \xx-coordinates of the vertices.

\paragraph{Every collinear set is a free-collinear set:}  The proof that every collinear set is a free-collinear set is due to \citet{dujmovic.frati.ea:every}.  Prior to this, \citet{dalozzo.dujmovic.ea:drawing} had established the equivalence of collinear and free-collinear sets for the class of planar $3$-trees and asked again if this equivalence was true in general.  In subsequent years, this repeatedly led to the question (originally posed by \citet{DBLP:conf/wg/RavskyV11})  ``is every collinear set free?''.
The proof of \citet{dujmovic.frati.ea:every} has two main parts, which we outline below.

If $S:=(v_1,\ldots,v_s)$ is a collinear set in $G$ then it is also a proper-good set in $G$. By adding one vertex in the outer face of $G$ and adding (not-necessarily straight-line) edges, we may assume that $G$ is a triangulation (and $S$ is still a proper-good set). As discussed in the previous section, the fact that $S$ is a proper-good set implies that $G$ is a collinear set, so $G$ has a \slcf\ $\Gamma$ in which the vertices of $S$ appear, in order, on the \xx-axis.  The goal is to show that, for any $x_1<\cdots<x_{s}$, the vertex $v_i$ can be moved to $(x_i,0)$, for each $i\in\{1,\ldots,s\}$, such that the resulting drawing is a \slcf.  In the following, when we say that an edge or a face \defin{properly intersects} the \xx-axis, we mean that the edge or the face contains points that are strictly above and points that are strictly below the \xx-axis.

The first part of the proof is combinatorial.  In this part, a sequence of modifications is done to $G$ with the end goal of obtaining a triangulation $G'$ that has a (no longer straight-line) \embedding\ $\Gamma'$ with the following properties (see \cref{a_graph}(c)):

\begin{compactenum}[(P1)]
  \item Every vertex in $S$ is a vertex of $G'$ and is mapped to the same location (on the \xx-axis) in both $\Gamma$ and $\Gamma'$.\label[pp]{s_preserving}

  \item The closure of each edge of $\Gamma'$ intersects the \xx-axis in at most one point. \label[pp]{x_proper}

  \item For each vertex $v\in S$ there are exactly two faces of $G'$ incident to $v$ that properly intersect the \xx-axis.

  \item If an edge $vw$ of $G'$ has no endpoint in $S$ then $vw$ properly crosses the \xx-axis or $vw$ is on the boundary of two faces $vwa$ and $vwb$ and the closure of each of these faces intersects the \xx-axis.
\end{compactenum}

Note that \cref{s_preserving} and \cref{x_proper} imply that no edge of $G'$ has both endpoints in $S$.  For reasons discussed below, we may assume that $G$ has no separating triangles and that no separating triangles are created during these operations.

To obtain $G'$, we apply three types of operations, two of which are illustrated in \cref{a_graph}.
\begin{inparaenum}[(1)]
    \item If some edge with no endpoint in $S$ is incident on two faces, neither of which properly intersect the \xx-axis, then this edge is contracted (the orange edge in \cref{a_graph}(a) meets those conditions and its contraction results in an embedded graph depicted in \cref{a_graph}(b)). This operation eliminates these two faces.
    \item If some edge with no endpoints in $S$ is incident on one face whose closure is disjoint from the \xx-axis, whose other incident face properly intersects the \xx-axis, and meets some other technical conditions,\footnote{These technical conditions ensure that, after drawing $G'$ the union of the two faces incident on the flipped edge is a convex quadrilateral.} then this edge is flipped (the lilac edge in \cref{a_graph}(b) meets the necessary conditions and its flipping results in an embedded graph depicted in \cref{a_graph}(c)). This edge flip replaces the two faces incident to the original edge with two faces that each properly intersect the \xx-axis.\footnote{The assumption that $G$ has no separating triangles ensures that this flipping operation does not introduce parallel edges, i.e., each edge of $G$ is flippable.}
    \item Finally, any edge with both endpoints in $S$ is flipped.  After this flip, the flipped edge and the two faces incident to it properly intersect the \xx-axis.
    \end{inparaenum}
 These operations (contractions and flips) are done exhaustively, one by one, in any order. The operations maintain that $S$ is a proper-good set in $G'$. As discussed above, either of these operations may introduce a separating triangle in $G'$. How this is handled is explained below.
Once these operations are done exhaustively, removing the edges of $G'$ both of whose endpoints are either strictly above or strictly below  \xx-axis (see the green edges in \cref{a_graph}(c)) yields a structure, called an \defin{A-graph} (see \cref{a_graph}(d)), in which every face is either a quadrilateral whose four edges each intersect the \xx-axis or have an endpoint on the \xx-axis or a triangle with vertices above, on, and below the \xx-axis.  The main task now is to find a \slcf\ of this $A$-graph so that the vertices of $S$ are moved to the specified locations on the \xx-axis. Once we achieve that, the edge-removal, flipping, and contracting operations are easily undone without changing the locations of vertices in $S$, resulting in \slcf\ of $G$ with the vertices of $S$ at their specified locations on the \xx-axis.

\begin{figure}
    \centering
    \begin{tabular}{cc}
        \includegraphics[page=2]{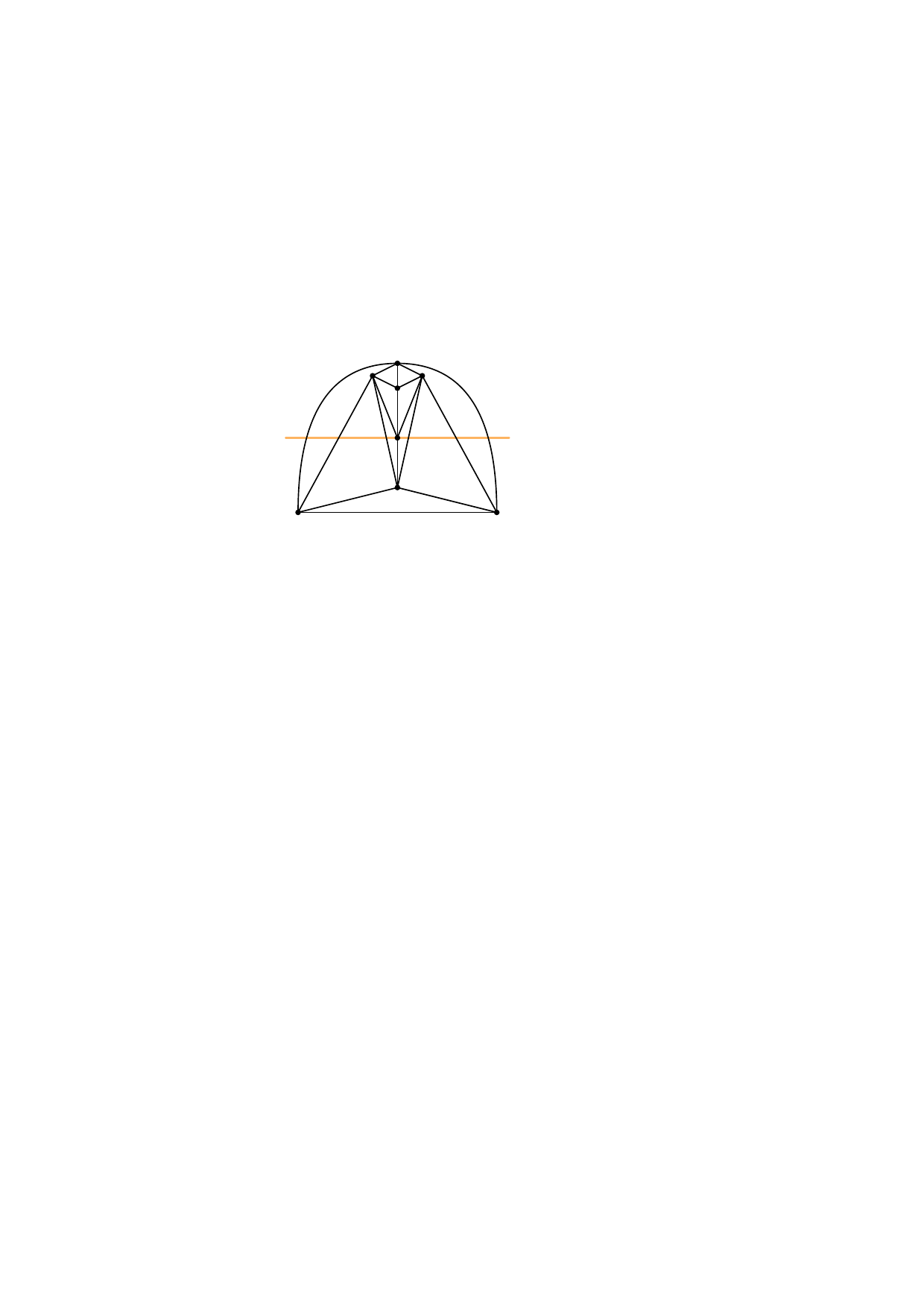} &
        \includegraphics[page=4]{figs/a_graph2} \\
        (a) & (b) \\
        \includegraphics[page=6]{figs/a_graph2} &
        \includegraphics[page=7]{figs/a_graph2} \\
        (c) & (d)
    \end{tabular}
    \caption{Modifying a triangulation to obtain an $A$-graph.}
    \label{a_graph}
\end{figure}

Finding a \slcf\ of this $A$-graph so that the vertices of $S$ are placed at the specified locations on the \xx-axis is the goal of the second part of the proof, which involves a combination of linear algebra and graph theory. The proof establishes a very strong statement about the A-graph $G'$.  The proof shows that it is possible to obtain a \slcf\ of $G'$ in which the intersection point of the closure of each edge with the \xx-axis is specified, subject to the ordering constraints imposed by the \embedding\ of $G'$. Note that this implies that $S$ is a free-collinear set in $G'$ since each vertex in $S$ is incident to an edge of $G'$ (recall that no edge incident to a vertex of $S$ is ever contracted and no edge incident to a vertex of $S$ is flipped unless both of its endpoints are on the \xx-axis).  Each vertex $v_i \in S$ has at least three incident edges in $G'$.  By specifying that the closure of each of these edges must intersect the \xx-axis at $(x_i,0)$, we ensure that $v_i$ is mapped to $(x_i,0)$.

In addition to specifying the location where each edge crosses the \xx-axis, the locations of the three vertices $x,y,z$ on the convex hull of this \embedding\ 
 (the vertices marked with \textcolor{red}{\rule{1.5ex}{1.5ex}} in \cref{a_graph}(d)) can be chosen, provided that the choice agrees with the choice of intersection points chosen for the edges induced by $x,y,z$ (the intersections marked with \textcolor{red}{$\times$} in \cref{a_graph}(d)). This latter requirement justifies the assumption that the graph does not contain separating triangles: The subgraph in the closure of the exterior of a separating triangle $xyz$ can be drawn inductively. This determines the locations of $x$, $y$, and $z$, which are then specified when inductively drawing the graph in the closure of the interior of the separating triangle.

The proof works by assigning slopes to the edges in $G'$.  To avoid vertical edges, it is helpful to rotate the coordinate system by $90$ degrees so that the roles of the \xx- and \yy-axes swap.  To prove this stronger claim about A-graphs, the proof sets up a system $M$ of linear equations whose variables are the slopes of the edges of $G'$ and whose coefficients are determined by the (given) intersection location of each edge with the \yy-axis.  Most of the equations in $M$ correspond to the fact that three or more edges incident to a common vertex $v$ must intersect in a single point (the location of $v$).\footnote{Some additional constraints, called proportionality constraints, are placed on the slopes of edges incident to vertices on the \yy-axis.}

The hardest step in this proof is to show that the linear system $M$ has a (unique) solution.  Here, the proof leverages the fact that $S$ is a proper-good set in $G'$ and, therefore, a collinear set in $G'$.  Therefore, $G'$ has a \slcf\ $\Gamma_0$ in which the vertices of $S$ appear, in order, on the \yy-axis.  From this \slcf\ $\Gamma_0$, one can read off a system $M_0$ of linear equations that has the same structure as $M$ but with different coefficients (that are determined by the \yy-coordinates of vertices in $S$ in $\Gamma_0$).  The system $M_0$ has a solution, namely the solution given by the slopes of the edges in $\Gamma_0$.  The proof first shows that this solution to $M_0$ is unique. The final step involves continuously modifying the coefficients in $M_0$ to obtain a continuum of linear systems $M_t$, $0\le t\le 1$, where $M_1=M$.  Using the fact that $M_0$ has a unique solution as a starting point, it is possible to show that $M_t$ has a unique solution for each $0\le t\le 1$.  In particular, $M_1=M$ has a solution, which determines the desired \slcf\ of $G'$.

\paragraph{Every free-collinear set is a free set:}

The proof that every free-collinear set is a free set is somewhat anticlimactic.  The following argument appears in \citet{bose.dujmovic.ea:untangling}, but similar ``perturb and scale'' arguments are fairly common.

Let $S:=(v_1,\ldots,v_s)$ be a free-collinear set in $G$ and let $x_1<\cdots<x_s$ and $y_1,\ldots,y_s$ be real numbers. Since $S$ is a free-collinear set in $G$, $G$ has a
 \slcf\ in $\Gamma_0$ where $v_i$ is at $(x_i,0)$ for each $i\in\{1,\ldots,s\}$.  It follows easily from the definition of \slcf\ that there exists some $\epsilon>0$ such that perturbing each of the vertices in $\Gamma_0$ by at most $\epsilon$ results in another \slcf\ $\Gamma_0'$.  Let $y:=\max\{|y_1|,\ldots,|y_s|\}$.  Define $\Gamma_0'$ by moving $v_i$ to $(x_i,\epsilon y_i/y)$, for each $i\in\{1,\ldots,s\}$ and leaving the other vertices of $G$ fixed.  Then each vertex in $\Gamma_0'$ has been moved a distance of at most $\epsilon$ from its location in $\Gamma_0$, so $\Gamma_0'$ is a \slcf\ of $G$.  Finally, multiply the \yy-coordinate of each vertex by $y/\epsilon$ to obtain a final straight-line drawing $\Gamma$.  This \yy-scaling is an affine transformation that does not introduce any crossings, so $\Gamma$ is a \slcf\ of $G$ and, for each $i\in\{1,\ldots,s\}$, $\Gamma$ places $v_i$ at $(x_i,y_i)$.

\section{Graph Classes with Large Free Sets}
\label{large_free_sets}
 What is the maximum size of a free set that is guaranteed to exist in any $n$-vertex planar graph?  \citet{DBLP:conf/wg/RavskyV11} observed that upper bounds on this value can be obtained from existing work on the circumference of cubic triconnected planar graphs. They observe that if a triangulation $G$ has a collinear set of size $\ell$ (and thus a large free set by \cref{equivalence}), then its dual graph $G^*$ (which is a cubic triconnected planar graph) has a cycle of length $\Omega(\ell)$. The length $c(G^*)$ of a longest cycle in $G^*$ is called the \defin{circumference} of $G^*$.  The circumference of cubic triconnected planar graphs has a long and rich history dating back to at least 1884 when Tait \cite{tait:remarks} conjectured that every such graph is Hamiltonian.\footnote{An $n$-vertex graph is \defin{Hamiltonian} if its circumference is equal to $n$.}

Tait's Conjecture was famously disproved in 1946 by Tutte, who gave an example of a non-Hamiltonian cubic triconnected planar graph having 46 vertices \cite{tutte:on}.  Repeatedly replacing vertices of Tutte's graph with copies of itself gives a family of graphs, $\langle G_i:i\in\mathbb{N}\rangle$ in which $G_i$ has $46\cdot 45^i$ vertices and circumference at most $45\cdot44^i$.  Stated another way, $n$-vertex members of the family have circumference $O(n^\sigma)$, for $\sigma=\log_{45}(44) < 0.9941$. The current best upper bound of this type is due to Gr\"unbaum and Walther \cite{GRUNBAUM1973364} who construct a family of cubic triconnected planar graphs in which $n$-vertex members have circumference $O(n^{\sigma})$ for $\sigma=\log_{23}(22)< 0.9859$.  The dual of such a graph is a triangulation having $\Theta(n)$ vertices whose largest free set has size $O(n^{\sigma})$.
\citet{cano.toth.ea:upper} use this upper bound on the circumference of cubic triconnected planar graphs, along with the Erd\H{o}s-Szekeres Theorem to show that $\fix_\mathcal{G}(n)\in O(n^{0.4948})$.

Since not all planar graphs have free sets of linear size, it is natural to ask which subclasses of planar graphs do.  Two obvious candidates are planar graphs of maximum degree $\Delta$ and planar graphs of treewidth at most $k$.  However, constructions like those described above can be used to rule out this possibility except for $\Delta < 7$ and $k<5$, as we now explain.

\citet{owens:regular} constructs a family of $n$-vertex cubic triconnected planar graphs whose faces have size at most $7$, and that contain no cycle of length $\Omega(n^{0.9976})$.  The dual of such a graph is a triangulation having $\Theta(n)$ vertices and maximum degree $7$ whose largest free set has size $O(n^{0.9976})$.

\citet{DBLP:conf/wg/RavskyV11} show that a construction based on the Barnette-Bos\'ak-Lederberg graph produces triangulations of treewidth at most $8$ whose largest free set has size $o(n)$.  \citet{dalozzo.dujmovic.ea:drawing} observe that the recursive construction based on Tutte's counterexample to Tait's Conjecture \cite{tutte:on} leads to a triangulation of treewidth at most $5$ whose largest free set has size $O(n^{0.9941})$.

Next, we begin by considering the subclasses of planar graphs that do admit linear-size free sets (in \cref{sec:linear}). We then study general planar graphs (in \cref{sec:planar}).  The relationship between the largest free sets and the circumference is then used to obtain strong bounds for bounded-degree planar graphs (in \cref{sec:degree}).

\subsection{Subclasses of planar graphs with linear-size free sets}\label{sec:linear}
A \defin{level planar drawing} of a graph is a  \slcf\  in the plane, such that the vertices are placed on a sequence of horizontal lines (called \defin{levels}), where each edge joins vertices in two consecutive levels.
For example, a natural \slcf\ of a tree $T$, with root $r$, places each vertex of $T$ that is at distance $i$ from $r$ on the line $y = -i$. Thus, trees have level planar drawings.

The following easy argument shows that $n$-vertex graphs that have level planar drawings have free sets of size at least $n/2$. See \cref{level_planar_fig}. Starting with a level planar drawing $\Gamma$, create a simple closed curve $C$ that visits alternate levels of $\Gamma$. Either the curve that visits the even levels or the curve that visits the odd levels contains at least $\lceil n/2\rceil$ vertices of $G$.  The curve $C$ does not intersect the interior of any edge of $\Gamma$.  Moreover, since $C$ does not contain any two consecutive levels of $\Gamma$, the set of vertices of $G$ that are on $C$ is an independent set of $G$, so $C$ intersects each edge of $\Gamma$ in at most one endpoint. Thus $C$ is a proper-good curve.

\begin{figure}
    \centering
    \begin{tabular}{cc}
       \includegraphics[page=2]{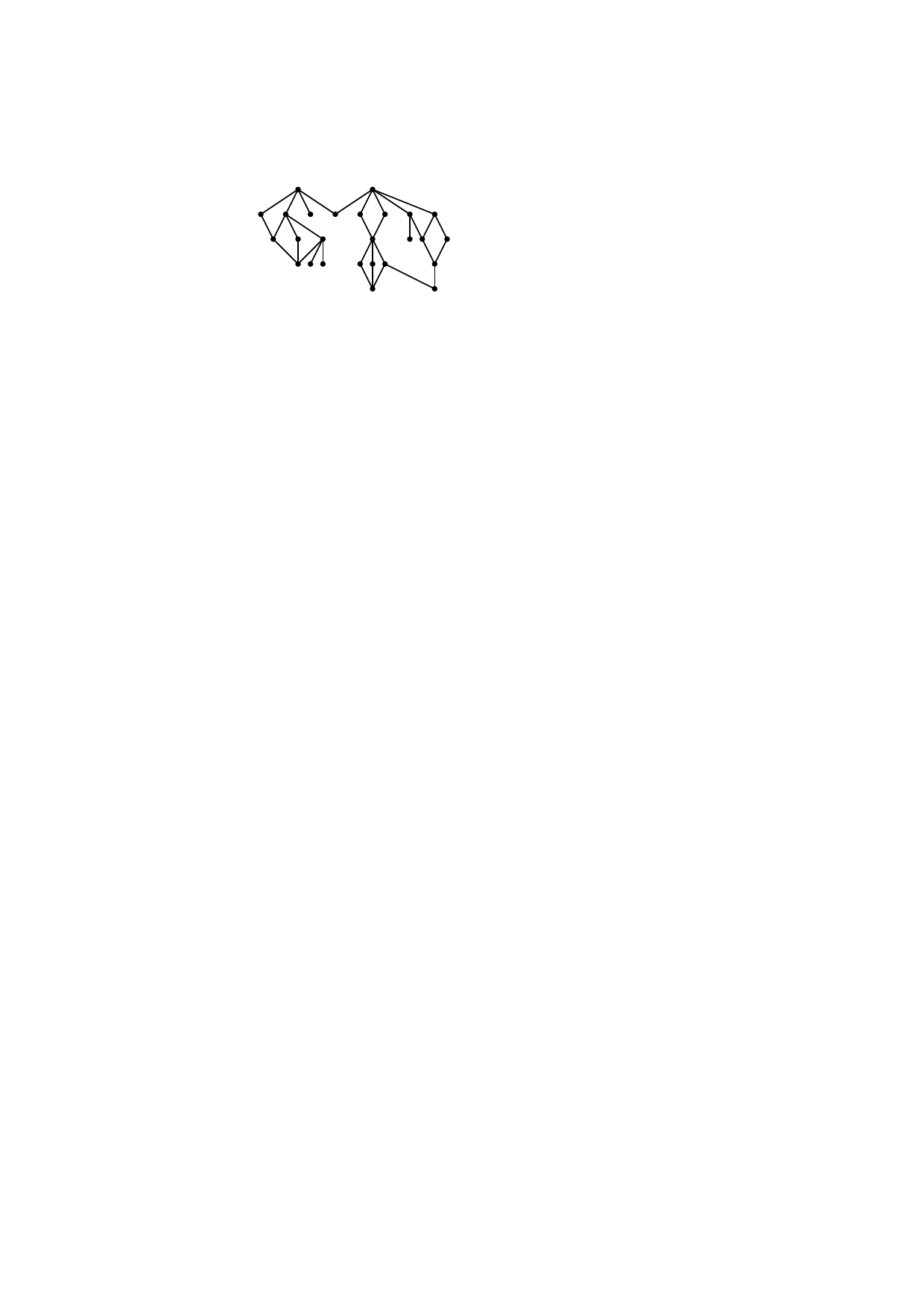} &
       \includegraphics[page=3]{figs/level_planar}
    \end{tabular}
    \caption{Two proper-good curves in a level planar graph.}
    \label{level_planar_fig}
\end{figure}

Note that the preceding argument holds even if the definition of level planar drawings is relaxed to allow: consecutive vertices within a level to be adjacent; edges to go between non-consecutive levels, as long as they are not too far apart; and edges between different levels to be strictly $y$-monotone instead of straight. This leads to the following definition by \citet{bekos_et_al:LIPIcs.GD.2024.19}: An \defin{$s$-span weakly level planar} drawing is a crossing-free drawing in the plane, such that the vertices are placed on a sequence of parallel lines, where each edge $e$ is either a straight-line segment between two consecutive vertices on the same level (called a \defin{horizontal edge}) or a strictly $y$-monotone curve that intersects at most $s+1$ levels (called a \defin{vertical edge}). A graph is \defin{$s$-span weakly level planar} if it has an $s$-span weakly level planar drawing.  The following lemma formalizes the usefulness of this notion for obtaining free sets. This connection between $1$-span weakly level planar
level planar drawings have been observed by \cite{DBLP:conf/wg/RavskyV11}.

\begin{lem}\label{fs-weakly}
   Every $n$-vertex  $s$-span weakly level planar graph $G$ has a free set of size at least $\lceil n/(s+1)\rceil$.
\end{lem}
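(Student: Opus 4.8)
The plan is to generalize the level-planar argument sketched just above the lemma: I will construct a single proper-good curve that passes through at least $\lceil n/(s+1)\rceil$ of the vertices, and then invoke \cref{equivalence} to promote the resulting proper-good set to a free set. Fix an $s$-span weakly level planar drawing $\Gamma$ of $G$ and index its levels $0,1,2,\ldots$ in order of appearance. Partition the levels into $s+1$ classes according to the residue of their index modulo $s+1$. Since every vertex lies on exactly one level, the $n$ vertices are distributed among these $s+1$ classes, so by the pigeonhole principle some class $r$ contains at least $\lceil n/(s+1)\rceil$ vertices. Let $S$ be this set of vertices; I claim $S$ is a proper-good set.

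To see this, I would build a simple closed curve $C$ that, on each level of class $r$, runs along the level line through every vertex of that level (covering, in particular, every horizontal edge on that level), and that joins these horizontal pieces to one another---and closes up---using arcs routed entirely outside the bounding box of $\Gamma$, so that these connecting arcs meet neither a vertex nor an edge of $\Gamma$. Choosing the basepoint $C(0)$ on one of the exterior arcs makes $C$ good. Traversing $C$ then encounters exactly the vertices of $S$, in some order, and that order will serve as the required ordering of $S$.

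The crux is verifying that $C$ is proper, i.e.\ that for each edge $e$ the intersection $C\cap\closure(e)$ is empty, a single point, or all of $e$. A horizontal edge on a level of class $r$ is contained in $C$ (the ``entire edge'' case), while a horizontal edge on any other level is disjoint from $C$. For a vertical edge $e$, the key observation---and the step I expect to be the main obstacle---is that $e$ touches at most one level of class $r$: being strictly $y$-monotone and intersecting at most $s+1$ levels, the indices of the levels met by $e$ span a range of at most $s$; since any two levels of class $r$ differ in index by a multiple of $s+1$, hence by at least $s+1$, no such range contains two of them. Consequently $C$ meets $\closure(e)$ in at most the single point where $e$ crosses, or ends on, its unique class-$r$ level (if any). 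Thus $C$ is proper and good, so $S$ is an (ordered) proper-good set of size at least $\lceil n/(s+1)\rceil$.

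Finally, by \cref{equivalence} every proper-good set is a free set, so $S$ is a free set of the required size, completing the proof.
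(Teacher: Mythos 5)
Your proposal is correct and follows essentially the same route as the paper: pigeonhole on the residue classes of levels modulo $s+1$, a closed curve that runs along the chosen levels (absorbing horizontal edges) and is closed off away from the drawing, with properness guaranteed by the observation that a strictly $y$-monotone edge spanning at most $s+1$ levels can meet at most one level of the chosen class, followed by an appeal to \cref{equivalence}. Your write-up is in fact somewhat more explicit than the paper's about why the curve is proper, but there is no substantive difference in the argument.
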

\begin{proof}
Refer to \cref{s_level_fig}.
Number the levels in an $s$-span weakly level planar drawing $\Gamma$ of $G$ by $0, 1, 2 \dots$. Then for some $i\in \{0, 1, \dots, s\}$, the union of levels $j = i\mod (s+1)$ has at least $\lceil n/(s+1)\rceil$ vertices, $S$ of $G$. Moreover, $G[S]$ is a forest of paths (induced by the horizontal edges and the vertices on these levels). As in the case of level planar graphs, construct a closed curve $C$ that contains these levels of $\Gamma$ (and thus the vertices of $S$) and such that $C$ intersects each vertical edge of $\Gamma$ at most once. This is possible since vertical edges are $y$-monotone and their endpoints lie on the levels whose difference (the absolute value) is at most $s$.  Such a curve $C$ is a proper-good curve and thus by \cref{equivalence}, $S$ is a free set whose vertices are ordered by their appearance on $C$.
\end{proof}

\begin{figure}
    \centering
    \includegraphics[page=2]{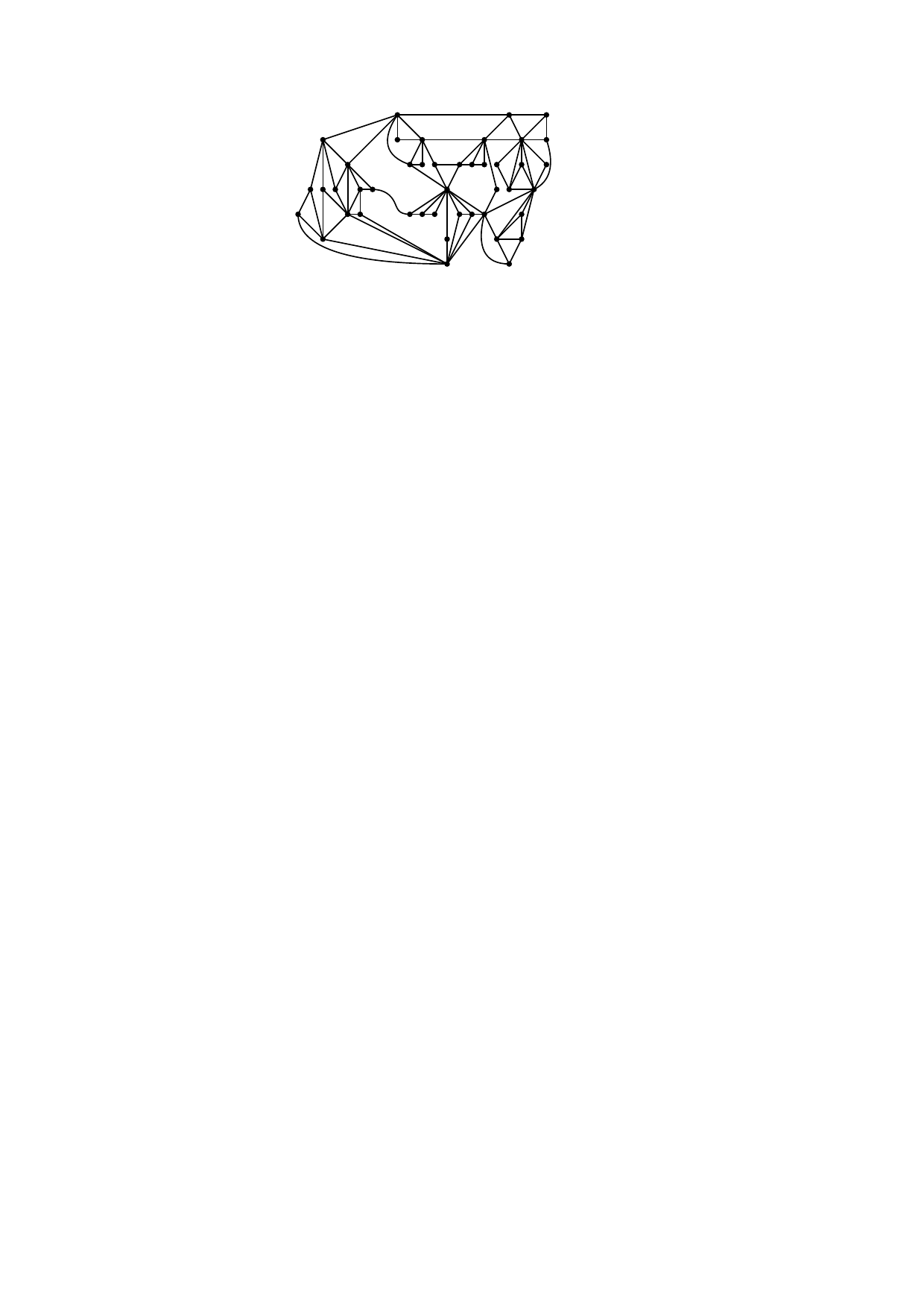}
    \caption{A $2$-span weakly level planar drawing and a proper-good curve that contains every third level.}
    \label{s_level_fig}
\end{figure}

Exactly the same proof shows the following ``hereditary'' variant of \cref{fs-weakly}:
\begin{lem}\label{fs-weakly-x}
   For any $s$-span weakly level planar graph $G$ and any subset $X$ of vertices in $G$, $G$ has a free set $S\subseteq X$ of size at least $\lceil |X|/(s+1)\rceil$.
\end{lem}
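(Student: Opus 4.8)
The plan is to reuse the proof of \cref{fs-weakly} almost verbatim, changing only the counting step and adding a short restriction at the end. In \cref{fs-weakly} the pigeonhole is applied to all $n$ vertices of $G$; here I would instead apply it to the vertices of $X$, and then cut the resulting free set down to $X$ using the fact (noted among the immediate properties of free sets) that any subsequence of an ordered free set is again an ordered free set.

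Concretely, I would first number the levels $0,1,2,\ldots$ of an $s$-span weakly level planar drawing $\Gamma$ of $G$, and record for each vertex of $X$ the residue of its level modulo $s+1$. Since there are only $s+1$ residues, some residue $i\in\{0,\ldots,s\}$ receives at least $\lceil |X|/(s+1)\rceil$ vertices of $X$. Next, exactly as in \cref{fs-weakly}, I would draw a simple closed curve $C$ that coincides with the levels $j\equiv i \pmod{s+1}$ and meets each vertical edge of $\Gamma$ at most once; this holds because consecutive selected levels differ by $s+1$, whereas every vertical edge is $y$-monotone and joins levels differing by at most $s$, so no vertical edge can touch two selected levels. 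Thus $C$ is a proper-good curve, and by \cref{equivalence} the set $\hat S$ of \emph{all} vertices lying on the selected levels, ordered by their appearance along $C$, is an ordered free set in $G$. Finally I would take $S:=\hat S\cap X$; as a subsequence of $\hat S$ it is an ordered free set, it satisfies $S\subseteq X$, and it has size at least $\lceil |X|/(s+1)\rceil$ by the choice of $i$.

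I do not expect any genuine obstacle here, since the curve construction and the span/spacing argument are identical to \cref{fs-weakly}. The only point that requires a little care is that the lemma demands a free set contained in $X$, not merely one of the right size: the curve $C$ necessarily passes through every vertex on the selected levels, including those outside $X$, so $\hat S$ is typically strictly larger than $X$. Passing to the subsequence $\hat S\cap X$ is what reconciles this, and it is precisely here that treating free sets as \emph{ordered} sets---so that subsequences are meaningful and remain free---does the work.
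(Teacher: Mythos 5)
Your proposal is correct and is essentially the paper's own argument: the paper proves this lemma simply by remarking that ``exactly the same proof'' as \cref{fs-weakly} applies, i.e., pigeonhole over $X$ rather than $V(G)$, the identical proper-good curve through the selected levels, and restriction of the resulting ordered free set to $X$ via the subsequence property. Your only addition is to spell out that restriction step explicitly, which is exactly the intended reading.
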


These lemmas have several immediate consequences. Firstly, as argued above, breath-first-search leveling of trees can be easily turned into level planar drawings (and thus  $1$-span weakly level planar drawings). Similarly,  breath-first-search leveling of outerplanar graphs can be turned into  $1$-span weakly level planar drawings of such graphs, as proved by \citet{JGAA-75}. The natural way to draw the $n \times n$ grid graph is a 1-level planar drawing. More generally, \citet{DBLP:journals/algorithmica/BannisterDDEW19} show that squaregraphs are $1$-span weakly level planar. A \defin{squaregraph} is a graph that has a crossing-free drawing in which each bounded face is a $4$-cycle and each vertex either belongs to the unbounded face or has four or more incident edges.
The same authors also show that Halin graphs are $1$-span weakly level planar. \citet{DBLP:journals/corr/abs-2311-14634} identified several classes of planar graphs that are  $s$-span weakly level  planar, for some constant $s$.

\begin{cor}\label{n_over_2}
    Let $G$ be a $n$-vertex tree, outerplanar graph, Halin graph, or square graph and let $X$ be any subset of the vertices of $G$.  The $G$ has a free set $S\subseteq X$ of size at least $\lceil |X|/2\rceil$.  In particular, $G$ has a free set of size at least $\lceil n/2\rceil$.
\end{cor}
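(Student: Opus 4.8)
The plan is to observe that the corollary is a direct application of \cref{fs-weakly-x} with $s=1$, once we know that each of the four named graph classes admits a $1$-span weakly level planar drawing. The whole argument therefore splits into two parts: first establishing $1$-span weak level planarity for each class, and second invoking the lemma.

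First I would handle the four classes. For trees this is immediate: as noted just before \cref{fs-weakly}, breadth-first-search leveling from an arbitrary root places every vertex at distance $i$ from the root on the line $y=-i$, every edge joins consecutive levels, and the drawing is crossing-free; this is already a level planar drawing and hence $1$-span weakly level planar. For the remaining three classes I would cite the constructions already referenced in the surrounding text: the result of \citet{JGAA-75} that breadth-first-search leveling turns outerplanar graphs into $1$-span weakly level planar drawings, and the result of \citet{DBLP:journals/algorithmica/BannisterDDEW19} that both Halin graphs and squaregraphs are $1$-span weakly level planar.

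With $1$-span weak level planarity in hand for all four classes, the second part is immediate. Applying \cref{fs-weakly-x} with $s=1$ to the given subset $X$ yields a free set $S\subseteq X$ with $|S|\ge\lceil |X|/(s+1)\rceil=\lceil |X|/2\rceil$, which is the main claim. The ``in particular'' statement then follows by taking $X=V(G)$, so that $|X|=n$ and $G$ has a free set of size at least $\lceil n/2\rceil$.

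The only genuine content here lives in the $1$-span weak level planarity of the four classes, all of which is either trivial (trees) or already established in the cited literature; consequently the main obstacle is not in this proof at all but in those external constructions. If one wished to make the argument self-contained, the hard part would be exhibiting the $1$-span weakly level planar drawings for Halin graphs and squaregraphs, since for these the leveling that keeps every vertical edge spanning at most two consecutive levels is considerably more delicate than the straightforward breadth-first leveling that suffices for trees and outerplanar graphs.
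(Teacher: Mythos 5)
Your proposal is correct and follows exactly the paper's own route: the corollary is stated there as an immediate consequence of \cref{fs-weakly-x} with $s=1$, using the same facts you cite (breadth-first-search leveling for trees, \citet{JGAA-75} for outerplanar graphs, and \citet{DBLP:journals/algorithmica/BannisterDDEW19} for Halin graphs and squaregraphs). Nothing is missing.
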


The $\lceil n/2\rceil$ lower bound for trees appears in \cite{bose.dujmovic.ea:untangling} in the context of untangling.  The extension to outerplanar graphs appears in \cite{goaoc.kratochvil.ea:untangling}, also in the context of untangling.
For outerplanar graphs, a different construction of a proper-good curve was given by \citet{goaoc.kratochvil.ea:untangling}. It has a special structure, and the bound is slightly stronger. See \cref{pg_outerplanar}(c). In particular, the curve obtained there is contained in the closure of the outer face of $G$ and contains $G$ in the closure of its interior.  This structure turns out to be important when studying general planar graphs.  \citet{goaoc.kratochvil.ea:untangling} prove the following result, although it is not stated in this form. We include a variant of their proof.

\begin{figure}
    \centering
    \begin{tabular}{ccc}
        \includegraphics[page=1,trim={35 0 15 0}]{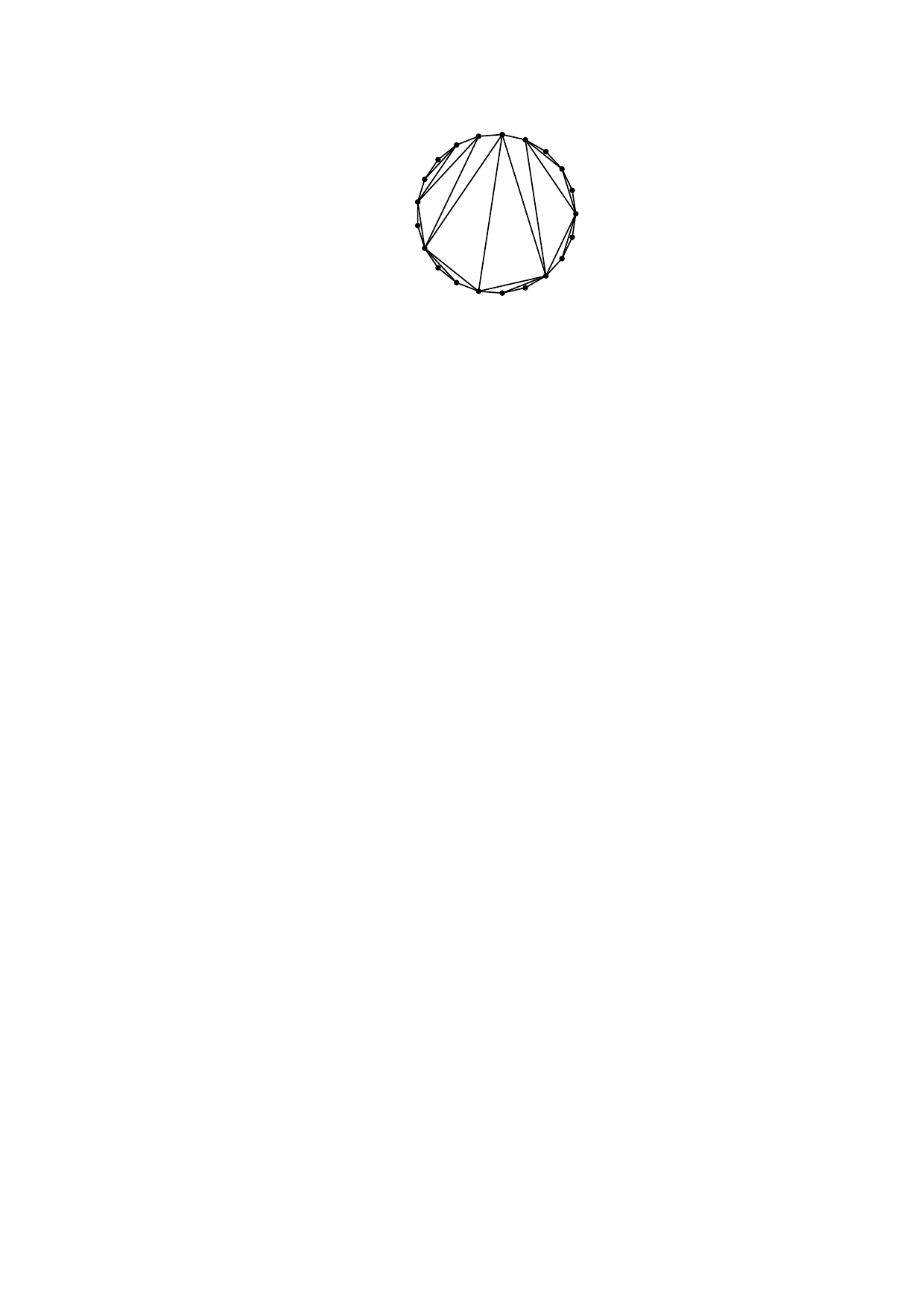} &
        \includegraphics[page=4]{figs/pgouterplanar} & 
        \includegraphics[page=2,trim={20 0 0 0}]{figs/pgouterplanar} \\
        (a) & (b) & (c)
    \end{tabular}
    \caption{An edge-maximal outerplane graph $G$; (b) a proper-good curve guaranteed by \cref{fs-weakly}/\cref{n_over_2}; and (c) a proper-good curve guaranteed by \cref{outerplanar_is}.}
    \label{pg_outerplanar}
\end{figure}

\begin{lem}\label{outerplanar_is}
    Let $G$ be an $n$-vertex biconnected outerplane graph for some $n\ge 4$. Then there exists a proper-good curve $C$ that is contained in the closure of the outer face of $G$, that contains $G$ in the closure of its interior, 
    and that passes through at least $\lceil n/2\rceil + 1$ vertices of $G$. 
    Thus the vertices in $C$ form a proper-good set of size at least $\lceil n/2\rceil+1$.
\end{lem}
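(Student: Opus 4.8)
The plan is to reduce the statement to a purely combinatorial fact and then read off the curve. Since $G$ is an edge-maximal outerplane graph on $n\ge 4$ vertices, its outer boundary is a cycle $v_1,\dots,v_n$ through every vertex, every inner face is a triangle, and the chords of this cycle triangulate its interior. I would fix a \slcf\ in which $v_1,\dots,v_n$ occupy convex position, which is always possible for a triangulated polygon. Given a set $S$ of vertices to be \emph{visited}, I build $C$ so that between two consecutive visited vertices it either traces the boundary edge joining them, when they are adjacent on the cycle, or, skipping the vertices strictly between them, detours through the open outer face around those corners. Such a $C$ is a simple closed curve lying in the closure of the outer face, it keeps the convex polygon, and hence all of $G$, in the closure of its interior, and as soon as one vertex is skipped it has a point off $G$, meeting the \emph{good} condition.

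The heart of the reduction is to determine when $C$ is \emph{proper}. Each chord lies in the open interior of the polygon, so $C$ meets the closure of a chord only at whichever of its two endpoints happen to be visited; thus $C$ hits a chord in two points exactly when $S$ contains \emph{both} of its endpoints, which violates properness, whereas visiting at most one endpoint gives a single-point intersection. Boundary edges cause no trouble, as each is fully traced, touched at one endpoint, or avoided. Therefore the construction produces a proper-good curve through $S$, ordered as the vertices are met along $C$, if and only if $S$ is an independent set of the \emph{chord graph} $H$, whose vertices are $v_1,\dots,v_n$ and whose edges are the chords of the triangulation. As $n\ge 4$ forces at least one chord, any independent set of $H$ omits some vertex, so the good condition is automatic, and the lemma reduces to showing that $H$ has an independent set of size at least $\lceil n/2\rceil+1\ge \tfrac{n}{2}+1$.

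To prove this combinatorial claim I would induct on $n$ by deleting an \emph{ear} $v$, whose triangle is $uvw$: in $G-v$ the former chord $uw$ becomes a boundary edge, while $v$ is on no chord at all. Hence $v$ may always be adjoined to an independent set of the smaller chord graph, and the \emph{only} obstruction is that this set might contain both $u$ and $w$, since $uw$ is a chord of $G$ though not of $G-v$. Ruling this out is exactly where the difficulty lies: the naive estimate loses half a vertex per deletion, which is harmless for even $n$ but leaves the bound one short for odd $n$, so securing the additive ``$+1$'' for every parity is the main obstacle. I would address it by strengthening the inductive hypothesis to \emph{reserve} a boundary edge: for every triangulated polygon and every prescribed boundary edge $ab$ there is an independent set of $H$ of size at least $\lceil n/2\rceil+1$ that does not contain both $a$ and $b$. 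Invoking this on $G-v$ with the edge $uw$ reserved guarantees that at most one of $u,w$ is chosen, so $v$ is adjoined for free.

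The genuinely delicate part is propagating this strengthened statement, since its own inductive step must respect both the prescribed reserved edge and the boundary edge $uw$ created by the deletion. I would control the number of simultaneous reservations by always deleting an ear incident to the currently reserved edge, so that reserving $uw$ takes the place of the edge that the deletion destroys rather than accumulating a second constraint, and I would settle the smallest polygons ($n=4,5$) by direct inspection. Verifying that at most one reservation is ever needed, and dispatching the residual odd-parity configurations, is where the real work of the argument concentrates; by contrast, recovering the curve $C$ from an independent set of $H$ is entirely routine. The triangulated hexagon whose three chords form a triangle, for which every independent set of $H$ has at most $4=\tfrac{n}{2}+1$ vertices, shows that the bound is best possible.
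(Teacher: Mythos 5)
Your reduction is sound, and it is in fact the same one the paper uses: a vertex set $S$ of the triangulated polygon can be threaded by a proper-good curve lying in the closure of the outer face if and only if $S$ is an independent set of the chord graph (the paper phrases this as ``$G[S]$ is a forest of paths all of whose edges lie on the outer cycle''), so everything reduces to showing that the chord graph has an independent set of size at least $\frac{n}{2}+1$. But that combinatorial claim is exactly what your proposal leaves unproven --- you say yourself that this is ``where the real work concentrates'' --- and the strengthened inductive hypothesis you propose to do that work is \emph{false}. Take the fan triangulation of the pentagon $v_1v_2v_3v_4v_5$ with chords $v_1v_3$ and $v_1v_4$. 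Any independent set of its chord graph containing $v_1$ has size at most $3$, so the unique independent set of size $\lceil 5/2\rceil+1=4$ is $\{v_2,v_3,v_4,v_5\}$, which contains both endpoints of each of the boundary edges $v_2v_3$, $v_3v_4$, $v_4v_5$. Hence there is no independent set of the required size avoiding both endpoints of the reserved edge $v_3v_4$: the strengthened statement already fails at $n=5$, one of the two base cases you planned to settle by inspection. The failure is not peripheral, because your own deletion rule runs straight into it: for the hexagon whose chords form the triangle $v_1v_3v_5$ (your tight example) with reserved edge $v_1v_2$, deleting the ear $v_2$ and reserving the newly exposed edge $v_1v_3$ asks precisely for the false pentagon statement, and indeed the resulting fan's unique maximum independent set $\{v_1,v_3,v_4,v_6\}$ contains both $v_1$ and $v_3$, so the ear cannot be re-inserted. (A further difficulty: there are triangulated polygons in which no ear is incident to the reserved edge, so the rule ``always delete an ear incident to the reserved edge'' cannot even be executed without accumulating a second reservation.)

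For comparison, the paper proves the combinatorial claim without any reservation device. It builds the independent set greedily in the chord graph $G'$: repeatedly pick a minimum-degree vertex (which has degree at most $2$, since $G'$ is outerplanar), add it to $S$, and delete it together with its neighbours. Writing $n_i$ for the number of vertices picked while having degree $i$, one gets $|S|=n_0+n_1+n_2$ and $n=n_0+2n_1+3n_2$; comparing with the weak dual tree $T$ of $G$ (each leaf of $T$ contributes to $n_0$, each vertex counted by $n_2$ corresponds to a distinct degree-$3$ node of $T$, and $t_1=t_3+2$) yields $n_0-n_2\ge 2$, hence $|S|=\frac{n}{2}+\frac{n_0-n_2}{2}\ge\frac{n}{2}+1$. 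If you want to rescue an inductive proof, the counting suggests that what must be tracked is a global degree statistic of the chord graph rather than a single reserved boundary edge.
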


Before proving this lemma, we emphasize that the condition that $C$ is contained in the closure of the outer face of $G$ does not imply the second condition—namely, that $G$ lies in the interior of $C$. These two properties will be important for identifying large free sets in general planar graphs. In particular, they ensure that $C$ can closely follow the outer face of an induced outerplane subgraph of a planar graph.

\begin{proof}[Proof of \cref{outerplanar_is}]
 Since $G$ is biconnected and $n\geq 4$, the boundary of the outer face is a cycle, denoted by $O$.  We may assume that $G$ is edge-maximal so that the boundary of each inner face of $G$ is a cycle of length $3$. 
 Consider a set $S\subseteq V(G)$ that has the following two properties: $G[S]$ induces a forest of paths and all the edges of $G[S]$ lie on $O$. It is easy to see that every such set $S$ is a proper-good set of $G$. Moreover, it is not hard to produce a proper-good curve that is contained in the closure of the outer face of $G$ and contains the vertices and edges of $G[S]$ by closely tracing the outer face of $G$.

We now prove that $G$ has such a set $S$ of claimed size. Let $T$ denote the weak dual of $G$. Then $T$ is a tree on at least two vertices since $n\geq 4$. Each vertex of $T$ has degree $1$, $2$ or $3$, so $T$ is a binary tree. Let $t_1$ and $t_3$ denote the number of degree $1$
 and degree $3$ vertices in $T$, respectively. It is well known that in every binary tree $t_1=t_3+2$.

Consider now the graph $G'$ obtained from $G$ by removing the edges on the outer face (in other words, $G'$ contains only the chords of $O$). Every independent set in $G'$ meets the two conditions imposed on $S$ earlier. Thus it remains to prove that $G'$ has an independent set $S$ of size at least $\frac{n}{2} + 1$. Construct $S$ greedily in $G'$ as follows: put in $S$ the vertex of $G'$ of minimum degree; remove that vertex and its neighbours from $G'$ to obtain a new $G'$; and repeat. $S$ is clearly an independent set in $G'$. It remains to show that it has the claimed size. In the moment a vertex is placed in $S$, its degree in the current $G'$ was 0, 1 or 2. Let $n_i$ denote the number of vertices that had degree $i$ when they were placed in $S$. Thus $|S|=n_0+n_1+n_2$. From the description of the algorithm, it follows that $n=n_0+2n_1+3n_2$. Finally, each leaf of  $T$ contributes $1$ to $n_0$. Thus $n_0\geq t_1$. Each vertex of $G'$ that contributes to $n_2$ corresponds to a unique face whose dual vertex has degree $3$ in $T$. Thus $n_2\leq t_3$. Combining that inequality with the earlier obtained equality, $t_1=t_3+2$, we get that $n_0\geq n_2 +2$. To summarize, we have the two equations and one inequality:
\[
    |S| =n_0+n_1+n_2 \qquad
    \qquad n =n_0+2n_1+3n_2 \qquad
    n_0- n_2  \geq 2 \enspace .
\]
Replacing $n_1$ in the first equality with $n_1=(n-n_0-3n_2)/2$ obtained from the second equality gives $|S|=\frac{n}{2}+\frac{n_0-n_2}{2}$. Combined with the last inequality, we obtain $|S|\ge n/2+1$, and the result follows from the fact that $|S|$ is an integer.
\end{proof}

\citet{DBLP:conf/wg/RavskyV11} extend this result even further to show that all graphs of treewidth $2$ have free-collinear sets of linear size. This extension does not follow from the argument used to prove \cref{fs-weakly}, since there is a class of planar graphs of treewidth at most $2$ whose graphs do not have $s$-span weakly level planar drawings for any fixed $s$ \cite{DBLP:journals/dcg/Biedl11}.

Having a weakly level planar drawing of small span is too strong a condition to provide a general tool for finding large free sets in wider classes of planar graphs.  In addition to the example of $2$-trees, there are planar graphs of bounded pathwidth, illustrated in \cref{biedl_fig} that do not have $o(n)$-span weakly level planar drawings (a result that can be derived from \citet[Theorem~5]{DBLP:journals/dcg/Biedl11}) and yet each graph in the class has an induced path of length $(n-1)/2$  that can be covered by a proper-good curve and thus the vertices on that path form a free set of size at least $(n-1)/2$, by \cref{equivalence}.

\begin{figure}
    \centering
    \begin{tabular}{cc}
        \includegraphics[page=1]{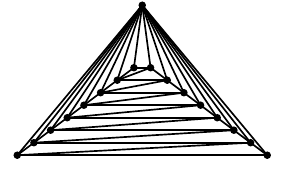} &
        \includegraphics[page=2]{figs/biedl}
    \end{tabular}
    \caption{A graph of pathwidth-$2$ with no $o(n)$-span weakly level planar drawing that contains a free set of size $(n-1)/2$}
    \label{biedl_fig}
\end{figure}

\citet{dalozzo.dujmovic.ea:drawing} show that the linear bound for treewidth $2$ graphs \cite{DBLP:conf/wg/RavskyV11} extends to planar graphs of treewidth at most $3$:

\begin{thm}[\cite{dalozzo.dujmovic.ea:drawing}]\label{fs-tw3}
Every $n$-vertex planar graph of treewidth at most three has a free set of size at least $\lceil\frac{n-3}{8}\rceil$.
\end{thm}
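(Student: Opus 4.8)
The plan is to use \cref{equivalence} to replace the analytically awkward notion of a free set by the purely combinatorial notion of a proper-good set: it suffices to exhibit a \embedding\ $\Gamma$ of $G$ and a proper-good curve $C$ passing through at least $\lceil (n-3)/8\rceil$ vertices. Since an ordered free set remains a free set in any spanning subgraph, I would first add edges to $G$ to make it edge-maximal among planar graphs of treewidth at most three; a proper-good set found in this denser graph is automatically proper-good in $G$. I would, however, resist the temptation to assume the maximal graph is a planar $3$-tree: the octahedron is maximal planar, has treewidth three, but is $4$-regular and hence has no simplicial vertex, so it is not a $3$-tree. The argument must therefore work with the more general recursive structure of treewidth-three graphs, namely their decomposition along separators of size at most three (equivalently a tree decomposition of width three, with bags of size at most four).

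The core is an induction over this separator decomposition. I would root it at a bounding triangle $\{a,b,c\}$ (the three vertices accounting for the $-3$) and prove a strengthened statement: for the subgraph $H$ inside such a triangle with $m$ interior vertices, there is a \embedding\ together with a proper-good arc that enters and leaves $H$ through two \emph{prescribed} edges of the bounding triangle and meets at least $m/8$ interior vertices, where the $H$-edges the arc traverses lie on a forest of paths (so that, as required of a proper curve, $C$ crosses each remaining edge at most once). The inductive step treats a separating triangle with an interior apex $v$ that splits it into three sub-triangles: I would route the arc so that it picks up $v$ and recurses into the sub-triangles, choosing the entry/exit edges demanded of each child so that they are compatible with the way the arc threads around $v$. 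Summing the children's contributions and unwinding the recursion is what fixes the constant; the base cases are the pieces with no interior vertex, handled by the outer-face tracing idea of \cref{outerplanar_is}.

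The main obstacle is designing the invariant so that the three children's arcs compose into a single \emph{simple} proper-good arc for the parent. A simple arc entering and leaving a triangular region through its boundary can thread through all three sub-triangles around $v$, but only if each child's arc uses exactly the two boundary edges that the parent's threading makes available; a careless invariant either forces the arc to skip a heavy sub-triangle — which destroys linearity, since three equal children already give the bounded recursion $f(m)\ge 1+2f(m/3)$ — or requires crossing an edge incident to $v$ twice, violating properness. Getting this accounting right (which boundary edges are "spent" at each level, and ensuring the collected vertices stay non-adjacent enough that $G[S]$ remains a forest of paths traceable by $C$) is where the genuine work lies, and it is precisely this bookkeeping that caps the achievable fraction at $1/8$ rather than something larger. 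A secondary point to check is that $4$-connected, octahedron-like maximal pieces still admit such a threading, since they cannot be simplified by peeling off a simplicial vertex.
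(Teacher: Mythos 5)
Two things go wrong here, one factual and one structural. The factual error: the octahedron $K_{2,2,2}$ does \emph{not} have treewidth three. It is $4$-regular, so its degeneracy---and therefore its treewidth---is at least $4$ (indeed $K_{2,2,2}$ is one of the four forbidden minors characterising treewidth at most three). For $n\ge 4$ the maximal planar graphs of treewidth at most three are \emph{exactly} the planar $3$-trees: a triangulation of treewidth at most three is $3$-degenerate, hence has a vertex of degree exactly $3$, whose link is a triangle; deleting it leaves a smaller triangulation of treewidth at most three, and induction applies. So your stated reason for abandoning the planar-$3$-tree structure rests on a false example. The issue you actually needed to confront is different: making $G$ edge-maximal \emph{within the class} of planar graphs of treewidth at most three does not obviously yield a triangulation, since conceivably every chord that could be added inside a face of length four or more would force the treewidth up to $4$. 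What the reduction requires is an augmentation lemma---every planar graph of treewidth at most three is a spanning subgraph of a planar $3$-tree---which is true but nontrivial (and is how the original proof of the theorem proceeds before running its recursive curve construction). Your proposal neither proves this lemma nor routes around it.

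The structural problem is that, having disavowed the $3$-tree structure, the induction you describe is nonetheless precisely the $3$-tree induction: a bounding triangle with an interior apex $v$ adjacent to all three corners, splitting the region into three subtriangles. A general width-$3$ tree decomposition of a planar graph has no such shape---separators of size at most three need not be pairwise adjacent, and the piece inside a separator need not contain an apex---so the case analysis you outline does not apply to the class you claim to be treating. (Conversely, once the reduction to planar $3$-trees is in place, your ``secondary point'' about $4$-connected octahedron-like pieces is moot: a $4$-connected triangulation on six or more vertices has treewidth at least $4$ and cannot occur.) Most importantly, the inductive step itself is deferred rather than done. The invariant ``the arc enters and leaves through two prescribed boundary edges and meets $m/8$ interior vertices'' is too weak to compose: to collect the apex $v$ the arc must pass through $v$, and properness then forbids it from crossing the interior of \emph{any} edge incident to $v$, including edges from $v$ to vertices deep inside all three subregions; the children's recursive arcs must be built to avoid all such edges, a requirement your stated hypothesis does not express. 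You correctly identify this bookkeeping as ``where the genuine work lies,'' but that work \emph{is} the proof: nothing in the proposal rules out the sublinear recursion $f(m)\ge 1+2f(m/3)$ that you yourself flag, and nothing derives the constant $1/8$. As written, the argument establishes no lower bound at all.
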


As noted in the introduction to this section, \cref{fs-tw3} cannot be generalized to all $n$-vertex planar graphs of treewidth at most $5$. This leaves open the question of whether a linear bound is possible for planar graphs of treewidth at most $4$.

For planar graphs of \emph{large} treewidth, \citet{dalozzo.dujmovic.ea:drawing} use the fact that any planar graph of treewidth $k$ contains a $k\times k$ grid-minor to show that planar graphs with large treewidth have large free sets:

\begin{thm}[\cite{dalozzo.dujmovic.ea:drawing}]\label{fs-tw}
Every planar graph of treewidth $k$ has a free set of size $\Omega(k^2)$.
\end{thm}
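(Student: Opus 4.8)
The plan is to reduce the statement to the existence of a large grid minor and then to transfer a large free set from the grid structure into $G$ itself. First I would invoke the planar grid-minor theorem of Robertson, Seymour, and Thomas: a planar graph containing no $r\times r$ grid as a minor has treewidth $O(r)$, so a planar graph of treewidth $k$ is forced to contain an $r\times r$ grid minor with $r=\Omega(k)$. Fix such a minor, given by pairwise-disjoint connected branch sets $\{B_{i,j}:1\le i,j\le r\}$ in $G$, together with a connecting edge between $B_{i,j}$ and $B_{i',j'}$ whenever $(i,j)$ and $(i',j')$ are adjacent in the grid. Since we only need a free set of size $\Omega(k^2)$, by \cref{equivalence} it suffices to produce a proper-good set of size $\Omega(r^2)$ out of this grid structure.

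The $r\times r$ grid itself is a $1$-span weakly level planar graph, with its rows as levels and all edges either horizontal or between consecutive levels, so by \cref{fs-weakly} it already carries a free set of size $\lceil r^2/2\rceil=\Omega(r^2)$, realized by a proper-good curve that sweeps through every second row. The catch is that this curve lives in a drawing of the grid, not of $G$, and freeness is not obviously preserved under passing to a minor, so we cannot simply quote the grid's free set. What we can exploit instead is planarity: the grid minor can be chosen to respect a fixed plane embedding of $G$, so that the branch sets occupy disjoint regions arranged in a grid pattern and the connecting edges run only between neighbouring regions.

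Working in that embedding, I would pick one representative vertex $v_{i,j}\in B_{i,j}$ and build a single proper-good curve $C$ that visits the representatives of a constant fraction of the branch sets, imitating the level-planar argument. Concretely, I would take the alternate rows $i=1,3,5,\dots$ and, within each such row, thread $C$ from left to right through $v_{i,1},\dots,v_{i,r}$, crossing each horizontal connecting edge transversally at a single point, and then use the skipped rows $i=2,4,\dots$ as buffer bands in which $C$ descends to the next chosen row and returns. Because the regions are grid-ordered in the plane, this can be arranged so that every edge of $G$, whether interior to a branch set or a connecting edge, meets $C$ in at most one point, which is exactly the proper condition; closing $C$ up outside the grid region then yields a simple closed curve through $\Omega(r^2)=\Omega(k^2)$ vertices, hence a proper-good set of the required size.

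The main obstacle is precisely this lifting step. A grid minor by itself guarantees only connected branch sets and connecting edges, with no control on how tangled each $B_{i,j}$ is or on how the remaining edges of $G$ route through the occupied regions, so it is not immediate that $C$ can be made proper. I expect one must first refine the grid minor into a more rigid, embedding-respecting object, for instance a system of $\Omega(r)$ vertex-disjoint ``row'' paths and $\Omega(r)$ vertex-disjoint ``column'' paths meeting in a genuine grid pattern in the plane, so that a representative and a clean local crossing pattern can be selected in each cell. Establishing that such a structured planar grid minor exists, and verifying that the buffer rows suffice to route $C$ without ever crossing an edge twice, is the technical heart of the argument; once it is in place, the alternate-row bookkeeping that produces the $\Omega(k^2)$ count is routine.
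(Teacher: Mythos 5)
Your high-level route is the same one the survey describes for the cited proof: invoke the Robertson--Seymour--Thomas planar grid-minor theorem to get an $r\times r$ grid minor with $r=\Omega(k)$, then convert that structure into a proper-good curve through $\Omega(r^2)$ vertices and finish with \cref{equivalence}. The problem is that the conversion step---which you yourself call ``the technical heart of the argument''---is exactly what is left unproven, so what you have is a plan, not a proof. The rigidification you ask for is real but not free: a grid minor only gives connected branch sets, and to get an embedding-respecting system of disjoint row and column paths one needs something like the standard equivalence between grid minors and wall subdivisions (an $r\times r$ grid minor yields a subdivision of an $\Omega(r)\times\Omega(r)$ wall as an actual subgraph of $G$; since the wall has maximum degree $3$ and is essentially $3$-connected after suppressing degree-$2$ vertices, its induced plane embedding is essentially the standard one). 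Your write-up gestures at this but does not supply it, and without it the claim that ``the branch sets occupy disjoint regions arranged in a grid pattern'' has no justification.

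More importantly, threading a curve through one representative per cell does not by itself give properness, for a reason you never engage with: if two vertices lying on the curve are joined by an edge of $G$ that the curve does not entirely contain, then the closure of that edge meets the curve in two points (its endpoints), which violates the definition of proper. So the representatives cannot be chosen arbitrarily; every edge of $G$ induced on the chosen set must lie along the curve, which is a genuine constraint since branch sets and the cells between them can contain arbitrary chords. Similarly, when the curve is routed through a chosen vertex $v$, every other edge of $G$ incident to $v$---and there may be many, pointing anywhere inside the adjacent cells---must be crossed nowhere else by the curve; this is precisely the caressing condition behind \cref{reroute}, and ensuring it inside each cell, as well as closing the curve through the region outside the grid without crossing some edge there twice, is where the real work lies. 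Your buffer-row bookkeeping controls only the edges of the grid structure itself, not the edges of $G$ inside branch sets, between branch sets that are not grid-adjacent, incident to the chosen representatives, or outside the grid region entirely, and those are exactly the edges that can break properness. None of this makes the approach wrong---the cited proof succeeds along these general lines---but as submitted the argument has a hole where its hardest step should be.
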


\Cref{fs-tw} implies that all $n$-vertex planar graphs of treewidth $\Omega(\sqrt{n})$ have a free set of linear size (a vast generalization of the above observation that square grids have linear size free sets).

Rather than considering planar graphs of small treewidth, one can also consider planar graphs of small (maximum) degree.  \citet{dalozzo.dujmovic.ea:drawing} prove the following result in this vein:

\begin{thm}[\cite{dalozzo.dujmovic.ea:drawing}]\label{fs-cubic}
Every $n$-vertex planar triconnected cubic graph has a free set of size at least $\lceil\frac{n}{4}\rceil$.
\end{thm}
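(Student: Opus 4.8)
The plan is to invoke \cref{equivalence} and thereby replace the analytic notion of a free set by the purely combinatorial notion of a proper-good set: it suffices to produce a \embedding\ $\Gamma$ of $G$ together with a proper-good curve $C$ that passes through at least $\lceil n/4\rceil$ vertices. Two structural facts drive everything. First, since $G$ is triconnected, its embedding is unique up to reflection (Whitney), so there is essentially no freedom in $\Gamma$. Second, since $G$ is cubic, Euler's formula gives exactly $3n/2$ edges and $n/2+2$ faces, and every vertex has degree exactly three; these are the only combinatorial resources I would use.

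Next I would fix the target set $S$. A proper-good curve that merely \emph{traces} a cycle of $G$ can capture at most $c(G)$ vertices, and the circumference of a cubic triconnected planar graph can be as small as $O(n^{\sigma})$ with $\sigma<1$; hence $S$ cannot be read off from a single traced cycle and the curve must instead weave transversally through faces, exactly as in the proofs of \cref{fs-weakly} and \cref{outerplanar_is}, where $C$ passes through an independent (or path-forest) set and cuts across the intervening faces. This points to taking $S$ to be a large \emph{independent} set. Here cubicity hands us the count directly, by the same greedy argument used in \cref{outerplanar_is}: repeatedly add a remaining vertex to $S$ and delete it together with its at most three neighbours, so each step removes at most four vertices and $|S|\ge\lceil n/4\rceil$. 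The content of the proof is therefore not the size, but choosing $S$ (and its cyclic order) compatibly with the embedding so that its vertices lie on a single simple closed curve.

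The routing step is where I expect the real work to lie, and it is what forces the use of triconnectivity. Given the chosen $S$, I would order its vertices cyclically according to the embedding and join consecutive vertices by arcs drawn through the faces of $\Gamma$, each arc crossing the intervening edges transversally; at a vertex $v\in S$ the curve passes through $v$ and splits its three incident edges to the two sides, so properness is automatic at the vertices. The genuine obstacle is global: I must guarantee that the resulting closed curve is \emph{simple} and that \emph{no edge is crossed twice}. The danger is that the weave is forced to re-enter a face it has already left or to double-cross some edge, and this is precisely what triconnectivity forbids: $G$ has no edge cut of size less than three, and a closed curve meets every cut an even number of times, so a $3$-edge cut can be crossed in two of its three edges without repetition and never in a way that traps the curve. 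The bulk of the argument is to convert this local freedom into one globally consistent curve and to check that the accounting (three edges per vertex, $n/2+2$ faces) still delivers $\lceil n/4\rceil$ rather than a weaker bound. Once $C$ is in hand, \cref{equivalence} promotes the proper-good set it defines into a free set of the same size, completing the proof.
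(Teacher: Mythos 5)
Your reduction via \cref{equivalence} and your greedy count are fine: in a cubic graph each vertex placed in $S$ eliminates at most four vertices, so $|S|\ge\lceil n/4\rceil$, and the hereditary version of this count works inside any subset as well. The gap is everything after that. The claim you need --- that a single simple closed curve can be threaded through an \emph{arbitrarily chosen} independent set of a triconnected cubic planar graph while crossing each edge at most once --- is not a routing detail to be checked at the end; it is the entire content of the theorem, and the proposal offers no proof of it. The parity observation (a closed curve crosses every edge cut an even number of times) is a necessary condition, not a construction: it does nothing to rule out the actual failure modes, namely two different arcs of your weave being forced across the same edge, or two arcs forced to enter the same face with interleaved endpoints (hence forced to cross each other). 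Even your ``properness is automatic at the vertices'' is a global claim in disguise: once the curve passes through $v\in S$, every \emph{other} arc of the curve, including arcs routed between far-away pairs of $S$, must avoid all three edges incident to $v$.

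There is concrete evidence that no argument of this shape (``pick any independent set obliviously, then route'') can work. First, the analogous principle fails for planar graphs in general: by the Four Colour Theorem every triangulation has an independent set of size at least $n/4$, yet there are triangulations of maximum degree $7$ whose largest free set has size $O(n^{0.9976})$, so independence alone never guarantees that a set can be covered by a proper-good curve; a correct proof must therefore exploit cubicity and triconnectivity in a way that goes far beyond local degree splits and cut parity. Second, if your routing claim were true, then running your greedy step inside an arbitrary subset $X\subseteq V(G)$ would produce a free set $S\subseteq X$ of size at least $|X|/4$, which is precisely the hereditary strengthening of \cref{fs-cubic} posed as an open problem in \cref{conclusion}; the survey explicitly remarks, just before stating that problem, that the known proof of \cref{fs-cubic} does \emph{not} have this flexibility. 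That is the telltale sign of the missing idea: in the actual proof from the cited paper, the curve and the set of vertices it passes through are constructed together from the embedding, rather than the set being fixed in advance by a purely graph-theoretic rule and the curve fitted to it afterwards.
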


\citet{dalozzo.dujmovic.ea:drawing} suggest the possibility of extending \cref{fs-cubic} to show the existence of a linear-sized free set in any planar graph of maximum degree $3$.  As discussed in the introduction to this section, no such result is possible for all planar graphs of maximum degree $7$.  For $\Delta\in\{3,4,5,6\}$ it is still open whether a linear bound is possible for all planar graphs of maximum degree $\Delta$.

\subsection{Free Sets in Planar Graphs}\label{sec:planar}

\begin{thm}
   \label{root_n}
   Every $n$-vertex planar graph $G$ has a free set of size at least $\sqrt{n/2}$.
\end{thm}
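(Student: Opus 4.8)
The plan is to produce a large \emph{proper-good} set and then invoke \cref{equivalence} to conclude that it is free. First I would assume $G$ is connected: if it is not, I add edges to obtain a connected planar supergraph $G^+$ on the same vertex set, and since $G$ is a spanning subgraph of $G^+$, any free set of $G^+$ is a free set of $G$. I would then fix a \embedding\ of $G$ whose outer face is incident to a vertex $r$, run breadth-first search from $r$, and let $L_0=\{r\},L_1,\dots,L_h$ be the resulting layers. Writing $d=h+1$ for the number of layers and $w=\max_i|L_i|$ for the width of the widest layer, we have $n=\sum_i|L_i|\le d\cdot w$. The whole argument then rests on proving two bounds, that $G$ has a free set of size $d$ and a free set of size $w/2$: if both $d<\sqrt{n/2}$ and $w/2<\sqrt{n/2}$ held, then $n\le dw<\sqrt{n/2}\cdot\sqrt{2n}=n$, a contradiction, so $\max(d,w/2)\ge\sqrt{n/2}$ and one of the two free sets has the required size.

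For the bound of $d$ (the \emph{tall} case) I would take a root-to-leaf path $P=(p_0,\dots,p_h)$ in the BFS tree, with $p_i\in L_i$, so that $|P|=d$. The key structural point is that $P$ is an \emph{induced} path: breadth-first search places edges only within a layer or between consecutive layers, so any chord $p_ip_j$ would force $|i-j|\le 1$ and hence already be an edge of $P$. To exhibit $P$ as a proper-good set I would traverse $P$ from $p_0$ to $p_h$, so that each edge of $P$ is \emph{contained} in the curve and each non-path edge meets the curve only at the single vertex $p_i$ incident to it, and then close the curve with a return arc routed on the far side of $P$; since $p_0=r$ lies on the outer face, the curve closes up outside $G$.

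For the bound of $w/2$ (the \emph{wide} case) the crucial observation is that \emph{every BFS layer induces an outerplanar graph}. Indeed, for the widest layer $L_m$ the set $L_0\cup\cdots\cup L_{m-1}$ is connected, so contracting it to a single vertex $a$ and deleting $L_{m+1}\cup\cdots\cup L_h$ yields a planar minor equal to the join of $G[L_m]$ with $a$; as every vertex of $L_m$ has a neighbour in $L_{m-1}$, the vertex $a$ is adjacent to all of $L_m$, and a graph whose join with $K_1$ is planar is outerplanar. In the \embedding\ of $G$ the vertices of $L_m$ all lie on the boundary of the connected inner region containing $L_0\cup\cdots\cup L_{m-1}$, and this boundary realises an outerplane structure on $G[L_m]$. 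I would then run the greedy argument behind \cref{outerplanar_is} and \cref{n_over_2} on $G[L_m]$ to extract a set $S\subseteq L_m$ with $|S|\ge w/2$ for which $G[S]$ is a forest of paths whose edges all lie on this inner boundary, and route a proper-good curve for $S$ just inside the inner region: it passes through the vertices of $S$, contains their boundary path-edges, and meets every other edge in at most one point. By \cref{equivalence} this $S$ is free.

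The hard part will be making these two curve constructions rigorous, that is, verifying \emph{properness} (each edge met in at most one point, or contained entirely) while the curve threads through half of a possibly complicated layer, or returns along a long induced path whose incident edges must not be crossed a second time. The structural fact that layers are outerplanar is what tames the wide case, reducing it essentially to the clean outerplanar curve of \cref{outerplanar_is}; the analogous properness check for the return arc in the tall case is the remaining delicate point, since one must ensure the returning portion of the curve avoids all edges incident to $P$ (each of which the forward portion already touches at a vertex of $P$). Everything else is the elementary layer-counting inequality above.
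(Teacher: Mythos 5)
Your depth-versus-width dichotomy mirrors the paper's own proof (whose chain/antichain split via Dilworth plays exactly the role of your $d\cdot w\ge n$ inequality), but both of your key claims fail, and they fail precisely at the points where the paper's canonical-ordering structure is doing real work. The tall case is not merely delicate; it is false. First, properness forces any proper-good curve $C$ through the vertices of an induced path $P=(p_0,\dots,p_h)$ to contain every edge of $P$: if $p_i$ and $p_{i+1}$ both lie on $C$, then $C\cap\closure(p_ip_{i+1})$ contains two points, so by the definition of proper it must be the entire edge. Hence $P$ is a sub-arc of $C$, and $C$ is exactly your ``path plus return arc''; no reordering or weaving of the curve is available. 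Now the return arc can be genuinely obstructed: take a vertex $u\notin P$ adjacent to two consecutive path vertices $p_i,p_{i+1}$, embedded so that the triangle $p_ip_{i+1}u$ encloses the tail $p_{i+2},\dots,p_h$ while $p_0$ lies outside. The return arc from $p_h$ to $p_0$ would have to leave this triangle, but it cannot cross $p_ip_{i+1}$ (that edge lies on $C$, so $C$ would not be simple), cannot cross the legs $p_iu$ or $p_{i+1}u$ (each already meets $C$ at $p_i$, respectively $p_{i+1}$), and cannot pass through $p_i$, $p_{i+1}$ (simplicity) or through $u$ (then $p_iu$ would meet $C$ in both $u$ and $p_i$). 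Such configurations are realizable as $3$-connected triangulations in which $P$ is a root-to-leaf path of a BFS tree rooted on the outer face with honest layer structure: place a nested-triangles ``tube'' inside the triangle $p_ip_{i+1}u$ and let the path descend it, one vertex per level, in an antiprism pattern (this keeps $P$ induced and keeps every tube vertex at the correct BFS distance). Since a triangulation's embedding is essentially unique, \cref{equivalence} then says $V(P)$ is not even a collinear set. This is exactly what the paper's chain case avoids: a maximal chain of the frame poset closes into a cycle $C_x$ through $v_1$ and $v_2$ \emph{all of whose chords lie inside $C_x$}, so the whole outer side of $C_x$ is clean and the curve of \cref{outerplanar_is} can hug $C_x$ from outside. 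A BFS path has no such one-sidedness, and no choice of root-to-leaf path is proven (or, as far as I can see, provable by your argument) to have it.

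The wide case has a gap of a different kind: \cref{outerplanar_is}, applied to $G[L_m]$ or to an edge-maximal augmentation of it, yields a proper-good set \emph{of the graph $G[L_m]$}, i.e.\ a free set of $G[L_m]$ — and free sets do not lift from induced subgraphs to supergraphs (the elementary containment goes the other way: a free set of $G$ restricts to subgraphs). To conclude that $S\subseteq L_m$ is free in $G$ you need a curve that is proper with respect to \emph{all} of $G$, crossing every inter-layer edge, every edge inside other layers, and every intra-layer edge of $G$ at most once. Your bridging assertion — that in the fixed embedding the vertices of $L_m$ all lie on the boundary of one region realizing the outerplane structure of $G[L_m]$ — is unproven and false in general: vertices of $L_m$ can sit in pockets of the drawing (for instance, two adjacent vertices $x,y\in L_m$ with a common neighbour $z\in L_{m-1}$ span a triangle $xyz$ that may enclose other vertices of $L_m$, and a curve passing through $x$ and $y$ touches all three of its edges and can never enter it), and the greedy selection inside $G[L_m]$ is blind to these obstructions since they involve edges of $G$ that $G[L_m]$ does not see. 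This is precisely the difficulty the paper's antichain case resolves with structure your layers lack: the antichain vertices lie on \emph{nested} outer cycles of the graphs $G_{i_j}$, consecutive antichain vertices share the boundary of a region $D_j$, the regions' weak dual is a path, and \cref{dumb_lemma} then routes the curve region by region. So the abstract outerplanarity of BFS layers, while correct, is not enough; both halves of your proof need the stronger, embedding-aware structure that the canonical ordering provides.
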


A version of \cref{root_n} with an $\Omega(\sqrt{n})$ bound is due to \citet{bose.dujmovic.ea:untangling} though the authors at the time were working on the untangling problem discussed in the introduction, so their result is never stated in terms of free sets.  They prove that $\fix_\mathcal{G}(n)=\Omega(n^{1/4})$ by proving that a triangulation $G$ either contains an induced outerplane graph of size $\Omega(\sqrt{n})$ or a free-collinear set of size $\Omega(\sqrt{n})$.  At the time, the equivalence between proper-good sets, collinear sets, and free-collinear sets was not known, so their proof includes both combinatorial and geometric elements (including a proof that free-collinear and free sets are equivalent). In the following, we extract these combinatorial elements from \cite{bose.dujmovic.ea:untangling} and use \cref{outerplanar_is} (proven in \cite{goaoc.kratochvil.ea:untangling}) to give a self-contained proof of \cref{root_n} with the best currently known bound of $\sqrt{n/2}$.

The \defin{interior}, $\interior(G)$, of a near-triangulation $G$ is the interior of the cycle that bounds the outer face of $G$.

\begin{lem}\label{dumb_lemma}
  Let $G$ be a near-triangulation, and let $v$ and $w$ be two points (possibly vertices) on the boundary of the outer face of $G$ such that no edge of the outer face of $G$ contains $v$ and $w$. Then there exists a simple curve $C:[0,1]\to\interior(G)\cup\{v,w\}$ with endpoints $v$ and $w$ that is proper with respect to $G$.
\end{lem}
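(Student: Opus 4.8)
The plan is to build $C$ explicitly as a path through the triangular faces of $G$, guided by a path in the weak dual, and then to check that it can be made simple and proper. Since $G$ is a near-triangulation, its outer boundary is a cycle $O$, the union of its closed inner faces is the closed disk bounded by $O$, and the inner faces triangulate this disk; in particular the weak dual $G^+$ (vertices are inner faces, adjacent when they share an edge) is connected. To each of $v,w$ I attach a \emph{starting face}: if $v$ is an interior point of an outer edge $e_v$, let $f_v$ be the unique inner face incident to $e_v$; if $v$ is a vertex, let $f_v$ be any inner face incident to $v$ (and likewise for $w$). By connectivity of $G^+$ I then fix a \emph{simple} dual path $f_v=\phi_0,\phi_1,\dots,\phi_k=f_w$.

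I route $C$ so that it starts at $v$, passes through $\interior(\phi_0),\dots,\interior(\phi_k)$ in order, crosses the edge $e_i$ shared by $\phi_{i-1}$ and $\phi_i$ transversally at a single interior point for each $i\in\{1,\dots,k\}$ (avoiding all vertices), and ends at $w$; inside each face it is a simple arc joining the entry and exit points through the open triangle. Because the $\phi_i$ are distinct and each interior edge is shared by a unique pair of faces, the crossed edges $e_1,\dots,e_k$ are pairwise distinct, so every interior edge meets $C$ in at most one point; the face-arcs lie in pairwise-disjoint open triangles and meet only at the shared crossing points, so $C$ is simple; and since $C$ lies in $\interior(G)$ except at its endpoints, it never meets an outer edge except possibly at $v$ or $w$. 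Hence $C$ is already proper with respect to every edge except, possibly, the edges incident to $v$ or $w$ when these are vertices. (When $f_v=f_w$ one routes a single arc through that triangle; the hypothesis implicitly rules out $v$ and $w$ lying in the interior of one common edge, since no interior curve could then be proper with respect to that edge.)

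The one genuine obstacle is properness at a vertex endpoint: if $v$ is a vertex, every incident edge already contains $v$, so $C$ must avoid crossing any such edge a second time, lest that edge meet $C$ in two points. The key observation is that $v$ lies on $O$, so the inner faces incident to $v$ form a \emph{fan} and $v$ is a corner of each of them; thus $C$ may be started at $v$ and led directly into whichever of these faces begins the dual path, then out of the fan across the one edge of that face not incident to $v$, crossing no edge incident to $v$ at an interior point. Since the fan meets the boundary along an arc of $O$, its complement in the disk is connected, so the rest of the path to $f_w$ can be chosen to avoid the fan entirely (and if $f_w$ is itself a fan face, one simply routes from $v$ into that face and stops at $w$). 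After the same adjustment at $w$, each edge incident to a vertex endpoint meets $C$ only at that endpoint, a single point permitted by the definition, so $C$ is proper. Putting the arc in general position and checking transversality are routine; this endpoint analysis is where the real care is needed.
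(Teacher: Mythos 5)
Your overall strategy---route the curve through face interiors along a dual path, crossing each shared edge once---is the same skeleton as the paper's proof, but there are two genuine gaps, one of them fatal to your construction as written.

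First, you never handle the case where $vw$ is a \emph{chord}, i.e.\ an internal edge of $G$ joining two boundary vertices. The hypothesis only forbids $vw$ from being an edge of the \emph{outer face}, so this case is allowed and really occurs. Your curve meets edges only transversally at single interior points and otherwise stays inside open triangles, so it intersects the closure of the edge $vw$ in exactly the two points $v$ and $w$; properness requires that intersection to be empty, a single point, or the \emph{entire} edge, so no curve of your type can ever be proper when $vw\in E(G)$. This is not repairable within your construction: since both endpoints of $C$ lie in the closure of $vw$, the only way out is for $C$ to contain the whole edge, which is exactly how the paper deals with it (when $vw$ is an internal edge, take $C$ to be the edge $vw$ itself; this is explicitly permitted by the definition of proper). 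Your parenthetical remark covers only the case of $v,w$ interior to one common outer edge, which is a different (and genuinely impossible) configuration.

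Second, your key topological claim at the endpoints---``the fan meets the boundary along an arc of $O$, so its complement in the disk is connected'' and hence the dual path can avoid the fan---is false whenever $v$ has a chord to another boundary vertex. Example: outer $4$-cycle $v,a,u,b$ with chord $vu$, and internal vertices $c$ adjacent to $v,a,u$ and $d$ adjacent to $v,u,b$. The fan of $v$ is $\{vac,\,vcu,\,vud,\,vdb\}$ and its complement consists of the triangles $acu$ and $udb$, which meet only at the vertex $u$; no dual path can travel between them while avoiding the fan. (Relatedly, the ``one edge of that face not incident to $v$'' through which you exit the fan can be an outer edge, which $C$ is not allowed to cross.) The paper sidesteps all of this with a trimming argument instead of an avoidance argument: take \emph{any} dual path from a face at $v$ to a face at $w$, let $f_v'$ be the last face on it whose boundary contains $v$ and $f_w'$ the first face after that whose boundary contains $w$, and run the curve only along the subpath between $f_v'$ and $f_w'$. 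Then no crossed edge is incident to $v$ or $w$, because an edge incident to $v$ has both of its incident faces containing $v$, contradicting the maximality of $f_v'$ (symmetrically for $w$). You would need this trimming step, or a correct substitute, in place of your connectivity claim.
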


\begin{proof}
  If $vw$ is an (internal) edge of $E(G)$, then the curve $C$ consists of the edge $vw$, which clearly satisfies the requirements of the lemma. 
   \begin{figure}
      \centering
      \includegraphics[page=3]{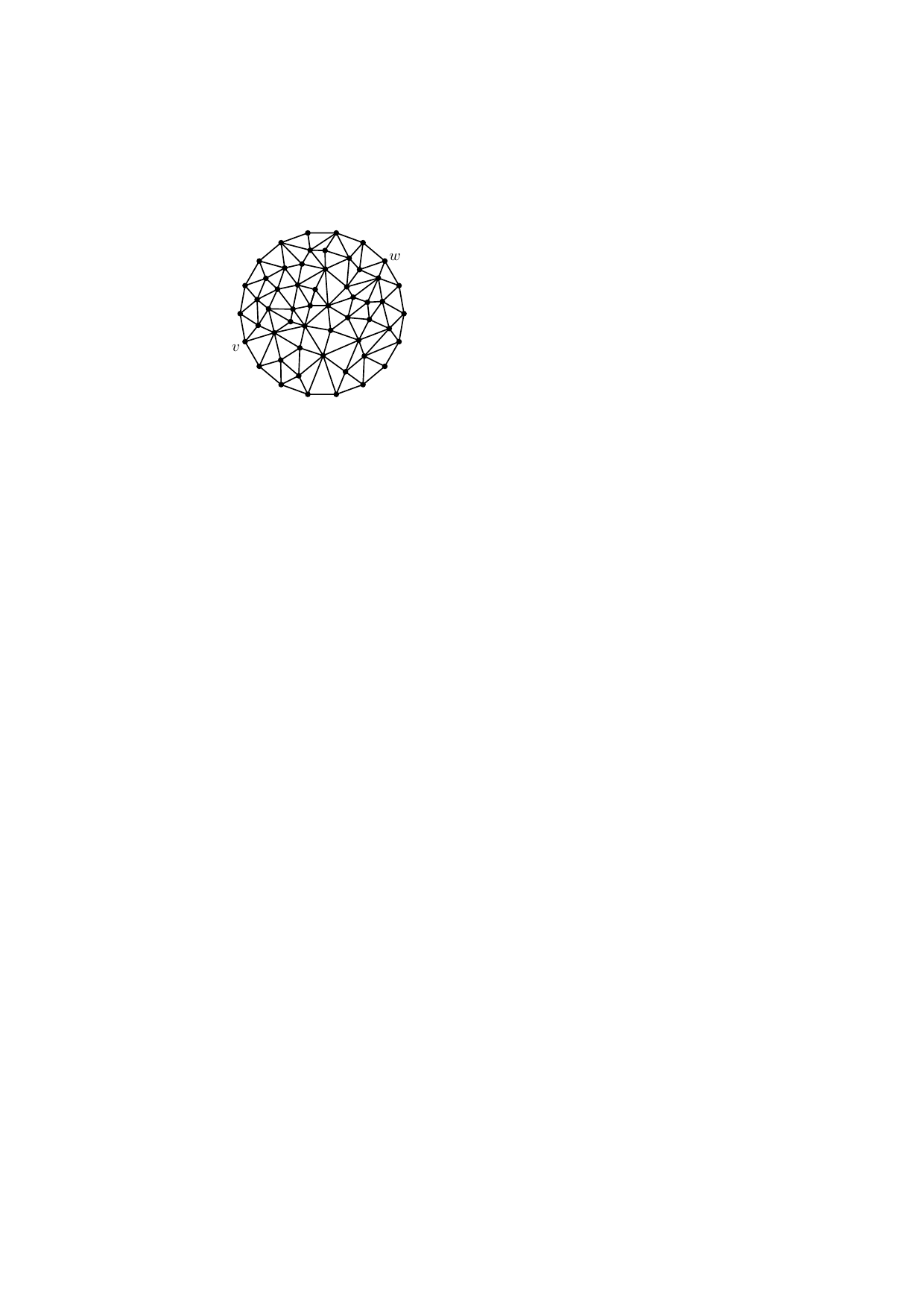}
      \caption{Constructing a proper curve $C$ from $v$ to $w$ through $\interior(G)$.}
      \label{dumb_lemma_fig}
  \end{figure}

 Otherwise, since the edge $vw$ is not an edge of $G$ and since $G$ is a near-triangulation, there is no internal face of $G$ that contains both $v$ and $w$ on its boundary. Refer to \cref{dumb_lemma_fig}.  
 Let $G^+$ be the weak dual of $G$.  Let $P$ be a shortest path in $G^+$ having one endpoint in a face that contains $v$ and the other endpoint in a face that contains $w$.  Let $e^+_1,\ldots,e^+_t$ be the sequence of edges in $P$.  For each $i\in\{1,\ldots,t\}$, the edge $e_i^+$ in $G^+$ corresponds to some inner edge $e_i$ of $G$. For each $i\in\{1,\ldots,t\}$, let $p_i$ be any point in the interior of $e_i$, let $p_0:=v$, and let $p_{t+1}=w$.  Then, for each $i\in\{1,\ldots,t+1\}$ there is an inner face $f_i$ of $G$ with the points $p_{i-1}$ and $p_i$ on the boundary of $f_i$.  Construct an open simple curve $C$ that visits $p_0,\ldots,p_{t+1}$ in order in such a way that the portion of $C$ between $p_{i-1}$ and $p_i$ is contained in the interior of $f_i$, for each $i\in\{1,\ldots,t+1\}$. 

 Clearly $C$ is contained in $\interior(G)\cup\{v,w\}$, so all that remains is to show that $C$ is proper with respect to $G$.  By construction, $C$ intersects each of the edges $e_1,\ldots,e_t$, $C$ intersects the edges of $G$ incident to $v$, $C$ intersects the edges of $G$ incident to $w$, and $C$ avoids all other edges of $G$.  Since $P$ is a shortest path, none of $e_1,\ldots,e_t$ is incident to $v$ or $w$. Since $P$ is a path, $C$ intersects each of $e_1,\ldots,e_t$ in a single point. Finally, $C$ intersects each of the edges of $G$ incident to $v$ (respectively $w$) in a single point, namely $v$ (respectively, $w$).  Thus $C$ is proper with respect to $G$.
\end{proof}

\begin{proof}[Proof of \cref{root_n}]
If $G_0$ is a triangulation such that $G$ is a spanning subgraph of $G_0$, then clearly, any free set in $G_0$ is also a free set in $G$. Thus we may assume, without loss of generality, that $G$ is a triangulation.
 The steps in this proof are illustrated in \cref{canonical}. Fix a \embedding\ of $G$ and let $v_1$, $v_2$, and $v_n$ be the three vertices on the outer face of $G$.

    \begin{figure}
        \centering
        \begin{tabular}{cc}
        \includegraphics[page=12]{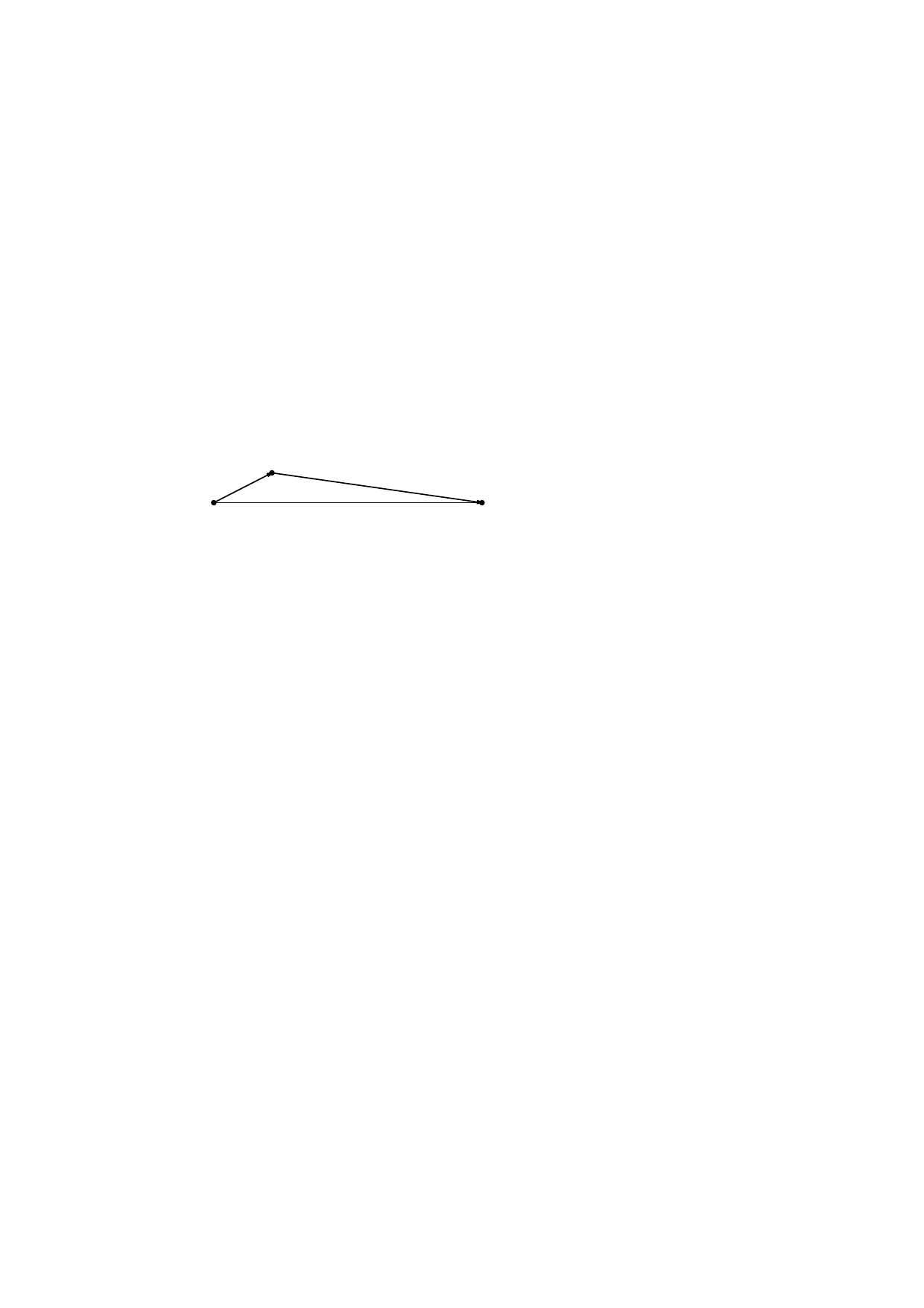}  &
        \includegraphics[page=20]{figs/canonical} \\
        (a) & (b) \\
        \includegraphics[page=13]{figs/canonical} &
        \includegraphics[page=15]{figs/canonical} \\
        (c) & (d) \\
        \includegraphics[page=17]{figs/canonical} &
         \includegraphics[page=18]{figs/canonical} \\
         (e) & (f)
        \end{tabular}
        \caption{(a) and (b) A canonical ordering $v_1,\ldots,v_n$ of a triangulation $G$ and the resulting frame $F$.\newline
        (c) and (d): a chain of $(V(G),\prec)$ and the resulting proper-good curve $C$ \newline
        (e) and (f): An antichain of $(V(G),\prec)$ and the resulting proper-good curve $C$.}
        \label{canonical}
    \end{figure}

    We use a \defin{canonical ordering} $v_1,\ldots,v_n$ of $V(G)$, which has the following property: For each $i\in\{3,\ldots,n\}$,  the induced graph $G_i:=G[\{v_1,\ldots,v_i\}]$ is a near-triangulation that contains the vertex $v_i$ and the edge $v_1v_2$ on its outer face.  The existence of such an ordering is proven by \citet{defraysseix.pach.ea:how}.  In this ordering, the introduction of $v_i$ introduces $d_i\ge 2$ new edges $v_iw_{i,1},\ldots,v_iw_{i,d_i}$ to $G_i$ that do not appear in $G_{i-1}$.  The edges $e_{i}:=v_{i}w_{i,1}$ and $e'_i:=v_{i}w_{i,d_i}$ appear on the outer face of $G_i$ and the vertices $w_{i,2},\ldots,w_{i,d_i-1}$ on the outer face of $G_{i-1}$ no longer appear on the outer face of $G_i$, as illustrated in \cref{canonical}(a).
    This defines a \defin{frame} $F$ where $V(F):=V(G)$ and $E(F):=\{v_1v_2\}\cup\bigcup_{j=3}^n \{e_i,e'_i\}$.  We treat $F$ as a directed acyclic graph where $v_1v_2$ is directed towards $v_2$ and $w_{i,1},v_i,w_{i,d_i}$ is a directed path.  In this way, $F$ has a single source, $v_1$ and a single sink, $v_2$.\footnote{For readers familiar with Schnyder woods \cite{schnyder:embedding}, $F$ can be obtained by taking the union of two trees $T_1$ and $T_2$ (rooted at $v_1$ and $v_2$, respectively) in a Schnyder Wood, directing each edge of $T_1$ away from its root and directing each edge of $T_2$ towards its root, as illustrated in \cref{canonical}(b).}  Since the final frame $F$ is a directed acyclic graph, its transitive closure defines a partially ordered set $(V(G),\prec)$ in which $v\prec w$ if and only if $F$ contains a directed path from $v$ to $w$.

    First, consider some maximal chain $x_1\prec\cdots\prec x_k$ in this partial order.  Since $F$ is maximal, $x_ix_{i+1}$ is an edge of $F$ for each $i\in\{1,\ldots,k-1\}$. Since $F$ has a single source $v_1$ and a single sink $v_2$, $x_1=v_1$ and $x_k=v_2$.  
    Thus $C_x:=x_1,\ldots,x_k$ is a cycle in $G$, as illustrated in \cref{canonical}(c).
    Suppose that $G$ contains some edge $x_ix_j$ that is a chord of $C_x$.    We now argue that the edge $x_ix_j$ is embedded in the interior of the cycle $C_x$.  Without loss of generality, suppose $x_i$ appears after $x_j$ in the canonical order, so $x_i=v_a$ and $x_j=v_b$ for some $a> b$. Then the edge $x_ix_j=v_av_b$ is in the graph $G_a=G[\{v_1,\ldots,v_a\}]$, so $v_b\in\{w_{a,1},\ldots,w_{a,d_a}\}$. The two neighbours $x_{i-1}$ and $x_{i+1}$ of $x_i$ in $C_x$ are not in the set $w_{a,2},\ldots,w_{a,d_{a}-1}$, so the interior of the cycle $C_x$ contains the interior of the cycle $v_a,w_{a,1},\ldots,w_{a,d_a}$. Therefore, the edge $v_av_b=x_ix_j$ is in the interior of $C_x$. (Indeed, the only possibility is that $v_b\in\{w_{a,1},w_{a,d_a}\}$ )

     Therefore, $G_x:=G[\{x_1,\ldots,x_k\}]$ is an induced outerplane subgraph of $G$ whose outer face is bounded by the cycle $C_x$. Each edge of $G$ with two endpoints in $G_x$ is contained in the closure of the interior of $C_x$. Thus each edge that is not in $E(G_x)$ intersects the exterior of $C_x$ and has at most one endpoint on $C_x$. Let $E_0\subseteq E(G)\setminus E(G_x)$ be the set of edges with no endpoint on $C_x$, and $E_1\subseteq E(G)\setminus E(G_x)$ be the set of edges with exactly one endpoint on $C_x$.  By \cref{outerplanar_is}, there is simple closed curve $C$ that contains a set $S$ of at least $k/2$ vertices of $G_x$, that is contained in the closure of the outer face of $G_x$, that contains $G_x$ in the closure of its interior, and that is proper and good with respect to $G_x$. These properties ensure that $C$ visits the vertices of $G_x$ in the same order as they appear on $C_x$ and that $C$ can be made to follow closely $C_x$  -- in fact, so closely that $C$ intersects each edge of $E_1$ in one point and does not intersect any edge in $E_0$. Since the remaining edges of $G$ (those in $E(G_x)$) are contained in the closure of the interior of $C_x$, the curve $C$ intersects each edge in $E(G)$ in at most one point,  as illustrated in \cref{canonical}(d).  Thus $C$ is a proper-good curve for $G$ that contains the vertices in $S$.  By \cref{equivalence}, $S$ is a free set in $G$ of size at least $k/2$.

    Next  consider some maximal antichain $S:=y_1,\ldots,y_k:=v_{i_1},\ldots,v_{i_k}$ of $(V(G),\prec)$ of length $k>1$ ordered by canonical ordering so that $i_1<\cdots<i_k$, as illustrated in  \cref{canonical}(e).  Since $k>1$, the set $S$ does not contain $v_1$ or $v_2$.  Since $S$ is maximal and does not contain $v_1$ or $v_2$, we have $y_k=v_n$.  Consider the sequence of cycles $C_1,\ldots,C_k$ where $C_j$ is the cycle that bounds the outer face of $G_{i_j}$.  By definition of  canonical ordering, the interior of $C_j$ contains the interior of $C_{j-1}$ for each $j\in\{2,\ldots,k\}$.  The nesting of these cycles is illustrated in \cref{canonical}(e), where each new colour shows the interior of $C_j$ that is not contained in the interior of $C_{j-1}$. Furthermore, the interior of $C_{j}$ must contain the vertex $y_{j-1}$ since, otherwise both $y_{j-1}$ and $y_j$ are on the outer face of $G_{i_j}$, which would mean that $y_{j-1}$ and $y_j$ are comparable. Consider now the union of two cycles $C_{j-1}$ and $C_{j}$ (or rather the union of the two closed curves that represent these two cycles in the \embedding\ of $G$). 
    This union has one (bounded) face $f$ that contains both $y_{j-1}$ and $y_{j}$ on its boundary. The boundary of $f$ is a cycle $D_j$ in $G$ comprised of two paths: a path in $C_{j-1}$ containing  $y_{j-1}$ and a path in $C_j$ containing  $y_j$.  Note that the union of all the cycles $D_1, \dots D_k$,
    (or rather the union of the $k$ closed curves that represent these $k$ cycles in the \embedding\ of $G$) defines $k+1$ faces in the plane.
    These faces are illustrated in \cref{canonical}(f), where the interior of each face is assigned its own colour.

    We now construct a proper-good curve $C$ for $G$ that contains the vertices of $S$.  Let $y_0$ be a point in the interior of the edge $v_1v_2$. Let $G_1':=G_{i_1}$ and, for each $j\in\{2,\ldots,k\}$, let $G_j'$ be the near-triangulation whose outer face is bounded by the cycle $D_j$ described in the previous paragraph, and whose inner faces are faces of $G$. Graphs $G_j'$ are illustrated in \cref{canonical}(f) by having their interiors shaded. Observe that $G'_1,\ldots,G'_k$ have pairwise disjoint interiors.  By \cref{dumb_lemma}, there is a simple curve $I_j$:$[0,1]\to\interior(G'_{j})$ with endpoints $y_{j-1}$ and $y_j$ that is proper with respect to $G'_{j}$, for each $j\in\{1,\ldots,k\}$.  Since $G'_1,\ldots,G'_k$ have pairwise disjoint interiors, the curve obtained by taking the union of $I_1,\ldots,I_k$ is simple and it is proper with respect to $G$.  We complete this curve into a cycle by connecting $y_k$ to $y_0$ through the outer face of $G$.
    By \cref{equivalence}, $S$ is a free set of size $k$.

    To complete the proof, we use Dilworth's Theorem, which guarantees that the poset $(V(G),\prec)$ contains a chain of size at least $\sqrt{2n}$ or an antichain of size at least  $\sqrt{n/2}$. In either case, we obtain a free set of size at least $\sqrt{n/2}$.
\end{proof}

The following generalization of \cref{root_n}, observed by \citet{dujmovic:utility}, has an almost identical proof, except that one considers the induced poset $(X,\prec)$ rather than $(V(G),\prec)$:

\begin{thm}
   \label{root_x}
   For every planar graph $G$ and every $X\subseteq V(G)$, $X$ contains a free set of size at least $\Omega(\sqrt{|X|})$.
 \end{thm}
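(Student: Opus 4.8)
The plan is to replay the proof of \cref{root_n} almost verbatim, with the single change that Dilworth's Theorem is applied to the subposet of $(V(G),\prec)$ induced by $X$ rather than to $(V(G),\prec)$ itself. As before, I would first add edges to make $G$ a triangulation $G_0$; since $G$ is a spanning subgraph of $G_0$ and free sets are inherited by spanning subgraphs, any free set $S\subseteq X$ found in $G_0$ is a free set in $G$, so we may assume $G$ is a triangulation. Fixing a canonical ordering and the associated frame $F$ yields the partial order $(V(G),\prec)$ exactly as in \cref{root_n}. Restricting attention to $X$, Dilworth's Theorem guarantees that the induced poset $(X,\prec)$ contains either a chain or an antichain of size at least $\sqrt{|X|}$, both of whose elements lie in $X$. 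It remains to turn each of these into a free set that is a subset of $X$.

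The antichain case is the cleaner of the two. Given an antichain $A\subseteq X$ of size $\ell$, I would extend it to a maximal antichain $A'$ of $(V(G),\prec)$ (incomparabilities are inherited from the full order, so $A$ really is an antichain of $(V(G),\prec)$) and apply the curve construction from the antichain part of the proof of \cref{root_n} verbatim to $A'$. This produces a proper-good curve for $G$ passing through all of $A'$, so $A'$ is a free set by \cref{equivalence}. Since $A$ is a subset of $A'$ and hence, when both are ordered by the canonical ordering, a subsequence of the ordered free set $A'$, closure of free sets under subsequences shows that $A$ itself is an ordered free set. As $A\subseteq X$ has size $\ell=\Omega(\sqrt{|X|})$, this settles the antichain case.

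The chain case is where the proof genuinely deviates, and it is the step I expect to be the main obstacle. Extending a chain $A\subseteq X$ of size $\ell$ to a maximal chain of $(V(G),\prec)$ produces, exactly as in \cref{root_n}, a cycle $C_x$ bounding an induced outerplane subgraph $G_x$, all of whose vertices lie on $C_x$. The difficulty is that invoking \cref{outerplanar_is} as written selects a free subset of $V(G_x)$ of its own choosing, which need not lie in $X$; I therefore need a \emph{hereditary} version of that step. The key observation is that the chord graph $G_x'$, obtained from $G_x$ by deleting the edges of $C_x$, is outerplanar and hence $2$-degenerate, a property inherited by the induced subgraph $G_x'[X_x]$, where $X_x:=X\cap V(G_x)$. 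A greedy minimum-degree independent set $S$ in $G_x'[X_x]$ therefore has size at least $|X_x|/3\ge \ell/3$, and consists of vertices of $G_x$ no two of which span a chord; consequently the only edges of $G_x$ induced on $S$ lie on $C_x$, and the outer-face-tracing proper-good curve built in the proof of \cref{outerplanar_is} can be routed through exactly $S$.

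Finally, I would close the argument exactly as in \cref{root_n}: drawn sufficiently close to $C_x$, this curve meets each edge of $E(G)\setminus E(G_x)$ at most once (the only such edges leave $V(G_x)$ into the exterior of $C_x$), so it is proper-good for $G$ and, by \cref{equivalence}, $S\subseteq X$ is a free set in $G$ of size $\Omega(\ell)=\Omega(\sqrt{|X|})$. In either case we obtain a free set contained in $X$ of size $\Omega(\sqrt{|X|})$, as required. The reason this last step forces the degeneracy argument rather than a black-box appeal to the hereditary outerplanar bound of \cref{n_over_2} is that the free set guaranteed there comes with a horizontal-strip curve that need not extend to a proper-good curve of the ambient triangulation $G$; what the chord-graph independent set buys us is precisely a selection whose curve still hugs the boundary $C_x$, which is what makes the extension to $G$ go through. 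This is the only piece of real content beyond the proof of \cref{root_n}.
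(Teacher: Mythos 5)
Your proposal is correct and follows essentially the same route as the paper, which proves \cref{root_x} simply by repeating the proof of \cref{root_n} with Dilworth's Theorem applied to the induced poset $(X,\prec)$, exactly as you do. The one place where you go beyond the paper's one-sentence proof---replacing the black-box use of \cref{outerplanar_is} in the chain case by a hereditary selection, namely an independent set of size at least $|X\cap V(G_x)|/3$ in the chord graph of $C_x$ restricted to $X$ (via $2$-degeneracy of outerplanar graphs)---is precisely the detail the paper suppresses, and your argument for it is sound, up to one trivial caveat: if $C_x$ is chordless and your independent set happens to be all of $V(C_x)$, then $G_x[S]$ is a cycle rather than a forest of paths, and no proper \emph{good} closed curve can contain an entire cycle of the graph, so in that degenerate case you must drop a single vertex of $S$, which does not affect the $\Omega(\sqrt{|X|})$ bound.
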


\subsection{Free Sets in Max-Degree-$\Delta$ Planar Graphs}\label{sec:degree}

\citet{dalozzo.dujmovic.ea:drawing} suggest the possibility that, since upper bounds on the circumference of dual graphs can be used to obtain upper bounds on the size of free sets, maybe
lower bounds on circumference can be used to prove the existence of large collinear sets.  \citet{dujmovic.morin:dual} show that, for planar graphs of bounded degree, this is indeed the case.
In short, they show that a triangulation $G$ of maximum degree $\Delta$ whose dual has circumference $c(G^*)$ has a free set of size $\Omega(c(G^*)/\Delta^4)$. A series of results has steadily improved the lower bounds on the circumference of $n$-vertex  (not necessarily planar)  cubic triconnected graphs \cite{barnette:trees,bondy.simonovits:longest,jackson:longest,bilinksi.jackson.ea:circumference,liu.yu.zhang:circumference}.
 The current record is held by \citet{liu.yu.zhang:circumference} who show that, for any $n$-vertex cubic triconnected graph $G^*$, $c(G^*)\in\Omega(n^{0.8})$.

In the remainder of this subsection, we describe some of the techniques used in \cite{dujmovic.morin:dual} to establish the $\Omega(c(G^*)/\Delta^4)$ result.   Cycles in $G^*$ are relevant to free sets because every cycle in $G^*$ corresponds to a proper-good curve in $G$.  The resulting curve $C$ does not contain any vertices of $G$, but it is natural to try reroute $C$ to obtain a new curve that goes through some vertices of $G$. \cref{fig:move} (a) depicts a situation where this rerouting fails because the resulting curve is no longer proper because it intersects the red edge in \cref{fig:move}(b) in two points. This leads to the following definition that lays out conditions under which such rerouting of curve $C$ is safe.  We say that a vertex $v$ of $G$ is \defin{caressed} by $C$ if the edges of $G$ that are incident to $v$ and intersected by $C$ appear consecutively around $v$, as depicted in \cref{fig:move}(c) (the thick edges crossed by $C$ are consecutive around $v$ unlike in \cref{fig:move}(a) where the set of edges crossed by $c$ is not consecutive around $v$).  If $C$ caresses $v$ and intersects the edges $vv_1,\ldots,vv_{r}$ (see \cref{fig:move}(c)) then there is a sequence of faces $f_0,\ldots,f_{r}$ where $vv_i$ is the edge shared between $f_{i-1}$ and $f_i$. Then the portion of $C$ that intersects $f_0,\ldots,f_r$  can be replaced with a curve $C'$ that enters $f_0$, proceeds directly to $v$, and immediately exits $f_r$ (see \cref{fig:move}(d)).  Because the original curve $C$ does not intersect any other edges incident to $v$, the modified curve $C'$ is also a proper-good curve.  This operation can be repeated on any set of caressed vertices that form an independent set:

\begin{figure}
    \centering
    \begin{tabular}{cc}
        \includegraphics[page=3]{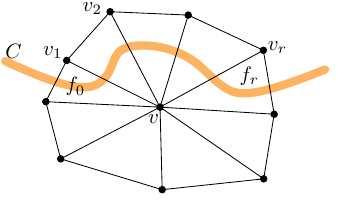} &
        \includegraphics[page=4]{figs/cycle-to-curve} \\
        (a) & (b) \\
        \includegraphics[page=1]{figs/cycle-to-curve} &
        \includegraphics[page=2]{figs/cycle-to-curve} \\
        (c) & (d) \\
    \end{tabular}
    \caption{Attempting to reroute a dual cycle (a proper-good curve $C$) into a new proper-good curve $C'$ so that it contains $v$.} \label{fig:move}
\end{figure}

\begin{lem}\label{reroute}
  Let $C$ be a proper-good curve in a triangulation $G$ and let $S$ be an independent set of at least two vertices in $G$ that are each caressed by $C$.  Then $S$ is a free set in $G$.
\end{lem}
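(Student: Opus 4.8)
The plan is to reduce the statement to \cref{equivalence} by producing a single proper-good curve that visits all of $S$. Let $\Gamma$ be the \embedding\ of $G$ with respect to which $C$ is proper and good. By \cref{equivalence} it suffices to show that $S$ is an (ordered) proper-good set, i.e.\ to exhibit, with respect to $\Gamma$, a proper-good curve $C'$ that passes through every vertex of $S$; the order in which $C'$ meets these vertices then supplies the required ordering of $S$. The curve $C'$ will be obtained from $C$ by applying, simultaneously to every vertex of $S$, the local rerouting operation described immediately before the lemma.

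First I would recall the single-vertex operation. Fix $v\in S$. Because $v$ is caressed by $C$, the edges incident to $v$ that $C$ crosses, say $vv_1,\ldots,vv_r$, are consecutive in the rotation around $v$, and they bound a fan $f_0,\ldots,f_r$ of triangular faces sharing the vertex $v$ (with $vv_i$ shared by $f_{i-1}$ and $f_i$). As explained before the lemma, the arc of $C$ running through this fan can be replaced by an arc $C_v$ that enters $f_0$, runs straight to $v$, and leaves through $f_r$; since $C$ meets no other edge incident to $v$, the result still crosses each edge of $G$ at most once (now meeting every edge incident to $v$ only at the point $v$), so it remains proper and good. Crucially, this modification is confined to the closed fan $f_0\cup\cdots\cup f_r$, which is a topological disk, and $C_v$ reconnects the same two boundary points as the arc it replaces, so simplicity is preserved.

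The heart of the argument is that these reroutings, one per vertex of $S$, do not interfere, and here the two hypotheses are exactly what is needed. Since every bounded face of a triangulation is a triangle, any face incident to two distinct vertices $u,v\in S$ would have $u$ and $v$ among its three vertices, forcing $uv\in E(G)$ and contradicting that $S$ is independent. Hence the fans used by distinct vertices of $S$ are pairwise disjoint sets of faces, the replaced arcs of $C$ are pairwise disjoint, and each replacement stays inside its own fan. I can therefore perform all $|S|$ reroutings at once to obtain a single curve $C'$. It is still a simple closed curve (each local change is a disk-confined arc replacement in a region disjoint from the others), and it is still proper and good: properness and goodness are preserved by each local change, the changes occur in pairwise disjoint regions, and choosing the basepoint $C'(0)$ in the interior of a face touched by none of the fans keeps $C'$ good. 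By construction, $C'$ passes through every vertex of $S$.

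Thus $C'$ is a proper-good curve for $G$ containing all of $S$, so ordering $S$ by first appearance along $C'$ makes $S$ an ordered proper-good set, and \cref{equivalence} upgrades this to the conclusion that $S$ is a free set. I expect the only real subtlety to be the non-interference step: one must check both that the replaced arcs of $C$ are disjoint and that the new arcs stay within their own fans, both of which follow from the triangle-plus-independence observation above. (The hypothesis $|S|\ge 2$ is not needed for the construction itself; it merely makes the conclusion non-trivial, since a single vertex is always free.)
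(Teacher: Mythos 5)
Your proof follows the paper's own argument essentially step for step: reroute $C$ locally at each vertex of $S$, use independence of $S$ to guarantee the local modifications do not interfere, and then invoke \cref{equivalence} to turn the resulting proper-good curve into a free set. Your non-interference argument via disjoint fans (in a triangulation, a face incident to two vertices of $S$ would force an edge between them) is a slightly more explicit version of the paper's edge-based remark, and your attention to simplicity and to the choice of basepoint is sound.

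However, your closing parenthetical is wrong, and it conceals the one case your construction does not cover. The hypothesis $|S|\ge 2$ is used by the paper precisely to exclude the case in which $C$ crosses \emph{all} edges incident to some $v\in S$. Such a $v$ is still caressed (its crossed edges are trivially consecutive in the rotation), but then your fan wraps all the way around $v$, $f_0=f_r$, and there is no entry/exit pair of faces. In fact, properness forces $C$ to be confined to the closed star of $v$ in this case: each fan face whose two edges at $v$ are both crossed exactly once must have its third edge crossed an even number of times, hence zero times, and when all spokes are crossed every fan face is of this type. So the ``arc of $C$ running through this fan'' is all of $C$, and replacing it by a path through $v$ cannot yield a simple closed curve. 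Your construction tacitly assumes that at each $v\in S$ the crossed edges form a \emph{proper} subset of the incident edges, and this needs justification. It does follow from $|S|\ge 2$ together with independence: if all edges at $v$ were crossed, then $C$, being trapped in the star of $v$, could cross no edge incident to any other (non-adjacent) vertex of $S$, leaving nothing to reroute there. But you must say this; and for $|S|=1$ the protection disappears entirely --- take $C$ to be a small circle around $v$ crossing all of its incident edges --- so the hypothesis is doing real work in the proof, not merely keeping the statement non-trivial. (The conclusion for a singleton survives only for the unrelated reason that any single vertex, indeed any pair of vertices, lies on a line in any \slcf\ and is therefore a collinear, hence free, set by \cref{equivalence}.)
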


\begin{proof}
   For each $v\in S$, reroute $C$ as described above so that $C$ contains $v$.  Since $v$ is caressed by $C$, the rerouting that takes place at $v\in S$ causes $C$ to intersect each edge incident to $v$ in exactly one point, namely $v$, but does not change the intersection of $C$ with any edges not incident to $v$.  Since the vertices in $S$ form an independent set, all of these rerouting operations do not cause $C$ to intersect any edge of $G$ in more than one point. Thus $C$ is a proper-good curve that contains $S$ so, by \cref{equivalence} $S$ is a free set in $G$. (The condition that $S$ has at least two vertices avoids the case in which $C$ crosses all the edges incident to a single vertex $v$.)
\end{proof}

The requirement that a curve $C$ caresses a vertex $v$ of $G$ is equivalent to requiring that the intersection of $C$ with the face $f_v$ of $G^*$ that contains $v$ is a path. When this happens, we say that $C$ \defin{caresses} $f_v$. Thus finding a large collinear set in $G$ is equivalent to finding a cycle in $G^*$ that caresses many faces of $G^*$.

\begin{lem}
  Let $G$ be a triangulation and let $G^*$ be the dual of $G$.  If some cycle $C$ in $G^*$ caresses $k$ faces of $G^*$, then $G$ contains a free set of size at least $k/4$.
\end{lem}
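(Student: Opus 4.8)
The plan is to translate the hypothesis about the dual cycle back into the primal graph and then apply \cref{reroute}. First I would invoke the dictionary already established just above the statement: a cycle $C$ in $G^*$ is a proper-good curve in $G$ that passes through no vertex of $G$, and $C$ caressing a face $f_v$ of $G^*$ is by definition the same as $C$ caressing the vertex $v$ of $G$. Hence the $k$ faces of $G^*$ caressed by $C$ correspond to a set $T$ of exactly $k$ vertices of $G$, each of which is caressed by the proper-good curve $C$.

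The one thing standing between $T$ and \cref{reroute} is that the latter requires its caressed vertices to form an \emph{independent} set, whereas $T$ need not be independent. This is precisely what forces the factor of $4$ and is the crux of the argument. To supply independence I would observe that $G[T]$, as a subgraph of the planar graph $G$, is itself planar, so by the Four Colour Theorem it is properly $4$-colourable; its largest colour class is an independent set $S \subseteq T$ with $|S| \ge k/4$. Every vertex of $S$ is caressed by $C$ since $S \subseteq T$.

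Finally, applying \cref{reroute} to the proper-good curve $C$ and the independent set $S$ of caressed vertices shows that $S$ is a free set in $G$, of size at least $k/4$, as required. The only edge case is when $k/4 < 2$, where \cref{reroute} does not directly apply; this is handled trivially, since a single vertex is always a free set. I do not expect a genuine obstacle here: once the ``caress in $G^*$ $\Leftrightarrow$ caress in $G$'' correspondence and \cref{reroute} are available, the proof is a two-line reduction, and the factor $4$ is simply the price of converting the arbitrary caressed set $T$ into an independent one via $4$-colouring.
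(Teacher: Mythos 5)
Your proof is correct and takes essentially the same approach as the paper: the paper likewise applies the Four Colour Theorem (to the faces of $G^*$, which is equivalent to your colouring of $G[T]$) to extract an independent subset of the caressed vertices of size at least $k/4$, and then invokes \cref{reroute}. Your explicit handling of the small-$k$ edge case (where \cref{reroute} needs at least two vertices) is a minor refinement the paper leaves implicit.
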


\begin{proof}
  By the Four Colour Theorem \cite{robertson.seymour.ea:four-colour}, the faces of $G^*$ can be coloured with four colours so that no two faces that share an edge are assigned the same colour.  One of the resulting colour classes contains at least $k/4$ faces of $G^*$ that are caressed by $C$.  These faces correspond to an independent set $S$ of vertices of $G$ that are caressed by $C$.  The lemma now follows from \cref{reroute}.
\end{proof}

Unfortunately, there exist cubic triconnected planar graphs $G^*$ with no large faces that have a Hamiltonian cycle that caresses only four faces of $G^*$. One such graph is shown in \cref{two_caressed}.  The main technical result of \citet{dujmovic.morin:dual} is to show that such cycles can be modified to produce cycles that caress many faces:

\begin{figure}
    \centering
    \includegraphics{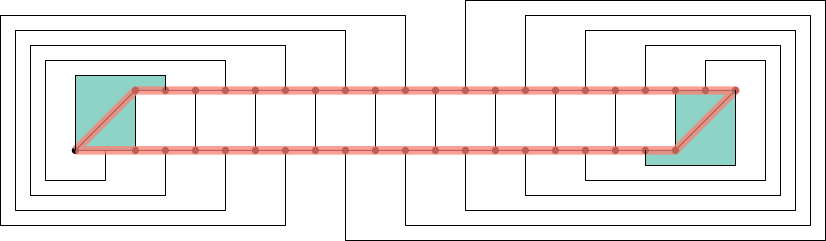}
    \caption{A Hamiltonian cycle in a cubic triconnected planar graph that caresses only four faces.}
    \label{two_caressed}
\end{figure}

\begin{lem}[\cite{dujmovic.morin:dual}]\label{dual_caressed}
    If $G^*$ is an $n$-vertex cubic triconnected planar graph with no faces of size greater than $\Delta$, then there exists a cycle $C$ in $G^*$ that caresses at least $\Omega(c(G^*)/\Delta^4)$ faces of $G^*$.
\end{lem}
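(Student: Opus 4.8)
The plan is to start from a \emph{longest} cycle $C$ in $G^*$, so that $|C|=c(G^*)$, to fix a planar embedding of $G^*$, and then to repair $C$ by a sequence of local reroutings until it caresses many faces. The first step is to restate caressing in the form that drives the argument. Since $G^*$ is cubic and $3$-connected, every face is bounded by a simple cycle of length at most $\Delta$, and every vertex of $C$ has exactly one incident edge that leaves $C$. For a face $f$, the edges that $C$ shares with the boundary cycle $\partial f$ split into maximal arcs $A_1,\dots,A_t$; by the reformulation preceding \cref{reroute}, $f$ is caressed exactly when $t\le 1$. Between two consecutive arcs there is a \emph{gap}: a sub-path $P$ of $\partial f$ whose two endpoints lie on $C$, whose interior edges avoid $C$, and whose length is at most $\Delta$.

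The second step is the rerouting operation together with the bookkeeping needed to make it global. A gap path $P$ has both endpoints on $C$, so it splits $C$ into two arcs; replacing the shorter of these two arcs by $P$ produces a new cycle $C'$ along which the entire remaining boundary of $f$ is a single arc, so $f$ is now caressed. Two features make this promising. First, because $C$ is a longest cycle, no reroute can lengthen it; this extremality is exactly what bounds how deep the bridges hanging off $C$ can be, and hence how much length a single reroute can cost. Second, the faces touched by $C$ are plentiful: each of the $|C|$ edges of $C$ borders two faces and each face borders at most $\Delta$ edges of $C$, so $C$ already touches at least $2c(G^*)/\Delta$ distinct faces. The target $\Omega(c(G^*)/\Delta^4)$ is therefore only a $1/\mathrm{poly}(\Delta)$ fraction of the faces that $C$ meets, which is what gives room to be wasteful.

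The third step is the counting. I would set up a charging argument in which each edge of the final cycle is charged to a caressed face in its vicinity, arguing that after the reroutings every edge lies within a neighbourhood of bounded size (in terms of $\Delta$) of some caressed face, and that each caressed face can absorb only $O(\Delta)$ charge per nested layer of the argument. Accumulating the $\Delta$ factors, one to convert cycle length into a count of distinct faces and three more from the layers of the charging/extremality argument that certify each reroute makes genuine progress without being undone, yields a cycle that caresses $\Omega(c(G^*)/\Delta^4)$ faces, which is the claim.

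The hard part will be the interaction between reroutings. A single reroute that caresses $f$ can change the arc structure of \emph{other} faces sharing $P$ or the deleted arc, possibly merging or, worse, splitting their arcs and un-caressing faces that were already fixed; it can also shorten $C$, and if shortenings compound we could destroy the linear-in-$c(G^*)$ supply of touched faces. Taming this requires a carefully chosen order (or a potential function) for the reroutes, together with the extremality of $C$ to certify that each step decreases a global defect measure while losing only a controlled, $\Delta$-bounded amount of length and only a $\Delta$-bounded number of already-caressed faces. Getting all four $\Delta$ factors to be simultaneously affordable, rather than multiplying uncontrollably, is the crux, and is precisely where the $\Delta^4$ in the statement is paid.
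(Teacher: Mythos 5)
Your opening matches the start of the actual argument: a longest cycle, the count of at least $2|C|/\Delta$ touched faces split between inside and outside, and the reformulation of caressing as ``the shared boundary with $C$ is a single arc.'' But two of your supporting claims are false, and the crux of the proof is deferred rather than supplied. First, closing one gap $P$ of a face $f$ does not caress $f$ in general: if $\partial f\cap C$ consists of arcs $A_1,\dots,A_t$ with $t\ge 3$, the reroute merges only the two arcs adjacent to $P$, and any arc surviving on the kept side of $C$ remains a separate component of $C'\cap\partial f$. Second, extremality of $C$ does not bound the cost of a reroute. Longest-ness gives $|P|\le\min(|Q_1|,|Q_2|)$, where $Q_1,Q_2$ are the two arcs of $C$ determined by the endpoints of $P$; that is, it forbids lengthening, but says nothing useful about shortening: the discarded arc can have length close to $|C|/2$, so a single reroute may lose a constant fraction of the cycle, and since you need $\Omega(|C|/\Delta^4)$ successful reroutes these losses compound and destroy the very supply of touched faces your counting relies on. Finally, the charging argument that is supposed to produce the $\Delta^4$ is never instantiated: you do not say what quantity increases monotonically, why already-caressed faces are not repeatedly un-caressed, or why every edge of the final cycle ends up near a caressed face. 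The ``carefully chosen order or potential function'' you invoke in the last paragraph is precisely the content of the theorem, and it is absent.

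For comparison, the proof in \cite{dujmovic.morin:dual} does not repair faces one at a time. It fixes a subgraph $H\supseteq C$ whose dual, after deleting the edges dual to $C$, splits into two trees $T_0$ (faces of $H$ inside $C$) and $T_1$ (faces outside), and establishes two structural facts: every leaf of $T_b$ contains a face of $G^*$ caressed by $C$, and a node of degree $\delta$ in $T_b$ containing $\tau$ touched faces, $\kappa$ of them caressed, satisfies $3\kappa+2\delta\ge\tau$. Leaf counting then shows that either $C$ already caresses $\Omega(|C|)$ faces, or almost all nodes of both trees have degree two. In the latter case the surgery is a swap of a bounded-size (in terms of $\Delta$) path of degree-two nodes of $T_0$ with one of $T_1$, which increases the leaf count by at least one while shortening the cycle by only a constant; iterating $\Omega(|C|)$ times yields a cycle caressing $\Omega(c(G^*)/\Delta^4)$ faces. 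The per-step boundedness of that surgery---exactly the property your global gap-reroute lacks---is what makes the iteration affordable.
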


\begin{proof}[Proof Sketch]
Refer to \cref{surgery_fig}.
The proof of \cref{dual_caressed} is far too long to include here in any detail, so we give a high-level sketch.  In the following informal sketch, we treat $\Delta$ as a fixed constant.  Let $C$ be a cycle in $G^*$. Say that $C$ \defin{touches} a face $f$ of $G^*$ if $C$ and $f$ share at least one edge. Since each face of $G^*$ has at most $\Delta$ edges and each edge of $C$ touches two faces of $G^*$ (one inside $C$ and one outside of $C$), the number of faces of $G^*$ that are touched by $C$ is at least $2|C|/\Delta$.  At least $|C|/\Delta$ of these faces are in the interior of $C$ and at least $|C|/\Delta$ of these are in the exterior of $C$.

The proof defines a subgraph $H$ of $G^*$ that includes all the edges of $C$ and such that any cycle in the dual graph $H^*$ of $H$ contains faces inside and outside of $C$.  Removing the edges from $H^*$ that correspond to edges of $C$ produces two trees $T_0$ and $T_1$,  where $T_0$ contains faces of $H$ in the interior of $C$ and $T_1$ contains faces of $H$ in the exterior of $C$.  For each $b\in\{0,1\}$, the tree $T_b$ has two important properties:
\begin{compactenum}
    \item Each leaf of $T_b$ corresponds to a face $f$ of $H$ that contains at least one face of $G^*$ that is caressed by $C$.
    \item Let $f\in V(T_b)$ be a face of $H$ that has degree $\delta$ in $T_b$ and that contains $\tau$ faces of $G^*$ touched by $C$, $\kappa$ of which are caressed by $C$.  Then $3\kappa+2\delta \ge \tau$.
\end{compactenum}
The second of these properties says that any node of $T_b$ that contains many faces touched by $C$ either caresses many of these faces or has a high degree in $T_b$.  The number of leaves in $T_b$ is $2+\sum_{x}(\deg_{T_b}(x)-2)$, where the sum is over all non-leaf nodes $x$ of $T_b$.  Combining this with the first of these properties implies that $C$ caresses $\Omega(|C|)$ faces of $G^*$ or that $T_b$ has $\Omega(|C|)$ nodes but at most $\epsilon |C|$ leaves, for some small $\epsilon >0$.  In the former case, we are done. In the latter case, we conclude that the vast majority of nodes in $T_b$ have degree two.

\begin{figure}
    \centering
    \begin{tabular}{cc}
      \includegraphics[page=1]{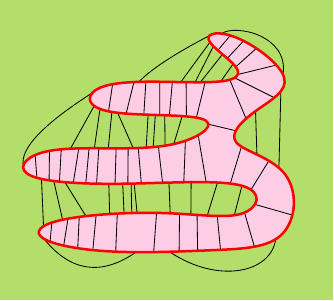} &
      \includegraphics[page=2]{figs/surgery.pdf}
    \end{tabular}
    \caption{Performing surgery on the trees $T_0$ (pink) and $T_1$ (green) in order to increase the number of leaves.}
    \label{surgery_fig}
\end{figure}
Since $T_0$ and $T_1$ share the boundary $C$, this can be used to show that $T_0$ contains a path $P_0$ of degree-$2$ nodes and $T_1$ contains a path $P_1$ of degree-$2$ nodes, each of size bounded by a constant (depending on $\Delta$) such that moving the nodes in $P_0$ from $T_0$ to $T_1$ and moving the nodes in $P_1$ from $T_1$ to $T_0$ results in two new subgraphs $T_0''$ and $T_1''$ of $H^*$ that contain at least one more leaf than $T_0$ and $T_1$.  The graphs $T_0''$ and $T_1''$ are not necessarily trees, but the boundary they share defines a new cycle $C'$ from which we can define two trees $T_0'$ and $T_1'$ as before.  These two new trees differ from $T_0$ and $T_1$ only in a small neighbourhood of the paths $P_0$ and $P_1$, which ensures that $T_0'$ and $T_1'$ also have at least one more leaf than $T_0$ and $T_1$. At this point, the procedure is repeated on $C'$.  Each application of this procedure may decrease the length of the cycle of $C$ by at most a constant (for a fixed $\Delta$), so this procedure can be repeated $\Omega(|C|)$ times, which produces a cycle $\tilde{C}$ that defines two trees $\tilde{T}_0$ and $\tilde{T}_1$ having a total of $\Omega(|C|)$ leaves, each of which contains a face of $G^*$ caressed by $C$.
\end{proof}

If $G^*$ is the dual of a triangulation $G$ then the size of each face of $G^*$ is the degree of the corresponding vertex in $G$.  This gives the following consequence of \cref{dual_caressed}.

\begin{cor}\label{dual_triangulation_circumference}
  Let $G$ be an $n$-vertex triangulation of maximum degree $\Delta$ and let $G^*$ be the dual of $G$.  Then $G$ has a free set of size $\Omega(c(G^*)/\Delta^4)$.
\end{cor}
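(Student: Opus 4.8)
The plan is to obtain \cref{dual_triangulation_circumference} as a direct composition of \cref{dual_caressed} with the preceding lemma that converts caressed faces into a free set, so the only real work is the duality bookkeeping needed to check that $G^*$ satisfies the hypotheses of \cref{dual_caressed}. First I would confirm that $G^*$ is a cubic triconnected planar graph. Since $G$ is a triangulation, every inner and outer face of $G$ is bounded by exactly three edges, so each vertex of $G^*$ (corresponding to a face of $G$) has degree three and $G^*$ is cubic. Because every triangulation on at least four vertices is $3$-connected, and the dual of a $3$-connected planar graph is again $3$-connected (Whitney), $G^*$ is triconnected; the remaining small cases can be handled directly, as the claimed bound is essentially vacuous there. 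This places $G^*$ inside the class of graphs to which \cref{dual_caressed} applies.

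Next I would bound the face sizes of $G^*$. Under planar duality the faces of $G^*$ are in bijection with the vertices of $G$, and the number of edges on the boundary of a face of $G^*$ equals the degree in $G$ of the corresponding vertex, exactly as noted in the paragraph preceding the corollary. Since $G$ has maximum degree $\Delta$, no face of $G^*$ has size greater than $\Delta$, which is precisely the hypothesis required. Applying \cref{dual_caressed} then yields a cycle $C$ in $G^*$ that caresses $k \in \Omega(c(G^*)/\Delta^4)$ faces of $G^*$. I would then invoke the earlier lemma stating that a cycle in $G^*$ caressing $k$ faces produces a free set in $G$ of size at least $k/4$ (the lemma whose proof four-colours the caressed faces and keeps the largest colour class, which forms an independent set of caressed vertices to which \cref{reroute} applies). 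The resulting free set has size $\Omega(c(G^*)/\Delta^4)/4 = \Omega(c(G^*)/\Delta^4)$, since the constant factor $1/4$ is absorbed into the asymptotic notation, giving the claim.

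There is no substantive obstacle here, as the corollary simply chains two previously established results together; the entire content is the translation of ``maximum degree $\Delta$'' for $G$ into ``maximum face size $\Delta$'' for $G^*$, together with the observation that triangulations dualize to cubic triconnected planar graphs. The only point warranting a moment of care is to apply the $\Delta$-bound to the \emph{face sizes} of $G^*$ rather than to its vertex degrees (which are all three); these two quantities are governed by opposite sides of the duality, and it is the face-size bound that \cref{dual_caressed} consumes.
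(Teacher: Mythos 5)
Your proposal is correct and follows exactly the route the paper intends: the paper's ``proof'' is precisely the remark that face sizes of $G^*$ equal vertex degrees of $G$, followed by chaining \cref{dual_caressed} with the preceding four-colouring lemma to convert caressed faces into a free set of size $k/4$. Your additional verifications (that $G^*$ is cubic and, via Whitney, triconnected) are the right bookkeeping details, which the paper leaves implicit.
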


\cref{dual_triangulation_circumference} shows that, for triangulations of maximum degree $\Delta=O(1)$,  the dual circumference and the size of the largest free set differ by only a constant factor.  

A result of \citet{kant.bodlaender:triangulating} shows that one can add edges to any planar graph $G_0$ of maximum degree $\Delta$ to obtain a triangulation $G$ of maximum degree at most $\lceil 3\Delta/2\rceil+11$. Combining this with the best known lower bounds on the circumference of cubic triconnected graphs \cite{liu.yu.zhang:circumference} and \cref{dual_triangulation_circumference} gives:

\begin{cor}\label{dual_planar_concrete}
  Let $G$ be an $n$-vertex planar graph of maximum degree $\Delta\ge 1$.  Then $G$ has a free set of size $\Omega(n^{0.8}/\Delta^4)$.
\end{cor}

\section{Applications}
\label{applications}

In this section, we discuss applications of free sets.  For each of these ($s$) applications, and each of the ($t$) classes of planar graphs known to have large free sets, the use of free sets gives a result.  Rather than providing an exhaustive list of $s\times t$ corollaries, we state the relationship between free sets and the application and give the most general corollary, the one about planar graphs obtained from \cref{root_n}.

\subsection{Untangling}

As discussed in the introduction, the original application of free sets was untangling. By \cref{free_to_fix}, each lower bound (on the size of the largest free set in a graph family) obtained in \cref{sec:linear,sec:planar,sec:degree}, readily turns into the square root of the same bound for the untangling problem on that family. For example, \cref{root_n,free_to_fix} yield:

\begin{cor}\label{untangling}
    Let $\mathcal G$ be the class of all planar graphs. For every $n$-vertex graph $G\in {\mathcal G}$,  $\fix(G)\ge (n/2)^{1/4}$.  In other words, $\fix_{\mathcal{G}}(n)\ge (n/2)^{1/4}$. 
\end{cor}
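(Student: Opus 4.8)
The statement to prove is Corollary~\ref{untangling}: for every $n$-vertex planar graph $G$, $\fix(G)\ge(n/2)^{1/4}$, equivalently $\fix_{\mathcal{G}}(n)\ge(n/2)^{1/4}$.

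The plan is to simply compose the two results already established in the excerpt. First I would invoke \cref{root_n}, which guarantees that every $n$-vertex planar graph $G$ has a free set of size at least $\sqrt{n/2}$. Set $k:=\sqrt{n/2}$ (more precisely, $k=\lceil\sqrt{n/2}\rceil$, but we only need $k\ge\sqrt{n/2}$, so I will just take the size of the guaranteed free set, which is at least $\sqrt{n/2}$). Second I would apply \cref{free_to_fix}, which states that any planar graph possessing a free set of size at least $k$ satisfies $\fix(G)\ge\sqrt{k}$. Feeding in $k=\sqrt{n/2}$ yields
\[
\fix(G)\;\ge\;\sqrt{k}\;=\;\sqrt{\sqrt{n/2}}\;=\;(n/2)^{1/4}.
\]

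There is essentially no obstacle here, since both lemmas do all the work: \cref{root_n} produces the free set via the canonical-ordering/Dilworth argument, and \cref{free_to_fix} converts a free set of size $k$ into a fixable set of size $\sqrt{k}$ using Dilworth's Theorem on the partial order induced by the $x$-coordinates of an arbitrary drawing. The only point requiring a word of care is the monotonicity of the composition: \cref{free_to_fix} is stated for a free set of size \emph{at least} $k$, and since a free set of size $\ge\sqrt{n/2}$ certainly contains a subfree set of exactly the floor, one may apply the lemma with $k=\lfloor\sqrt{n/2}\rfloor$ if one is fussy about integrality; the resulting bound $\fix(G)\ge\sqrt{\lfloor\sqrt{n/2}\rfloor}$ is still $\Omega(n^{1/4})$. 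Writing the clean bound $(n/2)^{1/4}$ just uses that any subsequence of a free set is free, so the free set of size $\ge\sqrt{n/2}$ may be truncated as needed without shrinking below $\sqrt{n/2}$ in the asymptotic sense.

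Finally, since the chain of inequalities holds for every $n$-vertex planar graph $G$, it holds for the minimizing graph in the family $\mathcal{G}$, giving $\fix_{\mathcal{G}}(n)=\min\{\fix(G):G\in\mathcal{G},\,|V(G)|=n\}\ge(n/2)^{1/4}\in\Omega(n^{1/4})$, which is exactly the claimed statement. The whole argument is a two-line corollary, and I expect no step to be a genuine obstacle; the substance lives entirely in \cref{root_n} and \cref{free_to_fix}.
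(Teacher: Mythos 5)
Your proposal is correct and is exactly the paper's own argument: the paper obtains this corollary by composing \cref{root_n} (a free set of size at least $\sqrt{n/2}$) with \cref{free_to_fix} (a free set of size $k$ gives $\fix(G)\ge\sqrt{k}$), yielding $\fix(G)\ge(n/2)^{1/4}$. Your extra remarks about integrality and truncating the free set are harmless but unnecessary, since \cref{free_to_fix} is stated for ``size at least $k$'' and $\sqrt{\cdot}$ is monotone.
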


We do not know if this bound is asymptotically optimal. \citet{cano.toth.ea:upper} proved that for the class of planar graphs $\mathcal G$, $\fix_{\mathcal{G}}(n)\in O(n^{0.4965})$.

In contrast, applying the known bounds on the largest free sets coupled with \cref{free_to_fix} gives tight bounds for some graph families. Specifically, the linear bounds on free sets in \cref{sec:linear} and \cref{free_to_fix} imply an $\Omega(n^{1/2})$ untangling bound for these families. \citet{bose.dujmovic.ea:untangling} showed that there is a class of trees, namely forests of stars, ${\mathcal S}$, such that  $\fix_{\mathcal{S}}\in O(n^{1/2})$. Thus the $\Omega(n^{1/2})$  untangling bound is tight for all families in \cref{sec:linear},  except for the class of cubic triconnected planar graphs. The stars in ${\mathcal T}$ have unbounded degree. Thus a tight bound on untangling cubic triconnected planar graphs is unknown.

\subsection{Universal point subsets}

A set of points $P$ is \defin{universal} for a set of planar graphs if every graph from the set has a \slcf\ where all of its vertices map to distinct points in $P$. Arguably, the most famous open problem in graph drawing, attributed to Bojan Mohar (1988), asks if there exists a constant $c$ such that, for every $n$ there exists a pointset of size $c\cdot n$ that is universal for the class of all $n$-vertex planar graphs.  It is known that for all large enough $n$,  no universal pointset of size $n$ exists for the class of $n$-vertex planar graphs -- as first proved by \citet{defraysseix.pach.ea:how}. The same authors also proved that the $O(n) \times O(n) $ integer grid is universal for all $n$-vertex planar graphs and thus universal pointsets of size $O(n^2)$ exist. Currently the best known lower bound  on the size of a smallest universal pointset for $n$-vertex planar graphs
is $1.239n-o(n)$ \cite{scheucher.schrezenmaier.ea:note}
and the best known upper bound is $n^2/4 - O(n)$ \cite{bannister.cheng.ea:superpatterns}. Closing the gap between $\Omega(n)$ and $O(n^2)$ is a major, and likely very difficult, graph drawing problem, open since 1988 \cite{defraysseix.pach.ea:small,defraysseix.pach.ea:how}.

Various related notions have been introduced and studied in the literature with the aim of better understanding the universal pointset problem.  A set $P$ of $k$ points in the plane is a \defin{universal point subset} for a family of planar graphs, if for every graph $G$ in the family, there exists a \slcf\ in which $k$ vertices of $G$ are placed at the $k$ points in $P$.  Universal point subsets were introduced by \citet{DBLP:conf/isaac/AngeliniBEHLMMO12}. They proved that a particular convex chain of $\lceil\sqrt{n}\rceil$ points is a universal point subset for the class of all $n$-vertex planar graphs. \citet{DBLP:journals/corr/abs-1212-0804} then proved that every set of $\Omega(\log n)$ points, as well as, every set of $\lceil n^{1/3}\rceil$ points in convex position is a universal point subset for the class of $n$-vertex planar graphs.

The following lemma and its consequences, first observed by \citet{dujmovic:utility}, is immediate from the definition of free sets.

\begin{lem}\label{lem:ups}
    Let $k$ be a positive integer and $\mathcal{F}$ a family of planar graphs such that every graph in the family has a free set of size at least $k$. Then every set of $k$ points in the plane is a universal point subset for $\mathcal{F}$.
\end{lem}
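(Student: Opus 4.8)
The plan is to simply unwind the two definitions. Fix an arbitrary set $P=\{p_1,\ldots,p_k\}$ of $k$ points in the plane and an arbitrary graph $G\in\mathcal{F}$; the goal is to produce a \slcf\ of $G$ placing $k$ of its vertices onto the points of $P$. By hypothesis $G$ has a free set of size at least $k$, and since any subsequence of an ordered free set is again an ordered free set, I would pass to an ordered free set $S=(v_1,\ldots,v_k)$ of size exactly $k$.

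The substance of the argument is the remark preceding \cref{free_to_fix}, which I would reuse: a free set of size $k$ can be realized on \emph{any} $k$-point set, not merely on point sets with distinct \xx-coordinates. Concretely, if the points of $P$ have pairwise distinct \xx-coordinates, I would relabel them as $p_i=(x_i,y_i)$ with $x_1<\cdots<x_k$, and the definition of free set immediately gives a \slcf\ of $G$ in which $v_i$ sits at $p_i$, so the vertices of $S$ occupy all of $P$. If instead some points of $P$ share an \xx-coordinate, I would first rotate the plane by an angle (all but finitely many angles work) so that the rotated set $\rho(P)$ has distinct \xx-coordinates, apply the previous case to $\rho(P)$, and then apply $\rho^{-1}$ to the resulting drawing; since a rigid motion preserves straightness of segments and the absence of crossings, this yields a \slcf\ of $G$ placing $S$ on $P$. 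As $G$ and $P$ were arbitrary, $P$ is a universal point subset for $\mathcal{F}$.

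There is essentially no obstacle to overcome: everything follows directly from the definition of free set once a $k$-element free subset has been extracted, which is precisely why the lemma is flagged as immediate. The only point requiring the slightest care is the degenerate case of repeated \xx-coordinates, and that is dispatched by the rotation trick already established earlier in the section.
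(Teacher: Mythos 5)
Your proposal is correct and matches the paper's approach: the paper states this lemma as immediate from the definition of free sets, relying on exactly the same ingredients you use — extracting a $k$-element ordered free subset and invoking the rotation remark (established earlier in the paper) to handle point sets with repeated \xx-coordinates. You have simply written out explicitly what the paper leaves implicit.
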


Note that this lemma does not just prove that there exists a set of $k$ points that is a universal point subset for $\mathcal{G}$, it proves that \emph{every} set of $k$ points is a universal point subset for $\mathcal{G}$. This distinction turns out to be useful in other applications.  Combining \cref{lem:ups,root_n} gives:

\begin{cor}
    Every set of $\lceil \sqrt{n/2}\rceil$ points is a universal point subset for the class of $n$-vertex planar graphs.
\end{cor}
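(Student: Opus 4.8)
The plan is to combine the two results immediately preceding this corollary, \cref{root_n} and \cref{lem:ups}; the argument is a direct composition of the two. First I would fix $k := \lceil\sqrt{n/2}\rceil$ and let $\mathcal{F}$ denote the family of all $n$-vertex planar graphs. By \cref{root_n}, every graph $G\in\mathcal{F}$ has a free set of size at least $\sqrt{n/2}$. Since the size of a free set is a nonnegative integer, any integer that is at least $\sqrt{n/2}$ is at least the smallest such integer, namely $\lceil\sqrt{n/2}\rceil=k$; hence this free set in fact has size at least $k$. Invoking the property noted just after the definitions---that any subsequence of an ordered free set is itself an ordered free set---I could, if desired, extract a free set of size exactly $k$, though for the purposes of \cref{lem:ups} a lower bound of $k$ already suffices.

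With this observation in hand, the proof is completed by a single application of \cref{lem:ups} to the family $\mathcal{F}$ and the integer $k$. Its hypothesis (that every graph in the family has a free set of size at least $k$) is exactly what was just verified, and its conclusion is precisely the statement of the corollary: every set of $k=\lceil\sqrt{n/2}\rceil$ points in the plane is a universal point subset for the class of $n$-vertex planar graphs.

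There is essentially no obstacle to overcome here, since all the substantive work has already been carried out in establishing \cref{root_n} (the combinatorial and geometric heart of the argument, via canonical orderings and Dilworth's Theorem) and \cref{lem:ups} (which is itself immediate from the definition of a free set). The only point warranting the slightest care is the rounding step: one must observe that a lower bound of $\sqrt{n/2}$ on an integer-valued quantity promotes to the lower bound $\lceil\sqrt{n/2}\rceil$, so that the number of points named in the corollary matches the free-set size supplied by \cref{root_n}.
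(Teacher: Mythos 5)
Your proof is correct and is exactly the paper's argument: the corollary is obtained by combining \cref{root_n} with \cref{lem:ups}, and your extra observation that an integer-valued free-set size at least $\sqrt{n/2}$ is automatically at least $\lceil\sqrt{n/2}\rceil$ is a fine (and valid) way to make the rounding explicit.
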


It has been known for a long time that every set of $n$ points in general position is a universal point (sub)set for the family of $n$-vertex outerplanar graphs \cite{GMPP,DBLP:conf/gd/Bose97,DBLP:conf/cccg/CastanedaU96}.
\cref{lem:ups} implies that all the families studied in \cref{sec:linear}, also admit universal point subsets of linear size. This includes, for example, planar graphs of treewidth at most $3$  by \cref{fs-tw3}, a strict superclass of the class of outerplanar graphs.

\subsection{Simultaneous Geometric Embeddings}\label{sim}

Simultaneous geometric embeddings were introduced by
\citet{DBLP:journals/comgeo/BrassCDEEIKLM07}. Since then, there has been a plethora of work on the subject on many variants of the problem-- see, for example, 
a survey by \citet{BKR-HGD}. Common variants of the problem include those in which the mapping between the vertices of the two graphs is given and those in which the mapping is not given.

\subsubsection{Without Mapping}

A sequence of graphs $G_1, G_2, \dots, G_r$  are said to have a \defin{simultaneous geometric embedding without mapping (SGE-nomap)} if there exists a pointset $P$ of size $\max\{|V(G_1)|,\ldots,|V(G_r)|\}$ such that
each of $G_1,\ldots,G_r$ has a \slcf\ where all of its vertices are mapped to distinct points in $P$. The following well known and still wildly open problem was asked by \citet{DBLP:journals/comgeo/BrassCDEEIKLM07} in 2003, and is also listed among selected list of graph drawing problems in \cite[Problem~12]{DBLP:conf/gd/BrandenburgEGKLM03}: Does every pair of $n$-vertex planar graphs have a SGE-nomap (for every positive integer $n$)? The statement is known \emph{not} to be true when ``pair'' (that is, $r=2$) is replaced by a bigger constant \cite{DBLP:journals/jgaa/CardinalHK15, scheucher.schrezenmaier.ea:note}, currently being $r=30$  \cite{DBLP:conf/gd/Steiner23}. More generally, \citet{DBLP:conf/gd/Steiner23} showed that for every large enough $n$ there exists a sequence of at most $(3 + o(1)) \log_2 (n)$ $n$-vertex planar graphs that do not have a SGE-nomap.

While the problem of \citet{DBLP:journals/comgeo/BrassCDEEIKLM07} still seems to be out of reach, in a different direction \citet{angelini.evans.ea:sefe} write: ``What is the largest
$k \leq n$ such that every $n$-vertex planar graph and every
$k$-vertex planar graph admit a geometric simultaneous embedding without mapping? Surprisingly, we are not aware of any super-constant lower bound for the value of $k$?'' The corollary below, noticed first by \citet{dujmovic:utility}, answered their question with the help of free sets, as detailed in the next lemma.

\begin{lem}\label{lem:sim}
Let $G_1$ be a planar graph with a free set of size at least $k$. Let $G_2$ be a planar graph on at most $k$ vertices. Then $G_1$ and $G_2$ admit SGE-nomap. 
\end{lem}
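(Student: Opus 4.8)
The plan is to build a single point set $P$ of size $|V(G_1)|$ by first fixing an arbitrary drawing of $G_2$ and then using the free set of $G_1$ to force a drawing of $G_1$ whose vertex positions \emph{contain} the points used by $G_2$. Write $n_1:=|V(G_1)|$ and $n_2:=|V(G_2)|$. Since $G_1$ has a free set of size at least $k$ we have $n_1\ge k$, and by hypothesis $n_2\le k\le n_1$; in particular $n_1\ge n_2$, so the sequence $(G_1,G_2)$ respects the size ordering required by the definition of SGE-nomap, and the common point set should have size $n_1$.

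First I would fix, by Fáry's Theorem, a \slcf\ $\Gamma_2$ of $G_2$, and let $Q$ be the set of $n_2$ distinct points occupied by the vertices of $G_2$ in $\Gamma_2$. This disposes of $G_2$: it is already drawn without crossings on $Q$, and all that will remain is to guarantee that $Q$ is a subset of the final common point set $P$. Next I would invoke the free set of $G_1$. Let $S$ be a free set of $G_1$; discarding elements if necessary, assume $|S|=k$ (any subsequence of a free set is free). Choose a set $P_0$ of exactly $k$ distinct points consisting of the $n_2$ points of $Q$ together with $k-n_2$ arbitrary further distinct points. By the free set property — concretely, by the remark following the definition of free sets stating that any unordered free set can be mapped onto any point set of the same size, or equivalently by \cref{lem:ups} — $G_1$ has a \slcf\ $\Gamma_1$ in which the vertices of $S$ are mapped (bijectively, since $|S|=|P_0|=k$) onto $P_0$. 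Crucially, I would make no attempt to control where the remaining $n_1-k$ vertices of $G_1$ are placed; instead I would simply let $P$ be the set of all $n_1$ vertex positions used by $\Gamma_1$.

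It then remains only to read off the two conclusions. Since $\Gamma_1$ is a valid drawing its $n_1$ vertices occupy $n_1$ distinct points, so $|P|=n_1$; and since the free-set vertices of $G_1$ occupy exactly $P_0\supseteq Q$, we have $Q\subseteq P_0\subseteq P$. Thus $\Gamma_1$ draws $G_1$ without crossings with all of its vertices on $P$, while $\Gamma_2$ draws $G_2$ without crossings with all of its vertices on $Q\subseteq P$, so $P$ witnesses an SGE-nomap of $G_1$ and $G_2$.

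The only place where care is needed — and hence the closest thing to an obstacle in this otherwise immediate argument — is the order of quantifiers: $P$ must be \emph{defined from} the drawing $\Gamma_1$ rather than prescribed in advance. The free set property lets us dictate where the free-set vertices of $G_1$ go (and so force $Q\subseteq P$), but it gives us no control over the locations of the other vertices; the resolution is to accept those locations as they come and simply let them, together with $P_0$, constitute $P$. Recognizing that we are free to let $G_2$'s point set be a \emph{subset} of the larger set demanded for $G_1$ is what makes the two requirements compatible.
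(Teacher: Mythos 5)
Your proof is correct and follows essentially the same route as the paper's: draw $G_2$ first by F\'ary's Theorem, use the free set of $G_1$ to place free-set vertices onto $G_2$'s points, and then take $P$ to be the full set of vertex positions in the resulting drawing of $G_1$. The only (immaterial) difference is that you pad $G_2$'s point set up to size $k$ and map the whole free set onto it, whereas the paper maps a sub-free-set of size $|V(G_2)|$ directly onto $G_2$'s points.
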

\begin{proof}
By F\'ary's theorem, $G_2$ has a straight-line crossing-free drawing on some set, $P_2$, of $|V(G_2)|\leq k$ points. Since $G_1$ has a free set $S$ of size at least $k$, $G_1$ has a \slcf\ where $|P_2|$  vertices of $G_1$ are mapped to distinct points in $P_2$. Consider now the set of points, $P$, defined by the vertices in the drawing of $G_1$. This set is our desired pointset as it is a set of $|V(G_1)|$ points such that each of $G_1$ and $G_2$ has a \slcf\ where all of its vertices are mapped to distinct points of  $P$.
\end{proof}

\Cref{lem:sim,root_n} immediately imply the following, aforementioned result.

\begin{cor}\label{fab}
For every $n$ and every $k\leq \sqrt{n/2}$, every $n$-vertex planar graph and every $k$-vertex planar graph admit a SGE-nomap.
\end{cor}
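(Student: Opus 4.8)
The plan is to chain together the two earlier results, \cref{lem:sim} and \cref{root_n}, essentially by substituting the guaranteed free-set size into the hypothesis of the simultaneous-embedding lemma. First I would invoke \cref{root_n}, which guarantees that every $n$-vertex planar graph $G_1$ has a free set of size at least $\sqrt{n/2}$. Given any $k\le\sqrt{n/2}$, this free set has size at least $k$, so $G_1$ satisfies the first hypothesis of \cref{lem:sim}.

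Next I would take any $k$-vertex planar graph $G_2$; since $k\le\sqrt{n/2}$, $G_2$ has at most $k$ vertices, matching the second hypothesis of \cref{lem:sim}. Applying \cref{lem:sim} directly to the pair $(G_1,G_2)$ then yields that $G_1$ and $G_2$ admit a SGE-nomap, which is exactly the claimed conclusion of \cref{fab}. Because the statement is universally quantified over all $n$-vertex planar graphs $G_1$ and all $k$-vertex planar graphs $G_2$ with $k\le\sqrt{n/2}$, and the two lemmas apply to arbitrary such graphs, the corollary follows for every admissible $n$ and $k$.

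There is essentially no obstacle here: \cref{fab} is stated precisely so that it is the immediate composition of \cref{lem:sim} (free set of size $k$ plus a $k$-vertex graph gives SGE-nomap) with \cref{root_n} (the $\sqrt{n/2}$ lower bound on free-set size). The only point requiring a moment's care is the bookkeeping on sizes: one must check that the free set of size $\ge\sqrt{n/2}\ge k$ can indeed accommodate a $k$-point target set, but this is exactly the monotonicity observation that a free set of size at least $k$ contains (and hence functions as) a free set of size $k$, which is one of the immediate properties of free sets noted after the four definitions. Thus the proof is a two-line deduction and no genuinely hard step remains.
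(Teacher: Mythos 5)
Your proof is correct and follows exactly the paper's own route: the paper states that \cref{lem:sim,root_n} ``immediately imply'' the corollary, which is precisely your two-step composition of the $\sqrt{n/2}$ free-set bound with the simultaneous-embedding lemma. Nothing further is needed.
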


One can combine the many lower bounds (on the size of the largest free set in a graph family) obtained in \cref{sec:linear,sec:planar,sec:degree} to obtain results of similar flavour to \cref{fab}. One example is the following: for every $n$ and every $k\leq n/4$, every $n$-vertex planar graph and every $k$-vertex triconnected cubic planar graph admits a SGE-nomap.

Note that universal point subsets of size $k$ do not imply results on SGE-nomap, akin to \cref{fab}.  In particular, imagine that, as in the proof of \cref{lem:sim}, one starts with a straight-line crossing-free drawing of the smaller graph $G_2$. Even if the larger graph $G_1$ is in a class of graphs that have universal point subsets of size $k$, there is no guarantee that $G_1$ can be drawn on the specific pointset of size $k$ used by the drawing of $G_2$. To be able to use universal point subsets for SGE-nomap problems, one needs the stronger variant of ``all pointsets of size $k$'' to be subset universal---something that is guaranteed by free sets of size $k$.

\subsubsection{With Mapping}

Two $n$-vertex planar graphs $G_1$ and $G_2$ on \emph{the same vertex set}, $V := V (G_1) = V (G_2)$, are said to have a \defin{$k$-partial simultaneous geometric embedding with mapping} (\defin{$k$-PSGE-withmap})  if there exists a set $V':=\{v_1,\ldots,v_k\}\subseteq V$, and a set $P:=\{p_1,\ldots,p_k\}$ of points such that each of $G_1$ and $G_2$ has a \slcf\ in which $v_i$ is mapped to $p_i$, for each $i\in\{1,\ldots,k\}$.
  PSGE-withmap on the whole vertex set (i.e. $n$-PSGE-withmap) has been widely studied, leading to mostly negative results (thus giving another motivation to introduce this partial version).  For example, it is known that, for every large enough $n$, there are pairs of $n$-vertex planar graphs that do not have $n$-PSGE-withmap \cite{DBLP:journals/comgeo/BrassCDEEIKLM07}. In fact the same is true for graphs from very simple families of planar graphs, for example: for an $n$-vertex tree and an $n$-vertex path \cite{DBLP:journals/jgaa/AngeliniGKN12}, for an $n$-vertex planar graph and an $n$-vertex matching \cite{DBLP:journals/jgaa/AngeliniGKN12} and for three $n$-vertex paths \cite{DBLP:journals/comgeo/BrassCDEEIKLM07}.

The $k$-PSGE-withmap problem was introduced\footnote{In \cite{DBLP:conf/gd/EvansKSS14}, they use the abbreviation $k$-PSGE for what we call $k$-PSGE-withmap. Also, the definition of $k$-PSGE-withmap in  \cite{DBLP:conf/gd/EvansKSS14} has one additional requirement, which states that if $v, w\in V$ are mapped to the same point in $D_i$ and $D_j$,  then $v=w$.
However, this additional requirement can always be met by the fact that it is possible to perturb any subset of vertices in a \slcf\ without introducing crossings.  (This fact is used, for example, in the proof that every free-collinear set is free.)}
by \citet{DBLP:conf/gd/EvansKSS14}
who proved that any two $n$-vertex trees have an $11n/17$-PSGE-withmap.  Their proof uses column planarity, which is the topic of the next section. \citet{barba.hoffmann.ea:column} proved that any two $n$-vertex outerplanar graphs have an $n/4$-PSGE-withmap.
\citet{DBLP:conf/gd/EvansKSS14}  also observed  that \cref{untangling}, the untangling result,  implies that every pair of $n$-vertex planar graphs has an $\Omega(n^{1/4})$-PSGE-withmap.  Namely, start with a \slcf\ of $G_1$. Since $G_1$ and $G_2$ have the same vertex set, the drawing of $G_1$ (or rather the positions of its vertices in the plane) defines a straight-line drawing of $G_2$ (that almost certainly has crossings). Untangling $G_2$ while keeping $\Omega(n^{1/4})$ of its vertices fixed (which is possible by \cref{untangling}) gives the result.

\begin{thm}[\cite{DBLP:conf/gd/EvansKSS14}]\label{psge}
Every pair of $n$-vertex planar graphs has an $\Omega(n^{1/4})$-PSGE-withmap. 
\end{thm}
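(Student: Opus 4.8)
The proof of \cref{psge} follows almost immediately from the untangling result \cref{untangling} together with the observation that both graphs share a common vertex set. The plan is to reduce the with-mapping simultaneous embedding problem to an untangling problem on the second graph.

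First I would fix a \slcf\ $\Gamma_1$ of $G_1$, which exists by F\'ary's Theorem. Since $V(G_1)=V(G_2)=:V$, the drawing $\Gamma_1$ assigns a location in the plane to every vertex of $V$, and hence it simultaneously determines a (not necessarily crossing-free) straight-line drawing $\Gamma_2$ of $G_2$: each vertex of $G_2$ is placed at the position it occupies in $\Gamma_1$. The drawing $\Gamma_2$ almost certainly has crossings, but that is acceptable.

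Next I would apply \cref{untangling} to $\Gamma_2$. By \cref{untangling} we have $\fix(G_2)\in\Omega(n^{1/4})$, which means that for the particular straight-line drawing $\Gamma_2$ there is a crossing-free straight-line redrawing of $G_2$ in which a set $V'\subseteq V$ of at least $\Omega(n^{1/4})$ vertices retain their exact positions from $\Gamma_2$ (equivalently, from $\Gamma_1$). Let $V':=\{v_1,\ldots,v_k\}$ with $k\in\Omega(n^{1/4})$, and let $p_i$ be the common location of $v_i$ in $\Gamma_1$ and in the untangled drawing of $G_2$. By construction, $G_1$ has a \slcf\ (namely $\Gamma_1$) in which $v_i\mapsto p_i$, and $G_2$ has a \slcf\ (the untangled one) in which $v_i\mapsto p_i$, for each $i\in\{1,\ldots,k\}$. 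This is exactly the definition of a $k$-PSGE-withmap with $k\in\Omega(n^{1/4})$, completing the argument.

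I do not expect any genuine obstacle here; the content is entirely in \cref{untangling}, and the only subtlety is the conceptual step of reading the fixed vertex positions of an untangling as the shared point locations $p_1,\ldots,p_k$ of a simultaneous embedding. If one wished to invoke the extra technical requirement from the original definition of $k$-PSGE-withmap (that vertices mapped to the same point must be identical), one would note, as remarked in the footnote on the definition, that a harmless perturbation of the finitely many point positions removes any accidental coincidences without creating crossings in either drawing.
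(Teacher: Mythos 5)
Your proposal is correct and is essentially the paper's own proof: the paper likewise starts with a F\'ary drawing of $G_1$, reads it as a (crossing-filled) straight-line drawing of $G_2$ via the shared vertex set, and then invokes \cref{untangling} to untangle $G_2$ while keeping $\Omega(n^{1/4})$ vertices fixed, those fixed positions serving as the points $p_1,\ldots,p_k$. Your closing remark about perturbing to handle the coincident-point requirement matches the paper's footnote on the definition of $k$-PSGE-withmap.
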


However, the above untangling argument fails if we try to apply it one more time. The following generalization of the $k$-PSGE problem to more than two graphs illustrates this: Given any set  $Q:=\{G_1,\ldots,G_r\}$ of planar graphs on \emph{the same vertex set}, $V$, we say that the graphs in $Q$ have a \defin{$k$-partial simultaneous geometric embedding with mapping} ($k$-PSGE-withmap) if there exists a set $V':=\{v_1,\ldots,v_k\}\subseteq V$, and a set $P:=\{p_1,\ldots,p_k\}$ of points such that each graph in $Q$ has a \slcf\ in which $v_i$ is mapped to $p_i$, for each $i\in\{1,\ldots,k\}$.

If we try to mimic the earlier untangling argument that proves \cref{psge}, it fails for $r=3$ already since we need to be able to guarantee that when $G_3$ is untangled the set of its vertices that stays fixed has a large intersection with the set that remained fixed when untangling $G_2$. It is here that we need the stronger version of \cref{root_n}, namely \cref{root_x}.

We start by presenting a lemma about PSGE-withmap for two graphs via free sets. Its proof uses the common trick of taking advantage of the fact that points in the plane have two degrees of freedom, and thus one ordering can be imposed on the \xx-coordinates and a different ordering on \yy-coordinates (see for example \cite{DBLP:journals/jgaa/BarbaEHKSS17}).

\begin{lem}\label{free-PSGE-two}
Let $G_1$ and $G_2$ be two planar graphs on the same vertex set $V:=V(G_1)=V(G_2)$. Let $S\subseteq V$ be an (unordered) free set in $G_1$ and an (unordered) free set in $G_2$, then $G_1$ and $G_2$ have a $|S|$-PSGE-withmap.
\end{lem}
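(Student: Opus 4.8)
The plan is to realize the free ordering of $G_1$ along the $x$-axis and the free ordering of $G_2$ along the $y$-axis, exploiting the two degrees of freedom of points in the plane. Since $S$ is an unordered free set in $G_1$, some permutation $\sigma_1=(u_1,\ldots,u_s)$ of $S$ is an ordered free set in $G_1$; similarly some permutation $\sigma_2=(w_1,\ldots,w_s)$ is an ordered free set in $G_2$. As the excerpt stresses, there may be no single permutation of $S$ that is simultaneously free in both graphs, so $\sigma_1$ and $\sigma_2$ must be treated as genuinely different orderings. This mismatch is precisely the difficulty the construction must overcome.

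For each $v\in S$ let $r_1(v)\in\{1,\ldots,s\}$ denote the position of $v$ in $\sigma_1$ and $r_2(v)$ its position in $\sigma_2$, and set $p_v:=(r_1(v),r_2(v))$. Because $r_1$ is a bijection onto $\{1,\ldots,s\}$, the first coordinates are pairwise distinct, so $P:=\{p_v:v\in S\}$ is a set of $|S|$ distinct points. I would take $V':=S$ and $P$ as above and show that both graphs admit a \slcf\ realizing the \emph{common} assignment $v\mapsto p_v$, which is exactly what an $|S|$-PSGE-withmap requires.

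For $G_1$: along $\sigma_1$ the points $p_{u_1},\ldots,p_{u_s}$ have $x$-coordinates $1<2<\cdots<s$ and $y$-coordinates $r_2(u_1),\ldots,r_2(u_s)$ (an arbitrary permutation of $\{1,\ldots,s\}$). Since $\sigma_1$ is an ordered free set in $G_1$, the free set property, applied with these increasing $x$-coordinates and these prescribed $y$-coordinates, yields a \slcf\ of $G_1$ placing $u_i$ at $p_{u_i}$. For $G_2$ I need the symmetric statement: along $\sigma_2$ the points $p_{w_1},\ldots,p_{w_s}$ have $y$-coordinates $1<2<\cdots<s$ and arbitrary $x$-coordinates. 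The free set definition is phrased in terms of increasing $x$-coordinates, so here I would first record the auxiliary observation that an ordered free set also admits any prescribed increasing $y$-ordering together with arbitrary $x$-coordinates; this follows by rotating the target point set by $-90^\circ$ (sending $(x,y)\mapsto(y,-x)$) so that the desired increasing $y$-values become increasing $x$-values, applying the free set property of $\sigma_2$ in the rotated frame, and rotating the resulting drawing back by $+90^\circ$, an affine map that preserves crossing-freeness. This gives a \slcf\ of $G_2$ placing $w_j$ at $p_{w_j}$.

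Since both drawings use the identical vertex-to-point assignment $v\mapsto p_v$ on $V'=S$, this is precisely an $|S|$-PSGE-withmap for $G_1$ and $G_2$. The only real obstacle is the possible disagreement between $\sigma_1$ and $\sigma_2$, and it is dissolved by encoding one ordering in each coordinate; the short rotation observation is the sole piece of technical glue, with everything else being bookkeeping. I note that this argument genuinely needs the full strength of \emph{free} sets (both coordinates controllable, one subject to an ordering constraint), since for $G_1$ the $y$-coordinates are forced by $\sigma_2$ and for $G_2$ the $x$-coordinates are forced by $\sigma_1$; free-collinear sets alone would not suffice.
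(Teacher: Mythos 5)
Your proof is correct and takes essentially the same route as the paper's: both constructions place each $v\in S$ at the point whose $x$-coordinate is its rank in the $G_1$-free ordering and whose $y$-coordinate is its rank in the $G_2$-free ordering, then invoke the free set property for $G_1$ directly and for $G_2$ with the roles of the two coordinates exchanged. The only difference is cosmetic: the paper dispatches the coordinate swap with the phrase ``with the roles of $x$- and $y$-coordinates reversed,'' while you make that step explicit via a rotation by $-90^\circ$ and its inverse, which is a perfectly valid justification of the same observation.
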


\begin{proof}
Let $S:=\{v_1,\ldots,v_s\}$, let $S_1$ be a permutation of $S$ such that $S_1$ is an ordered free set in $G_1$, and let $S_2$ be permutation of $S$ such that $S_2$ is an ordered free set in $G_2$.  Define the pointset $P:=\{p_1,\ldots,p_s\}$ in which $p_i$ has \xx-coordinate equal to the position of $v_i$ in $S_1$ and \yy-coordinate equal to the position of $v_i$ in $S_2$, for each $i\in\{1,\ldots,s\}$.
 By the definition of free set, $G_1$ has \slcf\ drawing with $v_i$ at $p_i$, for each $i\in\{1,\ldots,s\}$.  By the definition of free set---with the roles of \xx-coordinates and \yy-coordinates reversed---$G_2$ has a \slcf\ with $v_i$ at $p_i$, for each $i\in\{1,\ldots,s\}$.  Thus $G_1$ and $G_2$ have a $|S|$-PSGE-withmap.
\end{proof}

\cref{root_x,free-PSGE-two} gives another proof of \cref{psge}.  By \cref{root_x} (applied to $G_1$ and $V$), there exists $S_1\subseteq V$ of size at least $\sqrt{n/2}$ that is a free set in $G_1$.  By \cref{root_x} (applied to $G_2$ and $S_1$), there exists $S\subseteq S_1$ of size at least $\sqrt{|S_1|/2}\ge n^{1/4}/2^{3/4}$.  Thus $S$ is a free set in $G_1$ and in $G_2$.  By \cref{free-PSGE-two}, $G_1$ and $G_2$ have a $\lceil n^{1/4}/2^{3/4}\rceil$-PSGE-withmap.  In exactly the same way, \cref{fs-weakly-x,free-PSGE-two} give a proof that any two $n$-vertex graphs, each of which is one of the following: a tree, an outerplanar graph, Halin graph, or a square graph,  have a $(n/4)$-PSGE-withmap.

\begin{lem}\label{free-PSGE-many}
Let $Q:=\{G_1,\ldots,G_r\}$ be a set of $r\geq 2$ planar graphs on the same vertex set $V$. Let $S\subseteq V$ be an unordered free set in $G_i$, for each $i\in\{1,\ldots,r\}$. Then the graphs in $Q$ have a $|S|^{1/2^{(r-2)}}$-PSGE-withmap.
\end{lem}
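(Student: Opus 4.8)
The plan is to prove the statement by a single explicit construction rather than by the recursion that already failed for $r=3$, isolating the two coordinate axes for two of the graphs and absorbing the remaining $r-2$ graphs into one axis via the Erd\H{o}s--Szekeres theorem. For each $i\in\{1,\ldots,r\}$, since $S$ is an unordered free set in $G_i$, fix a permutation $\sigma_i$ of $S$ that is an ordered free set in $G_i$. Writing $N:=|S|$, the goal is to extract a subset $V'\subseteq S$ of size at least $N^{1/2^{(r-2)}}$ on which the orderings $\sigma_3,\ldots,\sigma_r$ all coincide, up to reversal, with $\sigma_1$, while leaving $\sigma_2$ entirely unconstrained.

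First I would build a nested chain $S=S_2\supseteq S_3\supseteq\cdots\supseteq S_r=:V'$ of subsets. At step $j$ (for $j$ from $3$ to $r$) I list the elements of $S_{j-1}$ in $\sigma_1$-order and read off their ranks in $\sigma_j$; by the Erd\H{o}s--Szekeres theorem this sequence contains a monotone subsequence of length at least $\sqrt{|S_{j-1}|}$, and I let $S_j$ be the corresponding subset. On $S_j$ the ordering $\sigma_j$ is then equal to $\sigma_1$ or to its reversal, and because monotonicity with respect to $\sigma_1$ is preserved under passing to further subsets, on the final set $V'$ every one of $\sigma_3,\ldots,\sigma_r$ agrees with $\sigma_1|_{V'}$ or with its reverse. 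Since the reversal of an ordered free set is again an ordered free set, I may replace $\sigma_j$ by its reversal for each $G_j$ whose induced ordering is reversed, so that afterwards $\sigma_1,\sigma_3,\sigma_4,\ldots,\sigma_r$ all induce one common order $\tau$ on $V'$. Each of the $r-2$ applications of Erd\H{o}s--Szekeres costs a square root, so $|V'|\ge N^{1/2^{(r-2)}}$, matching the claimed bound; note that for $r=2$ there are no reductions, $V'=S$, and the construction collapses to \cref{free-PSGE-two}.

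Next I would place the points exactly as in the proof of \cref{free-PSGE-two}. Let $\rho_\tau(v)$ be the rank of $v$ in $\tau$ and $\rho_2(v)$ the rank of $v$ in $\sigma_2|_{V'}$; for each $v\in V'$ set $p_v:=(\rho_\tau(v),\rho_2(v))$ and let $P:=\{p_v:v\in V'\}$ with the bijection $v\mapsto p_v$. For $G_1$ and each of $G_3,\ldots,G_r$, the $x$-coordinates of the points increase exactly in the common order $\tau$, which is a free-set ordering for each of these graphs on $V'$ (as a subsequence of $\sigma_i$, using the reversal where needed); hence the definition of free set supplies a \slcf\ of each such graph placing $v$ at $p_v$ for all $v\in V'$. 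For $G_2$ I apply the free-set definition with the roles of the $x$- and $y$-axes exchanged, equivalently rotating $P$ by $90^\circ$: the $y$-coordinates increase in $\sigma_2$-order, so $G_2$ likewise admits a \slcf\ placing each $v\in V'$ at $p_v$. All $r$ graphs therefore realize the same pinned set $V'$ on the same points $P$ under the same mapping, which is precisely a $|V'|$-PSGE-withmap with $|V'|\ge N^{1/2^{(r-2)}}=|S|^{1/2^{(r-2)}}$.

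The step I expect to carry the real content is the coordinate bookkeeping that lets the two axes do double duty: the $x$-axis simultaneously serves $G_1$ together with all $r-2$ graphs folded into $\tau$, while the $y$-axis serves $G_2$ for free through the rotation trick. Getting the exponent exactly right hinges on charging one Erd\H{o}s--Szekeres square root to each of the $r-2$ graphs beyond the second, and none to the first two. The only points needing care are that passing to a further subset preserves the monotonicity established in earlier rounds, and that the per-graph reversal is legitimate; both hold because subsequences of ordered free sets and reversals of ordered free sets are themselves ordered free sets.
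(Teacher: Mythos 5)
Your proof is correct and takes essentially the same approach as the paper's: repeated square-root extraction to force $r-1$ of the orderings to agree (the paper invokes Dilworth's Theorem where you use Erd\H{o}s--Szekeres, and anchors the common order to $\sigma_2$ rather than $\sigma_1$), followed by the two-axis point placement of \cref{free-PSGE-two}, with one axis serving the aligned group and the other serving the remaining graph. The only differences are cosmetic---which graph plays the solo-axis role---and the exponent $1/2^{(r-2)}$ arises identically in both arguments.
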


\begin{proof}
   For each $i\in\{2,\ldots,r\}$, let $S_i$ be a permutation of $S$ that is an ordered free set in $G_i$.  Suppose that, for some $i\in\{3,\ldots,r\}$, $S_{i-1}'$ is a subsequence of $S_2$ that defines an (ordered) free set in each of $G_2,\ldots,G_{i-1}$.
   Since $S_i$ and its reversal are both free sets in $G_i$, Dilworth's Theorem implies that there is a subsequence $S_i'$ of $S_{i-1}'$ of size at least $|S_{i-1}'|^{1/2}$ that is a free set in $G_i$ and is therefore a free set in each of $G_1,\ldots,G_i$.  By starting with $S_2':=S_2$, it follows that there exists an ordered set $S':=S_{r}'$ of size $|S|^{1/2^{r-2}}$ that is a free set in each of $G_2,\ldots,G_r$.

   Now apply \cref{free-PSGE-two} to the graphs $G_1$ and $G_2$ with the set $S$ to obtain \slcf\ of both graphs in which each vertex in $S$ (and therefore, the vertices in $S'$) appears at the same location in both drawings. In these drawings, the \yy-coordinates of the vertices in $S'$ are increasing.  By the definition of free set (exchanging the roles of \xx- and \yy-coordinates), each of $G_3,\ldots,G_k$ has a \slcf\  in which each vertex in $S'$ has the same location in each drawing.
\end{proof}

Starting with $X:=V$ and using $r$ applications of \cref{root_x} yields a set $S$ of size at least $n^{1/2^r}/2$ that is a free set in each of $G_1,\ldots,G_r$.  Applying \cref{free-PSGE-many} to $S$ and $G_1,\ldots,G_r$ gives the following corollary:

\begin{cor}\label{free-PGSE-n}
  Let $Q:=\{G_1,\ldots,G_r\}$ be a set of $r\geq 2$ planar graphs on the same vertex set $V$. Then the graphs in $Q$ have a $(n^{1/4^{r-1}}/2)$-PSGE-withmap. 
\end{cor}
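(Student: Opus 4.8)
The plan is to derive this corollary purely by composing \cref{root_x} with \cref{free-PSGE-many}; the only genuine work is to manufacture a \emph{single} vertex set $S$ that is simultaneously an unordered free set in all of $G_1,\ldots,G_r$, and then to bookkeep how its size decays.

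First I would build $S$ by applying \cref{root_x} once per graph, exploiting that its bound is hereditary. Set $X_0:=V$ and, for each $i\in\{1,\ldots,r\}$, apply \cref{root_x} to the graph $G_i$ and the set $X_{i-1}$ to obtain a free set $X_i\subseteq X_{i-1}$ in $G_i$ with $|X_i|\ge\sqrt{|X_{i-1}|/2}$ (using the explicit constant that \cref{root_x} inherits from \cref{root_n}). The key point is that each $X_i$ remains free in all the earlier graphs $G_1,\ldots,G_{i-1}$: since $X_r\subseteq\cdots\subseteq X_1$ and any subset of an unordered free set is again an unordered free set (a permutation witnessing freeness restricts to a subsequence, which is an ordered free set), the freeness established at each step is never destroyed by the later restrictions. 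Hence $S:=X_r$ is an unordered free set in every $G_1,\ldots,G_r$ at once.

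Next I would track the size and invoke \cref{free-PSGE-many}. Writing $a_k:=|X_k|$, the recurrence $a_k\ge\sqrt{a_{k-1}/2}$ with $a_0=n$ solves to $a_k\ge n^{1/2^k}/2^{\,1-1/2^k}$, so after $r$ steps $|S|\ge n^{1/2^r}/2$. Feeding this $S$ into \cref{free-PSGE-many} (whose hypothesis that $S$ be free in each $G_i$ we have just secured) yields a $|S|^{1/2^{(r-2)}}$-PSGE-withmap. It then remains to simplify the exponent:
\[
  |S|^{1/2^{r-2}}\ge\bigl(n^{1/2^r}/2\bigr)^{1/2^{r-2}}
  = \frac{n^{1/2^{2r-2}}}{2^{1/2^{r-2}}}
  = \frac{n^{1/4^{r-1}}}{2^{1/2^{r-2}}}\ge \frac{n^{1/4^{r-1}}}{2},
\]
where the last step uses $2^{r-2}\ge 1$ for $r\ge2$, and where $2^{2r-2}=4^{r-1}$ produces the claimed exponent. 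This gives the desired $(n^{1/4^{r-1}}/2)$-PSGE-withmap.

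The step I expect to demand the most care is not any single calculation but the justification in the second paragraph that iterated restriction preserves freeness across all graphs simultaneously; this is precisely where the ordered/unordered distinction and the subsequence-closure property must be invoked, rather than treating \cref{root_x} as an oracle applied independently to each $G_i$. It is also worth double-checking that the two independent square-root losses—one from the $r$-fold use of \cref{root_x}, giving exponent $1/2^r$, and one internal to \cref{free-PSGE-many}, giving exponent $1/2^{r-2}$—\emph{multiply} rather than add, so that the composite exponent becomes $1/2^{2r-2}=1/4^{r-1}$.
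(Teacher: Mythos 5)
Your proposal is correct and takes essentially the same route as the paper: starting from $X:=V$, apply \cref{root_x} once per graph to obtain a set of size at least $n^{1/2^r}/2$ that is an unordered free set in every $G_i$, then invoke \cref{free-PSGE-many} and simplify $\bigl(n^{1/2^r}/2\bigr)^{1/2^{r-2}}$ to $n^{1/4^{r-1}}/2$. You merely spell out the details the paper leaves implicit (subset-closure of freeness and the size recurrence), and your arithmetic checks out.
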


If the graphs in $Q$ each come from one of the classes covered by \cref{n_over_2}, then repeated applications of \cref{n_over_2} gives a set $S$ of size at least $n/2^r$ that is free in each of $G_1,\ldots,G_r$, in which case the bound in \cref{free-PGSE-n} becomes $n^{1/2^{r-2}}/4$.

\subsection{Column planarity}

Given a planar graph $G$, a set $R\subseteq V(G)$ is  \defin{column planar} in $G$ if the vertices of $R$ can be assigned distinct \xx-coordinates such that for an assignment of \yy-coordinates to the vertices in $R$ such that the resulting $|R|$ points have no three on a line,  there exists a \slcf\ of $G$ in which each vertex $v\in R$ is placed to its assigned coordinates. $R$ is  \defin{strongly column planar} if the ``no three on a line condition'' is removed from the definition.  The following lemma is immediate from these definition and \cref{equivalence}:

\begin{lem}\label{col-eq}
    A (sub)set of vertices of a planar graph is a strongly column planar set if and only if it is an unordered free set. 
\end{lem}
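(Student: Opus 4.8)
Recall the two notions. A set $R$ is \emph{strongly column planar} in $G$ if its vertices can be assigned distinct $\xx$-coordinates $x_1 < \cdots < x_s$ so that for \emph{any} assignment of $\yy$-coordinates $y_1,\ldots,y_s$ (with no restriction), there is a $\slcf$ of $G$ placing each vertex at its assigned point. A set $S$ is an \emph{unordered free set} if there is \emph{some} permutation making it an ordered free set, i.e. for \emph{any} $x_1 < \cdots < x_s$ and \emph{any} $y_1,\ldots,y_s$, there is a $\slcf$ placing the $i$-th vertex at $(x_i,y_i)$.

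**The plan.** The proof is essentially an unwinding of the two definitions, and the lemma claims they coincide. I would prove the two implications separately.

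For the easy direction, suppose $S$ is an unordered free set. Then some ordering $(v_1,\ldots,v_s)$ is an ordered free set, which means that for any $x_1<\cdots<x_s$ and any $y_1,\ldots,y_s$ a suitable $\slcf$ exists. To witness strong column planarity I simply fix the $\xx$-coordinates to be, say, $x_i := i$ in this order; the free-set property then supplies a drawing for every choice of $\yy$-coordinates. Hence $S$ is strongly column planar.

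For the converse, suppose $S$ is strongly column planar, witnessed by distinct $\xx$-coordinates $x_1 < \cdots < x_s$ assigned to vertices $v_1,\ldots,v_s$ (relabel so the order of $\xx$-coordinates matches the index). Order $S$ as $(v_1,\ldots,v_s)$. I must show this is an ordered free set, i.e. that \emph{any} strictly increasing $\xx$-coordinates $x_1'<\cdots<x_s'$ work, not just the specific ones in the definition. The strong column planar hypothesis only fixes one particular increasing $\xx$-sequence. To bridge this gap I would invoke \cref{equivalence}: strong column planarity directly gives that $(v_1,\ldots,v_s)$ is an ordered collinear set (place the $\yy$-coordinates all equal to $0$, which is a legal choice since the ``no three collinear'' restriction has been dropped, yielding a $\slcf$ with all of $S$ on the $\xx$-axis in the specified order). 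By \cref{equivalence}, an ordered collinear set is an ordered free set, so $(v_1,\ldots,v_s)$ is an ordered free set and $S$ is an unordered free set.

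**Main obstacle.** The only subtlety — and the step I expect to be the real content — is the converse. A naive reading of strong column planarity fixes the $\xx$-coordinates, whereas a free set must tolerate \emph{arbitrary} increasing $\xx$-coordinates. One is tempted to argue directly, but the quantifier structure does not immediately match. The clean resolution is to not argue directly at all: setting all $\yy_i = 0$ collapses strong column planarity to exactly the definition of a collinear set, and then \cref{equivalence} does the heavy lifting of upgrading ``collinear'' to ``free.'' This is why the lemma is described as ``immediate'': once one notices that strong column planarity for the single choice $\yy \equiv 0$ is precisely collinearity, the equivalence theorem finishes both the $\xx$-coordinate freedom and the $\yy$-coordinate freedom for free.
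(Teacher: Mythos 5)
Your proof is correct and is essentially the approach the paper intends: the paper gives no argument beyond calling the lemma ``immediate from the definitions,'' and your two directions---specializing the $\xx$-coordinates for the free $\Rightarrow$ strongly column planar direction, and setting all $\yy$-coordinates to $0$ to obtain a collinear set and then invoking \cref{equivalence} for the converse---are exactly the intended unwinding. Your added observation that the converse is \emph{not} purely definitional, because the quantifier mismatch over $\xx$-coordinates genuinely requires the collinear-to-free upgrade supplied by \cref{equivalence}, is an accurate refinement of the paper's terse claim rather than a deviation from it.
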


 Being strongly column planar implies being column planar, thus all the bounds from  \cref{sec:linear,sec:planar,sec:degree}  apply to column planarity.  For example, \cref{root_n} implies that every $n$-vertex planar graph has a column planar set of size $\Omega(\sqrt{n})$. \cref{dual_planar_concrete} implies that every $n$-vertex bounded-degree planar graph has a column planar set of size $\Omega(n^{0.8})$; and so on.

Column planar sets were introduced by \citet{DBLP:conf/gd/EvansKSS14} motivated by applications to partial simultaneous geometric embeddings. Notions similar to column planarity were studied by \citet{DBLP:journals/comgeo/Estrella-BalderramaFK09} and \citet{DBLP:journals/jgaa/GiacomoDKLS09}.
\citet{DBLP:journals/jgaa/BarbaEHKSS17} proved that $n$-vertex trees have column planar sets of size $14n/17$. Note that that does not imply that trees have free sets of that size, since being column planar does not imply being strongly column planar. In particular, if $G$ is a 3-cycle, then $V(G)$ is a column planar set for any assignment of \xx-coordinates to $V(G)$. However, no \xx-coordinate assignment can be turned into a \slcf\ with all the vertices having the same \yy-coordinate. Thus $V(G)$ is not strongly column planar.

\section{A One-Bend Variant}
\label{one_bend_section}

A \defin{$k$-bend \embedding} of a planar graph $G$ is a \embedding\ of $G$ in which each edge is represented by a polygonal chain consisting of at most $k+1$ line segments.  Thus a $0$-bend \embedding\ is a \slcf.  This leads naturally to a definition of a $k$-bend free set introduced by \citet{bose.dujmovic.ea:connected}.  An ordered set $S:=(v_1,\ldots,v_s)$ of vertices in a planar graph $G$ is a \defin{$k$-bend free set} if, for any $x_1<\cdots<x_s$ and any $y_1,\ldots,y_s$, there exists a $k$-bend \embedding\ $\Gamma$ of $G$ such that $(x_i,y_i)$ is the location of $v_i$ in $\Gamma$, for each $i\in\{1,\ldots,s\}$.  As an application of \cref{cds}, below, \citet{bose.dujmovic.ea:connected} prove the following result:

\begin{thm}[\cite{bose.dujmovic.ea:connected}]\label{one_bend_thm}
    Every $n$-vertex planar graph has a $1$-bend free set of size at least $11n/21$.
\end{thm}

They prove \cref{one_bend_thm} by proving the following result, which is the main result in \cite{bose.dujmovic.ea:connected}:

\begin{thm}[\cite{bose.dujmovic.ea:connected}]\label{cds}
    Every $n$-vertex triangulation has a spanning tree with at least $11n/21$ leaves.
\end{thm}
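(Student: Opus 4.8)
The plan is to first reformulate the theorem in terms of connected dominating sets, which is cleaner to reason about. In any spanning tree $T$ of a connected graph $G$, the internal (non-leaf) vertices induce a connected subtree that dominates $V(G)$, and a spanning tree with $\ell$ leaves has exactly $n-\ell$ internal vertices. Conversely, from any connected dominating set $D$ one builds a spanning tree with at least $n-|D|$ leaves: take any spanning tree of the connected subgraph $G[D]$ and hang each vertex of $V(G)\setminus D$ off one of its neighbours in $D$, so that every such vertex becomes a pendant leaf. Thus \cref{cds} is equivalent to showing that every $n$-vertex triangulation admits a connected dominating set of size at most $10n/21$, and I would prove this form, since it frees the argument from tracking an explicit tree. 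Small triangulations and the degenerate cases $n\le 3$ would be checked by hand.

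To build such a set I would grow a connected subtree by an amortized, Kleitman--West-style expansion process, specialised to triangulations. One maintains a subtree $S$ together with its current leaves, its internal vertices, and the set of vertices not yet reached; one then repeatedly selects a current leaf having at least two still-unreached neighbours and promotes it to an internal vertex, attaching all of its unreached neighbours as new leaves. Each such expansion spends one new internal vertex to create at least two new leaves. The decisive point is that in a triangulation the neighbourhood of every internal vertex is a cycle, so consecutive newly created leaves are mutually adjacent along that cycle; this adjacency is exactly what keeps subsequent expansions efficient and is the feature that pushes the leaf fraction well past the $n/4$ that minimum degree three alone would guarantee from a bare expansion argument.

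The efficiency of the process should be controlled by a discharging/amortization argument. I would give each vertex one unit of charge and, as the tree grows, arrange that each internal vertex pays for roughly $21/10$ vertices (itself together with the leaves it is responsible for), using that a triangulation has exactly $3n-6$ edges, minimum degree at least three, and that every face is a triangle so the cyclic neighbourhoods overlap in a controlled way. The target ratio $10/21$ for internal vertices then emerges from balancing the charge gained per expansion against the charge lost at configurations where the greedy step stalls.

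The main obstacle, and the technical heart of the proof, is handling the stalled configurations: states in which every remaining leaf has at most one unreached neighbour, so that no productive expansion is available. Here one must introduce auxiliary moves---re-rooting, local swaps that trade one internal vertex for a more productive one, and the absorption of short fans of degree-three vertices---while preserving connectivity of $S$ throughout and verifying that no single move incurs a net loss exceeding the budget dictated by $10/21$. Pinning the constant down exactly is what forces a careful, planarity-driven case analysis of the possible stalled neighbourhoods, together with a separate treatment of separating triangles, which can be split off and recursed upon so that the extremal $21$-vertex gadgets are the only tight cases.
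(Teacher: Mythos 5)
There is a genuine gap: your proposal is a plan for a proof, not a proof. The paper itself does not reproduce the argument --- it cites \cite{bose.dujmovic.ea:connected} and explicitly says the proof is well outside the scope of the survey --- and your first step, reformulating the statement as the existence of a connected dominating set of size at most $10n/21$, is indeed the correct and standard reduction (and is how the cited work frames the result). But everything after that reduction is a description of what a proof would have to do, not the doing of it. The decisive quantitative content --- an explicit charging scheme in which internal vertices are paid for at an amortized rate of $21/10$ vertices each, together with the verification that every configuration the growth process can reach respects that budget --- is simply asserted to ``emerge from balancing.'' A Kleitman--West-style expansion with the observation that neighbourhoods in a triangulation are cycles gives the right intuition for why one should beat the $n/4$ bound for minimum-degree-$3$ graphs, but no inequality in your writeup ties that cycle structure to the specific constant $10/21$; without the case analysis of stalled configurations (which you correctly identify as the technical heart and then leave entirely open), the argument establishes nothing beyond the classical bounds.

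Two further points would need repair even as an outline. First, the claim that separating triangles ``can be split off and recursed upon'' is not automatic for connected dominating sets: the pieces of a decomposition along a separating triangle share only the three triangle vertices, and the union of connected dominating sets of the pieces need not be connected, nor need its size add up correctly unless the recursion hypothesis is strengthened (e.g., to control which triangle vertices are forced into the set). Second, the local moves you invoke (re-rooting, swaps, absorbing fans of degree-$3$ vertices) must be shown to terminate and to never revisit a configuration, which in amortized arguments of this kind is typically where the bulk of the case analysis lives. The cited proof is long precisely because these steps require an extensive, planarity-driven analysis; naming them does not substitute for carrying them out.
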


\cref{cds} improves a longstanding bound of $n/2$ that has at least two different proofs \cite{albertson.berman.ea:graphs,angelini.evans.ea:sefe}. The proof of \cref{cds} is well outside the scope of this survey. Instead, we prove the following lemma that, along with \cref{cds}, immediately implies \cref{one_bend_thm}.

\begin{lem}\label{tree_to_obfs}
    Let $G$ be a planar graph and let $T$ be a spanning tree of $G$.  Then the leaves of $T$ are a one-bend collinear set in $G$.
\end{lem}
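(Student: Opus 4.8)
The plan is to reduce the one-bend statement to a zero-bend (collinear) statement about an auxiliary graph, and then invoke \cref{equivalence}. Let $G'$ be the graph obtained from $G$ by subdividing every edge exactly once; write $m_e$ for the subdivision vertex placed on edge $e$, and let $L\subseteq V(G)$ be the set of leaves of $T$. The key observation is that a \slcf\ of $G'$ in which the vertices of $L$ lie on a common line is exactly a one-bend \embedding\ of $G$ with the vertices of $L$ on that line: each original edge $e=uv$ is drawn as the two segments $u m_e$ and $m_e v$, i.e.\ as a single polygonal chain bending at $m_e$. Thus it suffices to prove that $L$ is a (zero-bend) collinear set of $G'$. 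By \cref{equivalence}, this in turn reduces to the purely topological task of producing a \embedding\ of $G'$ together with a proper-good curve $C$ that passes through exactly the vertices of $L$, in some order, and through no other vertex.

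First I would dispose of the degenerate cases $|V(G)|\le 2$ directly, so that I may assume $G$ has at least three vertices. Then $T$ has at least one internal (non-leaf) vertex, no two leaves of $T$ are adjacent in $T$, and after subdividing, $L$ is an independent set in $G'$, so no edge of $G'$ joins two leaves. Fix a \embedding\ of $G$ (which induces one of $G'$ once each $m_e$ is placed in the interior of $e$), and let $D$ be a thin topological disk that is a neighbourhood of $T$. All vertices of $G$ lie on $\partial D$, the tree edges lie inside $D$, and every non-tree edge leaves $D$ and returns, meeting $\partial D$ only at its two endpoints. The boundary $\partial D$ is a simple closed curve tracing the Euler tour of $T$: it passes through each leaf once and through each internal vertex a number of times equal to its degree in $T$.

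From $\partial D$ I would build the desired curve $C$ by local surgery. At every leaf I keep the curve passing through the leaf. At every internal vertex $u$ I reroute each corner of $\partial D$ by a small shortcut arc that cuts across the gap between two rotation-consecutive tree edges at $u$, staying just outside $u$ and crossing exactly the non-tree edges emanating from $u$ into that gap, once each. The resulting curve $C$ is simple, passes through precisely the vertices of $L$, and avoids every internal vertex. I then check properness with respect to $G'$: a tree half-edge is met by $C$ only at a leaf endpoint, if at all; and each non-tree edge $e=uw$, subdivided as $u\,m_e\,w$, is crossed by $C$ at most once near $u$ (on the half $u m_e$) and, independently, at most once near $w$ (on the half $m_e w$). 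Placing each $m_e$ on $e$ between these crossing points then makes every edge of $G'$ meet $\overline{C}$ in the empty set or a single point. Hence $C$ is proper-good and visits exactly $L$, so by \cref{equivalence} the set $L$ is a collinear set of $G'$, completing the reduction above.

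I expect the main obstacle to be the surgery step: verifying that the shortcuts at the internal vertices can be performed simultaneously so that $C$ stays \emph{simple} and crosses each half-edge of $G'$ at most once. The two double-crossings to rule out are (i) a single non-tree edge being cut twice on the \emph{same} half — avoided because its two relevant shortcuts act near its two distinct endpoints, hence on its two distinct halves — and (ii) shortcuts at different corners interfering with one another — avoided by confining each surgery to a sufficiently small neighbourhood of its corner. The remainder of the argument (the subdivision reduction and the appeal to \cref{equivalence}) is routine once the curve is in hand.
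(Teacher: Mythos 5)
Your proposal is correct and takes essentially the same approach as the paper: subdivide every edge of $G$ once, build a proper-good curve from the boundary of a thin neighbourhood of $T$ so that it passes through exactly the leaves and meets each half-edge of $G'$ in at most one point, and then invoke \cref{equivalence} on $G'$ to convert back to a one-bend drawing of $G$. The only cosmetic difference is directional: the paper takes the Minkowski-sum boundary (all vertices strictly inside) and deforms it inward to capture the leaves, while you take the boundary walk through all vertices and shortcut outward around the internal ones---these are mirror-image descriptions of the same curve.
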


\begin{proof}
    Refer to \cref{one_bend_reduction}.
    Fix some \slcf\ of $G$, which also fixes a \slcf\ of $T$. Fatten the drawing of $T$ by some arbitrarily small value $\epsilon>0$ by taking the Minkowsky sum of $T$ with a disc of radius $\epsilon$. The boundary of this fattened tree is a simple closed curve $C_0$. Construct a curve $C$ by deforming $C_0$ in an $\epsilon$-neighbourhood of each leaf $v$ of $T$ so that it contains $v$.   If $\epsilon$ is sufficiently small, then $C$ is a simple closed curve that intersects each edge $vw$ of $G$ in at most $2$ points: one point within distance $\epsilon$ of $v$ and one point within distance $\epsilon$ of $w$.  Let $S:=(v_1,\ldots,v_s)$ be the leaves of $T$ in the order they are encountered while traversing $C$.  Subdivide each edge of $G$ by placing a vertex in the center of the edge, and call the resulting graph $G'$.  Then $C$ is a proper-good curve for $G'$ that contains $S$.  Therefore, $S$ is a free set in $G'$.  Therefore, for any $x_1<\cdots<x_s$ and any $y_1,\ldots,y_s$, $G'$ has a \slcf\ in which $v_i$ is placed at $(x_i,y_i)$ for each $i\in\{1,\ldots,s\}$.  Any such drawing of $G'$ gives a $1$-bend \embedding\ of $G$ in which $v_i$ is placed at $(x_i,y_i)$ for each $i\in\{1,\ldots,s\}$.  Thus $S$ is a $1$-bend collinear set in $G$.
\end{proof}

\begin{figure}
    \centering
    \includegraphics[page=7]{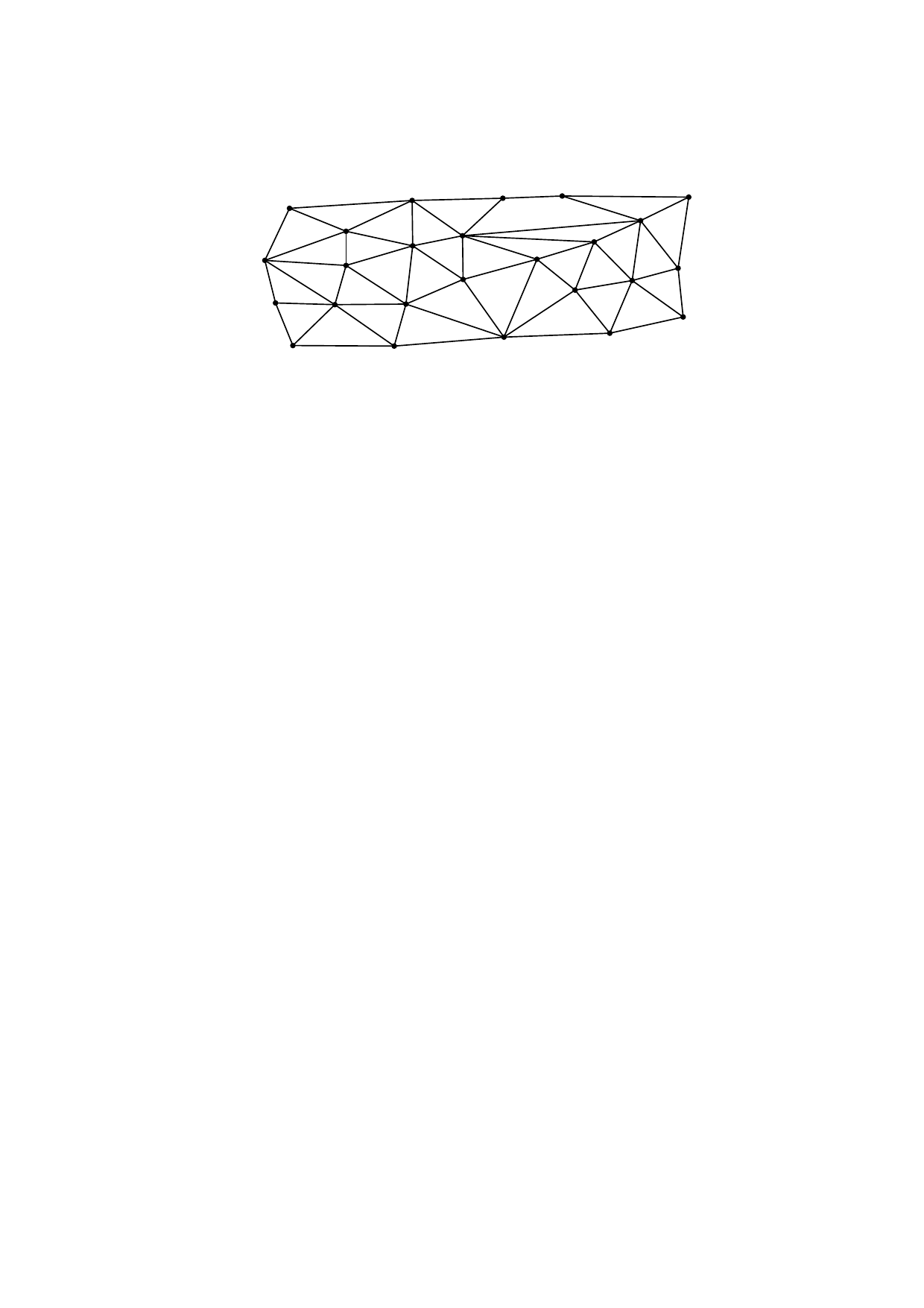}
    \caption{The leaves of a spanning tree of $G$ are a $1$-bend collinear set of $G$.}
    \label{one_bend_reduction}
\end{figure}

The obvious question is whether the bound in \cref{one_bend_thm} is of the correct form. Perhaps planar graphs have $1$-bend collinear sets of size $n-o(n)$.  We can rule out this possibility with the following construction.

\begin{thm}
    For infinitely many values of $n$, there exists an $n$-vertex planar graph that has no $1$-bend collinear set of size greater than $10n/11$.
\end{thm}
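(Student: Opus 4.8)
The plan is to build $G$ from many disjoint copies of a single small triangulation that cannot be drawn with all of its vertices on one line, even when each edge is allowed a bend. The natural gadget is the \emph{Goldner--Harary graph} $H$, the smallest non-Hamiltonian maximal planar graph, which has exactly $11$ vertices; this is the source of the constant $10/11$. For $n=11k$, I would let $G$ consist of $k$ vertex-disjoint copies $H_1,\ldots,H_k$ of $H$, optionally joined into a connected graph by adding a single edge between consecutive copies (such inter-copy edges keep $G$ planar and leave each copy an \emph{induced} $H$, since each added edge has only one endpoint in any given copy). The goal is then to show that no line can contain more than $10$ vertices of any one copy in a $1$-bend \embedding.

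The geometric crux is the claim that, in any $1$-bend \embedding, the vertices of a copy lying on a common line $L$ together with the induced edges among them realize a \emph{$2$-page book embedding} whose spine order is the order of these vertices along $L$. Indeed, an edge both of whose endpoints lie on $L$ is a polygonal arc of two segments, and (assuming, as we may after perturbing bend points, that no bend lies on $L$) it meets $L$ only at its two endpoints; hence it is confined to one of the two closed half-planes bounded by $L$, and we assign it to the corresponding page. If two such edges lie in the same half-plane and their endpoint intervals on $L$ interleave, a Jordan-curve argument forces a crossing: the first edge together with the subsegment of $L$ between its endpoints bounds a region, and the second edge must pass from inside this region to outside it while being barred from re-crossing $L$ except at its own endpoints, so it must cross the first edge. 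Thus same-page edges never interleave, which is exactly the $2$-page condition. The single permitted bend does not defeat this argument, since it uses only that each edge is a simple arc lying in one half-plane.

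Now suppose some $1$-bend \embedding\ of $G$ placed all $11$ vertices of a copy $H_i$ on a line $L$. The subgraph induced on those vertices is exactly $H$, so by the previous paragraph $H$ would admit a $2$-page book embedding, i.e.\ $H$ would be subhamiltonian. But $H$ is maximal planar, so no edge may be added to it while preserving planarity; a subhamiltonian maximal planar graph is therefore Hamiltonian, contradicting the non-Hamiltonicity of the Goldner--Harary graph. Hence at most $10$ of the $11$ vertices of each copy lie on $L$. Summing over the $k=n/11$ copies, every line contains at most $10n/11$ vertices in any $1$-bend \embedding; in particular, since a $1$-bend collinear set is by definition a set of vertices placed on a common line in such a drawing, no $1$-bend collinear set has size exceeding $10n/11$.

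The step I expect to be the main obstacle is the geometric lemma, namely verifying that one bend per edge genuinely buys nothing when both endpoints of an edge are pinned to the spine line, so that collinearity of a copy's vertices still forces a $2$-page embedding of the induced maximal planar graph; this is where the degenerate cases (bends on $L$, or edges that degenerate to subsegments of $L$ between consecutive on-line vertices) must be checked to contribute only harmless non-interfering arcs. Once this ``interleaving arcs must cross'' fact is in hand, the remainder is just the standard dictionary between $2$-page embeddability, subhamiltonicity, and (for triangulations) Hamiltonicity, combined with the fact that $11$ is the smallest order of a non-Hamiltonian triangulation.
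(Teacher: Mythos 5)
Your proposal is correct and follows essentially the same route as the paper: the Goldner--Harary graph as the gadget, the observation that a one-bend edge with both endpoints on a line cannot properly cross that line (so collinearity forces a $2$-page book embedding of the induced triangulation, contradicting non-Hamiltonicity), and a disjoint union of $k$ copies to get the $10n/11$ bound. Your Jordan-curve justification of the interleaving-arcs step and the handling of degenerate bends-on-the-line simply fill in details the paper states tersely, and your optional inter-copy edges are an immaterial variation.
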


\begin{proof}
    The Goldner–Harary graph \cite{goldner1975note}, $G_0$ is an $11$-vertex triangulation that is not Hamiltonian. We claim that any $1$-bend free set in $G_0$ has size at most $10$. Suppose, for the sake of contradiction, that $S:=V(G_0)$ is a $1$-bend free set in $G_0$.  Then $G_0$ has a $1$-bend \embedding\ $D$ with all vertices of $S$ placed on the \xx-axis.  Since $D$ is a $1$-bend \embedding\ and all vertices of $G_0$ are on the \xx-axis, no edge properly crosses the \xx-axis. 

We now show that the existence of such a \embedding\ $D$ implies that $G_0$ is Hamiltonian, contradicting our initial assumption. Let $v$ and $w$ be any two consecutive vertices of $G_0$ along the \xx-axis. Suppose, for contradiction, that $vw \not\in E(G_0)$. Since a line segment between $v$ and $w$ does not cross any edge of $D$, adding the edge $vw$ to $G_0$ yields a planar graph with more edges than $G_0$. But this contradicts the fact that $G_0$ is a triangulation and therefore edge-maximal. Hence, every pair of consecutive vertices along the \xx-axis must be connected by an edge in $G_0$, implying that $G_0$ contains a Hamiltonian path. Similarly, a curve can be added to $D$ connecting the first and the last vertex on the \xx-axis without crossing any existing edge. An argument equivalent to the one above shows that this edge must also exist in $G_0$. It follows that $G_0$ contains a Hamiltonian cycle, completing the contradiction.

    For any positive integer $k$, let $G$ be a graph consisting of $k$ disjoint copies of $G_0$.  Then $G$ has $n=11k$ vertices. Any $1$-bend free set in $G$ must exclude one vertex from each copy of $G_0$ and therefore has size at most $10k=10n/11$.
\end{proof}

\section{Open Problems}
\label{conclusion}

We conclude with a list of open problems:

\begin{enumerate}
    \item What is the largest value $\alpha$ such that every $n$-vertex planar graph has a free set of size $\Omega(n^{\alpha})$?  Currently, we know that $1/2\le \alpha \le 0.9859$.  Does $\alpha=\beta$, where $\beta$ is the largest value such that every $n$-vertex cubic triconnected planar graph contains a cycle of length $\Omega(n^{\beta})$?

    \item Does every $n$-vertex planar graph of maximum degree $3$, $4$, $5$, or $6$ contain a free set of size $\Omega(n)$?

    \item Does every $n$-vertex planar graph of treewidth $4$ contain a free set of size $\Omega(n)$?

    \item What is the largest value of $\beta$ such that every $n$-vertex planar graph has a $1$-bend free set of size $\beta n-o(n)$?

    \item Say that a subset $S$ of vertices in a planar graph $G$ is \defin{mostly free} if, for every set $P$ of $|S|$ points \emph{in general position}, there exists a \slcf\ of $G$ in which each vertex in $S$ is drawn at a distinct point in $P$.  Is there a constant $\gamma >0$ such that every mostly free set $S$ contains a free subset of size at least $\gamma|S|$?  The class of outerplanar graphs shows that $\gamma \le 2/3$ since any set of $n$ points in general position is universal for the class of $n$-vertex outerplanar graphs \cite{GMPP,DBLP:conf/gd/Bose97,DBLP:conf/cccg/CastanedaU96}. Thus the entire vertex set of every outerplanar graph is mostly free, but there exists $n$-vertex outerplanar graphs with no free set of size greater than $\lceil n/2\rceil$.

    \item \cref{root_x,n_over_2} allow us to choose any set $X$ of vertices in $G$ and find a large subset of $X$ that is a free set of $G$, and this turns out to be useful in a number of applications.  \cref{dual_planar_concrete,fs-cubic}, which currently give the strongest lower bounds for planar graphs of bounded degree and triconnected cubic planar graphs, do not have this flexibility. 
    \begin{enumerate}
        \item Is the following strengthening of \cref{dual_planar_concrete} true: For any planar graph $G$ of bounded degree and any subset $X$ of vertices of $G$, there exists a set $S\subseteq X$ of size $\Omega(|X|^{0.8})$ that is a free set of $G$? 
        
        \item Is the following strengthening of \cref{fs-cubic} true: For any triconnected cubic planar graph $G$ and any subset $X$ of vertices of $G$, there exists a set $S\subseteq X$ of size $\Omega(|X|)$ that is a free set of $G$?
    \end{enumerate}
\end{enumerate}

In some of the applications discussed in \cref{applications}, free sets are a convenient tool but may be more powerful than necessary, and better bounds may be possible using more direct methods.  Here are two examples:

\begin{enumerate}\setcounter{enumi}{6}
    \item Do universal point subset of size $\Omega (n^{1/2+\epsilon})$ for some $\epsilon>0$ exist for the class of $n$-vertex planar graphs. Currently, there is nothing that rules out a bound of $\Omega(n)$?
    
    \item Although free sets and an application of Dilworth's Theorem currently give the best bounds for untangling planar graphs and several other graph classes, better bounds may be possible using more direct techniques.  The $\Omega(n^{2/3})$ bound on $\fix_{\mathcal{C}}(G)$ for untangling cycles \cite{cibulka:untangling} gives one example in which free sets are not the best approach.
\end{enumerate}

\subsection*{Acknowledgements}

We thank G\"unter Rote and the anonymous referees for detailed and insightful comments, which significantly improved the quality of this manuscript.

\bibliographystyle{plainurlnat}
\bibliography{main}

\begin{thebibliography}{60}
\providecommand{\natexlab}[1]{#1}
\providecommand{\url}[1]{\texttt{#1}}
\providecommand{\urlprefix}{URL }
\expandafter\ifx\csname urlstyle\endcsname\relax
  \providecommand{\doi}[1]{\href{https://dx.doi.org/#1}{\nolinkurl{doi:#1}}}\else
  \providecommand{\doi}[1]{\href{https://dx.doi.org/#1}{\nolinkurl{doi:#1}}}\fi
\providecommand{\eprint}[2][]{\href{https://dx.doi.org/10.48550/arXiv.#2}{\nolinkurl{doi:10.48550/arXiv.#2}}}

\bibitem[{Albertson et~al.(1990)Albertson, Berman, Hutchinson, and
  Thomassen}]{albertson.berman.ea:graphs}
Michael~O. Albertson, David~M. Berman, Joan~P. Hutchinson, and Carsten
  Thomassen.
\newblock Graphs with homeomorphically irreducible spanning trees.
\newblock \emph{J. Graph Theory}, 14(2):247--258, 1990.
\newblock \doi{10.1002/JGT.3190140212}.

\bibitem[{Angelini et~al.(2012{\natexlab{a}})Angelini, Binucci, Evans, Hurtado,
  Liotta, Mchedlidze, Meijer, and Okamoto}]{DBLP:conf/isaac/AngeliniBEHLMMO12}
Patrizio Angelini, Carla Binucci, William~S. Evans, Ferran Hurtado, Giuseppe
  Liotta, Tamara Mchedlidze, Henk Meijer, and Yoshio Okamoto.
\newblock Universal point subsets for planar graphs.
\newblock In \emph{Algorithms and Computation - 23rd International Symposium,
  {ISAAC} 2012}, volume 7676 of \emph{LNCS}, pages 423--432. Springer,
  2012{\natexlab{a}}.
\newblock \doi{10.1007/978-3-642-35261-4\_45}.

\bibitem[{Angelini et~al.(2016)Angelini, Evans, Frati, and
  Gudmundsson}]{angelini.evans.ea:sefe}
Patrizio Angelini, William~S. Evans, Fabrizio Frati, and Joachim Gudmundsson.
\newblock {SEFE} without mapping via large induced outerplane graphs in plane
  graphs.
\newblock \emph{J. Graph Theory}, 82(1):45--64, 2016.
\newblock \doi{10.1002/JGT.21884}.

\bibitem[{Angelini et~al.(2012{\natexlab{b}})Angelini, Geyer, Kaufmann, and
  Neuwirth}]{DBLP:journals/jgaa/AngeliniGKN12}
Patrizio Angelini, Markus Geyer, Michael Kaufmann, and Daniel Neuwirth.
\newblock On a tree and a path with no geometric simultaneous embedding.
\newblock \emph{J. Graph Algorithms Appl.}, 16(1):37--83, 2012{\natexlab{b}}.
\newblock \doi{10.7155/JGAA.00250}.

\bibitem[{Bannister et~al.(2014)Bannister, Cheng, Devanny, and
  Eppstein}]{bannister.cheng.ea:superpatterns}
Michael~J. Bannister, Zhanpeng Cheng, William~E. Devanny, and David Eppstein.
\newblock Superpatterns and universal point sets.
\newblock \emph{J. Graph Algorithms Appl.}, 18(2):177--209, 2014.
\newblock \doi{10.7155/JGAA.00318}.

\bibitem[{Bannister et~al.(2019)Bannister, Devanny, Dujmovi{\'c}, Eppstein, and
  Wood}]{DBLP:journals/algorithmica/BannisterDDEW19}
Michael~J. Bannister, William~E. Devanny, Vida Dujmovi{\'c}, David Eppstein,
  and David~R. Wood.
\newblock Track layouts, layered path decompositions, and leveled planarity.
\newblock \emph{Algorithmica}, 81(4):1561--1583, 2019.
\newblock \doi{10.1007/S00453-018-0487-5}.

\bibitem[{Barba et~al.(2017)Barba, Evans, Hoffmann, Kusters, Saumell, and
  Speckmann}]{DBLP:journals/jgaa/BarbaEHKSS17}
Luís Barba, William Evans, Michael Hoffmann, Vincent Kusters, Maria Saumell,
  and Bettina Speckmann.
\newblock Column planarity and partially-simultaneous geometric embedding.
\newblock \emph{J. Graph Algorithms Appl.}, 21(6):983--1002, 2017.
\newblock \doi{10.7155/JGAA.00446}.

\bibitem[{Barba et~al.(2015)Barba, Hoffmann, and
  Kusters}]{barba.hoffmann.ea:column}
Luís Barba, Michael Hoffmann, and Vincent Kusters.
\newblock Column planarity and partial simultaneous geometric embedding for
  outerplanar graphs.
\newblock In \emph{Abstracts of the 31st European Workshop on Computational
  Geometry (EuroCG)}, pages 53--56. 2015.

\bibitem[{Barnette(1966)}]{barnette:trees}
David Barnette.
\newblock Trees in polyhedral graphs.
\newblock \emph{Canadian Journal of Mathematics}, 18:731--736, 1966.

\bibitem[{Bekos et~al.(2024)Bekos, Da~Lozzo, Frati, Gupta, Kindermann, Liotta,
  Rutter, and Tollis}]{bekos_et_al:LIPIcs.GD.2024.19}
Michael~A. Bekos, Giordano Da~Lozzo, Fabrizio Frati, Siddharth Gupta, Philipp
  Kindermann, Giuseppe Liotta, Ignaz Rutter, and Ioannis~G. Tollis.
\newblock {Weakly Leveled Planarity with Bounded Span}.
\newblock In Stefan Felsner and Karsten Klein, editors, \emph{32nd
  International Symposium on Graph Drawing and Network Visualization (GD
  2024)}, volume 320 of \emph{Leibniz International Proceedings in Informatics
  (LIPIcs)}, pages 19:1--19:19. Schloss Dagstuhl -- Leibniz-Zentrum f{\"u}r
  Informatik, Dagstuhl, Germany, 2024.
\newblock \doi{10.4230/LIPIcs.GD.2024.19}.

\bibitem[{Biedl(2011)}]{DBLP:journals/dcg/Biedl11}
Therese Biedl.
\newblock Small drawings of outerplanar graphs, series-parallel graphs, and
  other planar graphs.
\newblock \emph{Discret. Comput. Geom.}, 45(1):141--160, 2011.
\newblock \doi{10.1007/S00454-010-9310-Z}.

\bibitem[{Bilinski et~al.(2011)Bilinski, Jackson, Ma, and
  Yu}]{bilinksi.jackson.ea:circumference}
Mark Bilinski, Bill Jackson, Jie Ma, and Xingxing Yu.
\newblock Circumference of 3-connected claw-free graphs and large eulerian
  subgraphs of 3-edge-connected graphs.
\newblock \emph{J. Comb. Theory {B}}, 101(4):214--236, 2011.
\newblock \doi{10.1016/J.JCTB.2011.02.009}.

\bibitem[{Bl{\"a}sius et~al.(2013)Bl{\"a}sius, Kobourov, and Rutter}]{BKR-HGD}
Thomas Bl{\"a}sius, Stephen~G. Kobourov, and Ignaz Rutter.
\newblock Simultaneous embedding of planar graphs.
\newblock In Roberto Tamassia, editor, \emph{Handbook of Graph Drawing and
  Visualization}, Discrete Mathematics and Its Applications, pages 349--381.
  CRC Press, 2013.

\bibitem[{Bondy and Simonovits(1980)}]{bondy.simonovits:longest}
J.~Adrian Bondy and Mikl{\'o}s Simonovits.
\newblock Longest cycles in 3-connected 3-regular graphs.
\newblock \emph{Canadian Journal of Mathematics}, 32:987--992, 1980.

\bibitem[{Bose(1997)}]{DBLP:conf/gd/Bose97}
Prosenjit Bose.
\newblock On embedding an outer-planar graph in a point set.
\newblock In \emph{Graph Drawing, 5th International Symposium, {GD} '97},
  volume 1353 of \emph{LNCS}, pages 25--36. Springer, 1997.
\newblock \doi{10.1007/3-540-63938-1\_47}.

\bibitem[{Bose et~al.(2023)Bose, Dujmovi{\'c}, Houdrouge, Morin, and
  Odak}]{bose.dujmovic.ea:connected}
Prosenjit Bose, Vida Dujmovi{\'c}, Hussein Houdrouge, Pat Morin, and Saeed
  Odak.
\newblock Connected dominating sets in triangulations.
\newblock \emph{CoRR}, abs/2312.03399, 2023.
\newblock \eprint{2312.03399}.

\bibitem[{Bose et~al.(2009)Bose, Dujmović, Hurtado, Langerman, Morin, and
  Wood}]{bose.dujmovic.ea:untangling}
Prosenjit Bose, Vida Dujmović, Ferran Hurtado, Stefan Langerman, Pat Morin,
  and David~R. Wood.
\newblock A polynomial bound for untangling geometric planar graphs.
\newblock \emph{Discret. Comput. Geom.}, 42(4):570--585, 2009.
\newblock \doi{10.1007/S00454-008-9125-3}.

\bibitem[{Brandenburg et~al.(2003)Brandenburg, Eppstein, Goodrich, Kobourov,
  Liotta, and Mutzel}]{DBLP:conf/gd/BrandenburgEGKLM03}
Franz{-}Josef Brandenburg, David Eppstein, Michael~T. Goodrich, Stephen~G.
  Kobourov, Giuseppe Liotta, and Petra Mutzel.
\newblock Selected open problems in graph drawing.
\newblock In Giuseppe Liotta, editor, \emph{Graph Drawing, 11th International
  Symposium, {GD} 2003, Perugia, Italy, September 21-24, 2003, Revised Papers},
  volume 2912 of \emph{Lecture Notes in Computer Science}, pages 515--539.
  Springer, 2003.
\newblock \doi{10.1007/978-3-540-24595-7\_55}.

\bibitem[{Bra{\ss} et~al.(2007)Bra{\ss}, Cenek, Duncan, Efrat, Erten,
  Ismailescu, Kobourov, Lubiw, and
  Mitchell}]{DBLP:journals/comgeo/BrassCDEEIKLM07}
Peter Bra{\ss}, Eowyn Cenek, Christian~A. Duncan, Alon Efrat, Cesim Erten, Dan
  Ismailescu, Stephen~G. Kobourov, Anna Lubiw, and Joseph S.~B. Mitchell.
\newblock On simultaneous planar graph embeddings.
\newblock \emph{Comput. Geom.}, 36(2):117--130, 2007.
\newblock \doi{10.1016/J.COMGEO.2006.05.006}.
\newblock Also in, 8th Int. Workshop on Algorithms and Data Structures
  {(WADS)}, pages 243--255. (2003).

\bibitem[{Cano et~al.(2014)Cano, T{\'{o}}th, and Urrutia}]{cano.toth.ea:upper}
Javier Cano, Csaba~D. T{\'{o}}th, and Jorge Urrutia.
\newblock Upper bound constructions for untangling planar geometric graphs.
\newblock \emph{{SIAM} J. Discret. Math.}, 28(4):1935--1943, 2014.
\newblock \doi{10.1137/130924172}.

\bibitem[{Cardinal et~al.(2015)Cardinal, Hoffmann, and
  Kusters}]{DBLP:journals/jgaa/CardinalHK15}
Jean Cardinal, Michael Hoffmann, and Vincent Kusters.
\newblock On universal point sets for planar graphs.
\newblock \emph{J. Graph Algorithms Appl.}, 19(1):529--547, 2015.
\newblock \doi{10.7155/JGAA.00374}.

\bibitem[{Casta{\~{n}}eda and Urrutia(1996)}]{DBLP:conf/cccg/CastanedaU96}
Netzahualcoyotl Casta{\~{n}}eda and Jorge Urrutia.
\newblock Straight line embeddings of planar graphs on point sets.
\newblock In \emph{Proceedings of the 8th Canadian Conference on Computational
  Geometry, {CCCG} 1996}, pages 312--318. 1996.

\bibitem[{Cibulka(2010)}]{cibulka:untangling}
Josef Cibulka.
\newblock Untangling polygons and graphs.
\newblock \emph{Discret. Comput. Geom.}, 43(2):402--411, 2010.
\newblock \doi{10.1007/S00454-009-9150-X}.

\bibitem[{{Da Lozzo} et~al.(2018){Da Lozzo}, Dujmović, Frati, Mchedlidze, and
  Roselli}]{dalozzo.dujmovic.ea:drawing}
Giordano {Da Lozzo}, Vida Dujmović, Fabrizio Frati, Tamara Mchedlidze, and
  Vincenzo Roselli.
\newblock Drawing planar graphs with many collinear vertices.
\newblock \emph{J. Comput. Geom.}, 9(1):94--130, 2018.
\newblock \doi{10.20382/JOCG.V9I1A4}.

\bibitem[{Diestel(2018)}]{Diestel5}
Reinhard Diestel.
\newblock \emph{Graph theory}, volume 173 of \emph{Graduate Texts in
  Mathematics}.
\newblock Springer, 5th edition, 2018.

\bibitem[{Dujmovi{\'c}(2017)}]{dujmovic:utility}
Vida Dujmovi{\'c}.
\newblock The utility of untangling.
\newblock \emph{J. Graph Algorithms Appl.}, 21(1):121--134, 2017.
\newblock \doi{10.7155/JGAA.00407}.
\newblock Also in, Graph Drawing and Network Visualization - 23rd International
  Symposium, {GD} 2015.

\bibitem[{Dujmovi{\'c} and Morin(2023)}]{dujmovic.morin:dual}
Vida Dujmovi{\'c} and Pat Morin.
\newblock Dual circumference and collinear sets.
\newblock \emph{Discret. Comput. Geom.}, 69(1):26--50, 2023.
\newblock \doi{10.1007/S00454-022-00418-4}.

\bibitem[{Dujmović et~al.(2021)Dujmović, Frati, Gon{\c{c}}alves, Morin, and
  Rote}]{dujmovic.frati.ea:every}
Vida Dujmović, Fabrizio Frati, Daniel Gon{\c{c}}alves, Pat Morin, and
  G{\"{u}}nter Rote.
\newblock Every collinear set in a planar graph is free.
\newblock \emph{Discret. Comput. Geom.}, 65(4):999--1027, 2021.
\newblock \doi{10.1007/S00454-019-00167-X}.

\bibitem[{Erdős(1946)}]{erdos:on}
Paul Erdős.
\newblock On sets of distances of n points.
\newblock \emph{The American Mathematical Monthly}, 53:248–250, 1946.

\bibitem[{Estrella{-}Balderrama et~al.(2009)Estrella{-}Balderrama, Fowler, and
  Kobourov}]{DBLP:journals/comgeo/Estrella-BalderramaFK09}
Alejandro Estrella{-}Balderrama, J.~Joseph Fowler, and Stephen~G. Kobourov.
\newblock Characterization of unlabeled level planar trees.
\newblock \emph{Comput. Geom.}, 42(6-7):704--721, 2009.
\newblock \doi{10.1016/J.COMGEO.2008.12.006}.

\bibitem[{Evans et~al.(2014)Evans, Kusters, Saumell, and
  Speckmann}]{DBLP:conf/gd/EvansKSS14}
William Evans, Vincent Kusters, Maria Saumell, and Bettina Speckmann.
\newblock Column planarity and partial simultaneous geometric embedding.
\newblock In \emph{Graph Drawing - 22nd International Symposium, {GD} 2014},
  volume 8871 of \emph{LNCS}, pages 259--271. Springer, 2014.
\newblock \doi{10.1007/978-3-662-45803-7\_22}.

\bibitem[{{Felsner} et~al.(2003){Felsner}, {Liotta}, and {Wismath}}]{JGAA-75}
{Stefan} {Felsner}, {Giuseppe} {Liotta}, and {Stephen} {Wismath}.
\newblock Straight-line drawings on restricted integer grids in two and three
  dimensions.
\newblock \emph{Journal of Graph Algorithms and Applications}, 7(4):363--398,
  2003.
\newblock \doi{10.7155/jgaa.00075}.

\bibitem[{de~Fraysseix et~al.(1988)de~Fraysseix, Pach, and
  Pollack}]{defraysseix.pach.ea:small}
Hubert de~Fraysseix, J{\'{a}}nos Pach, and Richard Pollack.
\newblock Small sets supporting {F}{\'{a}}ry embeddings of planar graphs.
\newblock In Janos Simon, editor, \emph{Proceedings of the 20th Annual {ACM}
  Symposium on Theory of Computing, May 2-4, 1988, Chicago, Illinois, {USA}},
  pages 426--433. {ACM}, 1988.
\newblock \doi{10.1145/62212.62254}.

\bibitem[{de~Fraysseix et~al.(1990)de~Fraysseix, Pach, and
  Pollack}]{defraysseix.pach.ea:how}
Hubert de~Fraysseix, J{\'{a}}nos Pach, and Richard Pollack.
\newblock How to draw a planar graph on a grid.
\newblock \emph{Comb.}, 10(1):41--51, 1990.
\newblock \doi{10.1007/BF02122694}.

\bibitem[{Fáry(1948)}]{fary:on}
István Fáry.
\newblock On straight-line representation of planar graphs.
\newblock \emph{Acta Sci. Math. (Szeged)}, 11:229--233, 1948.

\bibitem[{{D}i Giacomo et~al.(2023){D}i Giacomo, Didimo, Liotta, Meijer,
  Montecchiani, and Wismath}]{DBLP:journals/corr/abs-2311-14634}
Emilio {D}i Giacomo, Walter Didimo, Giuseppe Liotta, Henk Meijer, Fabrizio
  Montecchiani, and Stephen~K. Wismath.
\newblock New bounds on the local and global edge-length ratio of planar
  graphs.
\newblock \emph{CoRR}, abs/2311.14634, 2023.
\newblock \doi{10.48550/ARXIV.2311.14634}.
\newblock Also in, Proceedings of the 40th European Workshop on Computational
  Geometry, EuroCG 2024.

\bibitem[{{D}i Giacomo et~al.(2012){D}i Giacomo, Liotta, and
  Mchedlidze}]{DBLP:journals/corr/abs-1212-0804}
Emilio {D}i Giacomo, Giuseppe Liotta, and Tamara Mchedlidze.
\newblock How many vertex locations can be arbitrarily chosen when drawing
  planar graphs?
\newblock \emph{CoRR}, abs/1212.0804, 2012.
\newblock \eprint{1212.0804}.

\bibitem[{Giacomo et~al.(2009)Giacomo, Didimo, van Kreveld, Liotta, and
  Speckmann}]{DBLP:journals/jgaa/GiacomoDKLS09}
Emilio~Di Giacomo, Walter Didimo, Marc~J. van Kreveld, Giuseppe Liotta, and
  Bettina Speckmann.
\newblock Matched drawings of planar graphs.
\newblock \emph{J. Graph Algorithms Appl.}, 13(3):423--445, 2009.
\newblock \doi{10.7155/JGAA.00193}.

\bibitem[{Goaoc et~al.(2009)Goaoc, Kratochv{\'{\i}}l, Okamoto, Shin, Spillner,
  and Wolff}]{goaoc.kratochvil.ea:untangling}
Xavier Goaoc, Jan Kratochv{\'{\i}}l, Yoshio Okamoto, Chan{-}Su Shin, Andreas
  Spillner, and Alexander Wolff.
\newblock Untangling a planar graph.
\newblock \emph{Discret. Comput. Geom.}, 42(4):542--569, 2009.
\newblock \doi{10.1007/S00454-008-9130-6}.

\bibitem[{Goldner and Harary(1975)}]{goldner1975note}
Anita~M. Goldner and Frank Harary.
\newblock A note on the smallest nonhamiltonian maximal planar graph.
\newblock \emph{Bull. Malaysian Math. Soc.}, 6(1):41--42, 1975.

\bibitem[{Gritzmann et~al.(1991)Gritzmann, Mohar, P{\'a}ch, and Pollack}]{GMPP}
Peter Gritzmann, Bojan Mohar, Janos P{\'a}ch, and Richard Pollack.
\newblock Embedding a planar triangulation with vertices at specified points
  (solution to problem e3341).
\newblock \emph{Amer. Math. Monthly}, 98:165--166, 1991.

\bibitem[{Grünbaum and Walther(1973)}]{GRUNBAUM1973364}
Branko Grünbaum and Hansjoachim Walther.
\newblock Shortness exponents of families of graphs.
\newblock \emph{Journal of Combinatorial Theory, Series A}, 14(3):364--385,
  1973.
\newblock \doi{https://doi.org/10.1016/0097-3165(73)90012-5}.

\bibitem[{Jackson(1986)}]{jackson:longest}
Bill Jackson.
\newblock Longest cycles in 3-connected cubic graphs.
\newblock \emph{J. Comb. Theory {B}}, 41(1):17--26, 1986.
\newblock \doi{10.1016/0095-8956(86)90024-9}.

\bibitem[{Kang et~al.(2011)Kang, Pikhurko, Ravsky, Schacht, and
  Verbitsky}]{kang.pikhurko.ea:untangling}
Mihyun Kang, Oleg Pikhurko, Alexander Ravsky, Mathias Schacht, and Oleg
  Verbitsky.
\newblock Untangling planar graphs from a specified vertex position - hard
  cases.
\newblock \emph{Discret. Appl. Math.}, 159(8):789--799, 2011.
\newblock \doi{10.1016/J.DAM.2011.01.011}.

\bibitem[{Kant and Bodlaender(1997)}]{kant.bodlaender:triangulating}
Goos Kant and Hans~L. Bodlaender.
\newblock Triangulating planar graphs while minimizing the maximum degree.
\newblock \emph{Inf. Comput.}, 135(1):1--14, 1997.
\newblock \doi{10.1006/inco.1997.2635}.

\bibitem[{Liu et~al.(2018)Liu, Yu, and Zhang}]{liu.yu.zhang:circumference}
Qinghai Liu, Xingxing Yu, and Zhao Zhang.
\newblock Circumference of 3-connected cubic graphs.
\newblock \emph{J. Comb. Theory, Ser. {B}}, 128:134--159, 2018.
\newblock \doi{10.1016/j.jctb.2017.08.008}.

\bibitem[{Owens(1984)}]{owens:regular}
Peter~J. Owens.
\newblock Regular planar graphs with faces of only two types and shortness
  parameters.
\newblock \emph{J. Graph Theory}, 8(2):253--275, 1984.
\newblock \doi{10.1002/JGT.3190080207}.

\bibitem[{Pach and Tardos(2002)}]{pach.tardos:untangling}
J{\'{a}}nos Pach and G{\'{a}}bor Tardos.
\newblock Untangling a polygon.
\newblock \emph{Discret. Comput. Geom.}, 28(4):585--592, 2002.
\newblock \doi{10.1007/S00454-002-2889-Y}.
\newblock Also in, Graph Drawing, 9th International Symposium, {GD} 2001.

\bibitem[{Pach and T{\'{o}}th(2004)}]{pach.toth:monotone}
J{\'{a}}nos Pach and G{\'{e}}za T{\'{o}}th.
\newblock Monotone drawings of planar graphs.
\newblock \emph{J. Graph Theory}, 46(1):39--47, 2004.
\newblock \doi{10.1002/JGT.10168}.

\bibitem[{Ravsky and Verbitsky(2011)}]{DBLP:conf/wg/RavskyV11}
Alexander Ravsky and Oleg Verbitsky.
\newblock On collinear sets in straight-line drawings.
\newblock In Petr Kolman and Jan Kratochv{\'{\i}}l, editors,
  \emph{Graph-Theoretic Concepts in Computer Science - 37th International
  Workshop, {WG} 2011, Tepl{\'{a}} Monastery, Czech Republic, June 21-24, 2011.
  Revised Papers}, volume 6986 of \emph{Lecture Notes in Computer Science},
  pages 295--306. Springer, 2011.
\newblock \doi{10.1007/978-3-642-25870-1\_27}.

\bibitem[{Robertson et~al.(1997)Robertson, Sanders, Seymour, and
  Thomas}]{robertson.seymour.ea:four-colour}
Neil Robertson, Daniel~P. Sanders, Paul~D. Seymour, and Robin Thomas.
\newblock The four-colour theorem.
\newblock \emph{J. Comb. Theory, Ser. {B}}, 70(1):2--44, 1997.
\newblock \doi{10.1006/jctb.1997.1750}.

\bibitem[{Scheucher et~al.(2020)Scheucher, Schrezenmaier, and
  Steiner}]{scheucher.schrezenmaier.ea:note}
Manfred Scheucher, Hendrik Schrezenmaier, and Raphael Steiner.
\newblock A note on universal point sets for planar graphs.
\newblock \emph{J. Graph Algorithms Appl.}, 24(3):247--267, 2020.
\newblock \doi{10.7155/JGAA.00529}.

\bibitem[{Schnyder(1990)}]{schnyder:embedding}
Walter Schnyder.
\newblock Embedding planar graphs on the grid.
\newblock In David~S. Johnson, editor, \emph{Proceedings of the First Annual
  {ACM-SIAM} Symposium on Discrete Algorithms, 22-24 January 1990, San
  Francisco, California, {USA}}, pages 138--148. {SIAM}, 1990.

\bibitem[{Stein(1951)}]{stein:convex}
Sherman~K. Stein.
\newblock Convex maps.
\newblock \emph{Proceedings of the American Mathematical Society},
  2(3):464--466, 1951.
\newblock \doi{10.2307/2031777}.

\bibitem[{Steiner(2023)}]{DBLP:conf/gd/Steiner23}
Raphael Steiner.
\newblock A logarithmic bound for simultaneous embeddings of planar graphs.
\newblock In \emph{Graph Drawing and Network Visualization - 31st International
  Symposium, {GD} 2023}, volume 14466 of \emph{LNCS}, pages 133--140. Springer,
  2023.
\newblock \doi{10.1007/978-3-031-49275-4\_9}.

\bibitem[{Tait(1880)}]{tait:remarks}
Peter~Guthrie Tait.
\newblock Remarks on the colouring of maps.
\newblock \emph{Proc. Roy. Soc. Edinburgh Sect. A}, 10:729, 1880.

\bibitem[{Tutte(1946)}]{tutte:on}
William~T. Tutte.
\newblock On {H}amilton circuits.
\newblock \emph{J. Lond. Math. Soc.}, 21:98--101, 1946.

\bibitem[{Tutte(1963)}]{tutte:how}
William~T. Tutte.
\newblock How to draw a graph.
\newblock \emph{Proceedings of The London Mathematical Society}, 13:743--767,
  1963.

\bibitem[{Wagner(1936)}]{wagner:bemerkungen}
Klaus Wagner.
\newblock Bemerkungen zum {V}ierfarbenproblem.
\newblock \emph{Jahresbericht der Deutschen Mathematiker-Vereinigung},
  46:26--32, 1936.

\bibitem[{Woodall(1971)}]{woodall:thrackles}
Douglas~R Woodall.
\newblock Thrackles and deadlock.
\newblock \emph{Combinatorial Mathematics and Its Applications}, 348:335–348,
  1971.

\end{thebibliography}

\end{document}